\makeatletter \@addtoreset{footnote}{chapter} \makeatother
\renewcommand{\p@enumii}{\theenumi.}
\newtheorem{thm}{Theorem}
\newtheorem{lma}{Lemma}
\newcommand{\gammath}[1]{\gamma_{\rm th}\left( #1 \right)}
\newcommand{\gammathst}[1]{\gamma_{\rm th}^*\left( #1 \right)}
\newcommand{\bfGamma}{{\bf \Gamma}_{\rm th}}
\newcommand{\Gammast}[1]{\bfGamma^{\rm *} \lb #1 \rb}
\newcommand{\bfP}[1]{{\bf P}_{#1}}
\newcommand{\Pst}[1]{{\bf P}_{#1}^*}
\newcommand{\dgamdlamp}[1]{\frac{\partial \gammathst{#1}}{\partial \lambdapst}}
\newcommand{\dUdlamp}[1]{\frac{\partial U_{#1}^*}{\partial \lambdapst}}
\newcommand{\dSdlamp}[1]{\frac{\partial S_{#1}^*}{\partial \lambdapst}}
\newcommand{\dpdlamp}[1]{\frac{\partial p_{#1}^*}{\partial \lambdapst}}
\newcommand{\dUSpdlamp}[1]{\dUdlamp{#1} - \lambdap \dSdlamp{#1} + \lambdad \dpdlamp{#1}}
\newcommand{\Pist}[1]{P_{#1}^* \lb \gammathst{#1} \rb}
\newcommand{\pdf}{f_{\gamma}}
\newcommand{\ccdf}{\bar{F}_{\gamma}}
\newcommand{\lb}{\left (}
\newcommand{\rb}{\right )}
\newcommand{\script}[1]{{\mathcal {#1}}}
\newcommand{\fzb}{f_{z \vert b} \lb z \vert b_i=1 \rb}
\newcommand{\fzf}{f_{z \vert b} \lb z \vert b_i=0 \rb}
\newcommand{\Ibessel}[1]{I_{#1}^{\rm Bes}}
\newcommand{\Pavg}{P_{\rm avg}}
\newcommand{\Iavg}{I_{\rm avg}}
\newcommand{\rmin}{r_{\rm min}}
\newcommand{\pframe}{P_{\rm frame}}
\newcommand{\Dmax}{\bar{D}_{\rm max}}
\newcommand{\invDmaxline}{1/\Dmax}
\newcommand{\invDmax}{\frac{1}{\Dmax}}
\newcommand{\zb}{z_{\rm b}}
\newcommand{\zf}{z_{\rm f}}
\newcommand{\pmd}{P_{\rm MD}}
\newcommand{\pfa}{P_{\rm FA}}
\newcommand{\pskip}[1]{p_{#1}^{\rm skip}}
\newcommand{\Utwomax}{U_2^{\rm max}}
\newcommand{\ptwomax}{p_2^{\rm max}}
\newcommand{\Ts}{T}
\newcommand{\trho}{\tilde{\rho}}
\newcommand{\tS}{{\bf \tilde{S}}}
\newcommand{\sS}{\script{S}}
\newcommand{\tinset}{t \in \{1,...t_f\}}
\newcommand{\lambdad}{\lambda_{\rm D}}
\newcommand{\lambdadst}{\lambda_{\rm D}^*}
\newcommand{\lambdap}{\lambda_{\rm P}}
\newcommand{\lambdapst}{\lambda_{\rm P}^*}
\newcommand{\lambdapmin}{\lambdap^{\rm min}}
\newcommand{\lambdapmax}{\lambdap^{\rm max}}
\newcommand{\lambdadmax}{\lambdad^{\rm max}}
\newcommand{\Pmax}{P_{\rm max}}
\newcommand{\lambdai}{\lambda_{\rm I}}
\newcommand{\lambdaist}{\lambda_{\rm I}^*}
\newcommand{\PS}[1]{P_{#1}\lb \gamma \rb}
\newcommand{\PSst}[1]{P^*_{#1}\lb \gamma \rb}
\newcommand{\bfPS}[1]{{\bf P}_{#1}}
\newcommand{\gammaz}[1]{\gamma_{\rm th} \left(#1,z \right)}
\newcommand{\gammazst}[1]{\gamma_{\rm th}^* \left(#1,z \right)}
\newcommand{\GammaS}[1]{\bfGamma \lb #1,z\rb}
\newcommand{\Gammasst}[1]{\bfGamma^{\rm *} \lb #1,z\rb}
\newcommand{\Usoft}{U}
\newcommand{\Ssoft}{S}
\newcommand{\psoft}{p}
\newcommand{\Isoft}{I}
\newcommand{\Usoftst}{U^*}
\newcommand{\Ssoftst}{S^*}
\newcommand{\psoftst}{p^*}
\newcommand{\Isoftst}{I^*}
\newtheorem{Def}{Definition}
\DeclareMathOperator{\E}{\mathbb{E}}
\newcommand{\bP}{\bar{P}}
\newcommand{\Pmin}{P_{\rm min}}
\newcommand{\bfpi}{{\pmb \pi}}
\newcommand{\Iinst}{I_{\rm inst}}
\newcommand{\fgammai}{f_{\gamma_i}}
\newcommand{\fgi}{f_{g_i}}
\newcommand{\EE}[1]{\E \left[ #1 \right]}
\newcommand{\bgi}{\bar{g}_i}
\newcommand{\bW}{\bar{W}}
\newcommand{\bfPst}{{\bf P}^*}
\newcommand{\bfpist}{{\bm{\pi}}^*}
\newcommand{\bgamma}{\bar{\gamma}}
\newcommand{\Xvq}{\{X(k)\}_{k=0}^\infty}
\newcommand{\Yivq}{\{Y_i(k)\}_{k=0}^\infty}
\newcommand{\bfY}{{\bf Y}}
\newcommand{\pardef}[1]{\triangleq [#1_1^{(t)},\cdots,#1_N^{(t)}]^T}
\newcommand{\parFdef}[1]{\triangleq [#1_1(k),\cdots,#1_N(k)]^T}
\newcommand{\DOIC}{\emph{DOIC}}
\newcommand{\DOAC}{\emph{DOAC}}
\newcommand{\DOACopt}{\emph{DOAC-Pow-Alloc}}
\newcommand{\brho}{\bar{\rho}}
\newcommand{\Rit}{R_i^{(t)}}
\newcommand{\Rt}{R^{(t)}}
\newcommand{\FDurK}{\vert \script{F}_k\vert}
\newcommand{\Wup}{W^{\rm up}_{\pi_j}}
\newcommand{\gammaerr}{\gamma_i^{\rm err}(t)}
\newcommand{\gerr}{g_i^{\rm err}(t)}
\newcommand{\dOne}{29}
\newcommand{\dTwo}{40}
\newcommand{\Obj}{O_{\rm obj}}
\newcommand{\Pminn}{P^{\rm min}}
\newcommand{\gmax}{g_{\rm max}}
\newcommand{\EEU}[1]{\E_{\bfU(k)} \left[ #1 \right]}
\newcommand{\EEY}[1]{\E_{\bfY(k)} \left[ #1 \right]}
\newcommand{\Prob}[1]{{\bf Pr}\left[ #1 \right]}
\newcommand{\Pit}{P_i^{(t)}}
\newcommand{\git}{g_i^{(t)}}
\newcommand{\bfU}{{\bf U}}
\newcommand{\bfr}{{\bf r}}
\newcommand{\psiI}{\psi_{\pi_j}^{\rm I}}
\newcommand{\psiD}{\psi_{\pi_j}^{\rm D}}
\newcommand{\psiInt}{\psi^{\rm I}}
\newcommand{\psiDel}{\psi^{\rm D}}
\newcommand{\gammait}{\gamma_i^{(t)}}
\newcommand{\tP}{{\bf \tilde{P}}}
\newcommand{\tPsi}{\tilde{\Psi}}
\newcommand{\brhomax}{\brho^{\rm max}}
\newcommand{\sB}{s_{\rm B}}
\newcommand{\sNB}{s_{\rm NB}}
\newcommand{\gammamini}{\gamma_i^{\rm min}}
\newcommand{\RBt}{R_{\rm B}^{(t)}}
\newcommand{\HOL}{L^{\rm rem}}
\newcommand{\widthn}{0.8}
\newcommand{\Rmax}{R_{\rm max}}
\newcommand{\gammamax}{\gamma_{\rm max}}
\newcommand{\DF}{D_{\rm F}}
\begin{document}


\author{Ahmed Emad Ewaisha}
\title{Scheduling and Power Allocation to Optimize Service and Queue-Waiting Times in Cognitive Radio Uplinks}
\graduationyear{January $13^{\rm th}$, 2016}


\maketitle


\startsinglespace

\startdoublespace



\tableofcontents



\begin{abstract}
%
%
%

Cognitive Radio (CR) is an emerging wireless communication paradigm to improve the spectrum utilization. Cognitive Radio users, also known as secondary users (SUs), are allowed to transmit over the channels as long as they do not cause harmful interference to the primary users (PUs) who have licensed access to those channels. The quality of service (QoS) associated with this transmission scheme might deteriorate if those channels are not utilized efficiently by these SUs. In addition, SUs located physically close to PUs might cause more harmful interference than those who are far. This might degrade the QoS of those SUs since they will be allocated the channel less frequently to protect the PUs.

In this report, we study the packet delay as a QoS metric in CR systems. The packet delay is defined as the average time spent by a packet in the queue waiting for transmission as well as that spent during the transmission process. The former is referred to as the queue waiting time while the latter is the service time. In real-time applications, the average delay of packets needs to be below a prespecified threshold to guarantee an acceptable QoS. In this work, we study the effect of both the scheduling and the power allocation algorithms on the delay performance of the SUs. We study how these two parameters affect both the service time as well as the queue waiting time.

To study the delay due to the service time we study the effect of multiple channels on a single SU and, thus, ignore the scheduling problem. Specifically, in a multichannel system where the channels are sensed sequentially, we study the tradeoff between throughput and delay. The problem is formulated as an optimal stopping rule problem where it is required to decide at which channel the SU should stop sensing and begin transmission. We provide a closed-form solution for this optimal stopping problem and specify the optimal amount of power that this SU should be transmitting with over this channel. The algorithm trades off the service time versus the throughput to guarantee a maximum throughput performance subject to a bound on the average service time. This tradeoff results from skipping low-quality channels to seek the possibility of finding high-quality ones in the future at the expense of a higher probability of being blocked from transmission since these future channels might be busy.

On the other hand, the queue waiting time is studied by considering a multi-SU single channel system. Specifically, we study the effect of scheduling and power allocation on the delay performance of all SUs in the system. We propose a delay optimal algorithm to this problem that schedules the SUs to minimize the delay while protecting the PUs from harmful interference. One of the contributions of this algorithm is that it can provide differentiated service to the users even if their channels are statistically heterogeneous. In heterogeneous-channels system, users with statistically low channel quality are expected to have worse delay performances. However, the proposed algorithm guarantees a prespecified delay performance to each SU without violating the PU's interference constraint. Existing scheduling algorithms do not provide such guarantees if the interference channels are heterogeneous. This is because they are developed for conventional non-CR wireless systems that neglect interference since channels are orthogonal.

Finally, we present two potential extensions to these studied problems. In the first one, instead of imposing a constraint on the average delay as assumed in this report, we impose strict deadlines by which the packets need to be transmitted. Problems with strict deadlines have their applications in live streaming and online gaming where packets are expected to reach their destination before a prespecified deadline expires. If a packet misses its deadline it is dropped from the system and does not count towards the throughput. However, these applications can tolerate a small percentage out of the total packets missing their deadlines. We have solved the single-SU version of this problem in Chapter \ref{Ch2_Sequential_Sensing}. Extensions to the multi-user case is an interesting problem.

The second extension is to study the scheduling problem at hand aiming at finding scheduling algorithms that are throughput optimal and delay optimal at the same time. Except for special cases of our problem, the proposed scheduling and power allocation policy does not achieve the capacity region. However, our preliminary results show that their exist throughput-optimal scheduling algorithms that are well studied in the literature that can be developed to be delay optimal as well.
\end{abstract}

\chapter{Introduction}
\label{Ch1_Introduction}
Cognitive Radio (CR) systems are emerging wireless communication systems that allow efficient spectrum utilization \cite{Survey_CR_1st_2006_Akyildiz}. CRs refer to devices that coexist with the licensed spectrum owners called the primary users (PUs), and that are capable of detecting their presence. Once PU's activity is detected on some frequency channel, the CR user refrains from any further transmission on this channel. This may result in service disconnection for the CR user, thus degrading the quality of service (QoS). If the CR users have access to other channels, the QoS can be improved by switching to another frequency channel instead of completely stopping transmission. If not, then they should control their transmission power to avoid harmful interference to the PUs. Hence, CR users are required to adjust their transmission power levels, and -thus- their rates, according to the interference level the PUs can tolerate. This adjustment could lead to severe degradation in the QoS provided for the CR users, if not designed carefully.

\section{Cognitive Radio Transmission Schemes}
\begin{figure}%
\centering
\includegraphics[width=\widthn\columnwidth]{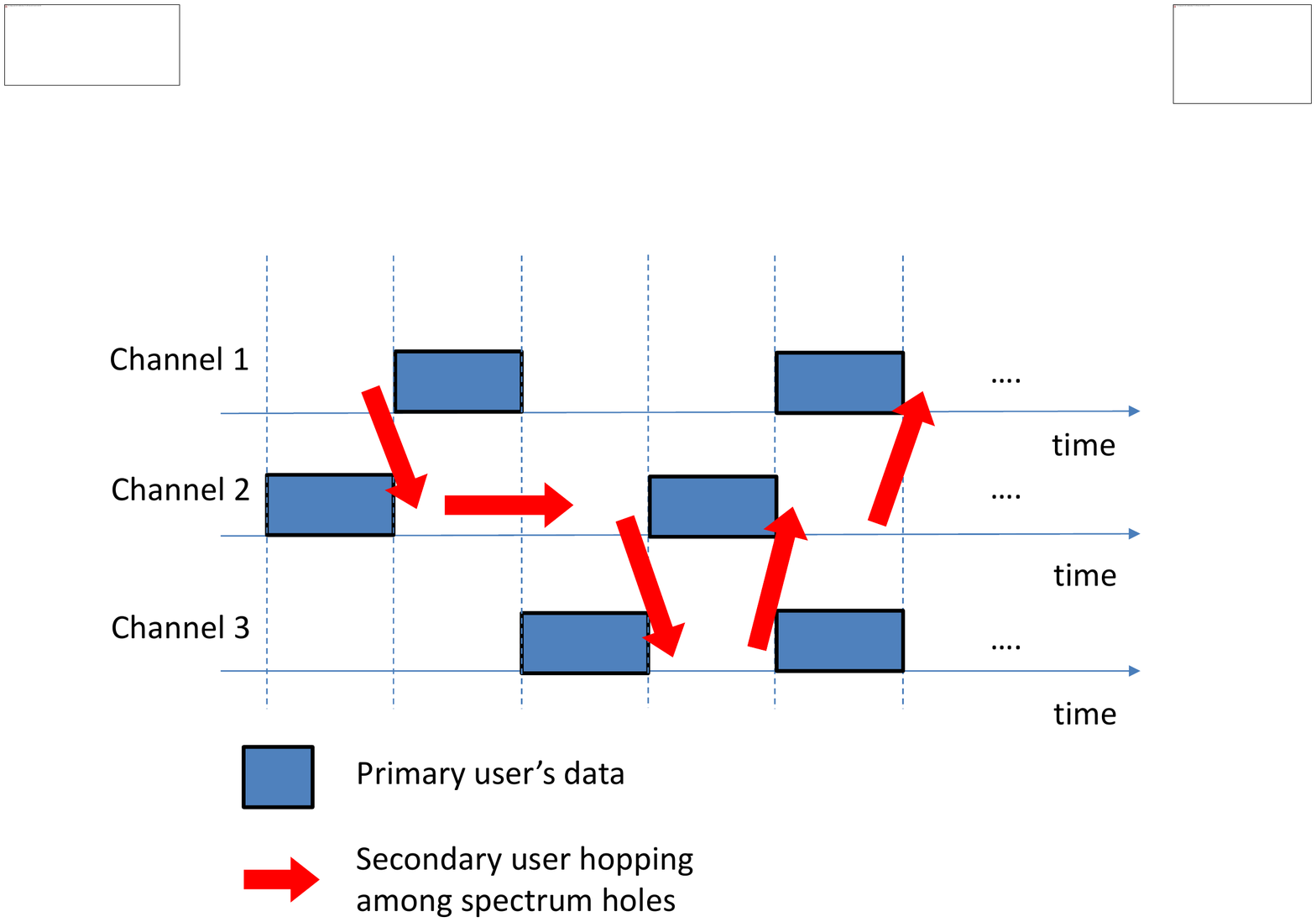}%
\caption{Spectrum holes are the locations of the unused spectrum in time and frequency.}
\label{Spectrum_Holes}%
\end{figure}
There are two main transmission schemes that CR systems may follow to coexist with the PUs; the overlay and the underlay. In the overlay, CR users, also referred to as the secondary users (SUs), transmit their signal only when the PUs are not using the channel. In other words, the SUs look for the spectrum holes to transmit their data as in Fig. \ref{Spectrum_Holes}. Hence, unlike conventional radios, SUs's radios are equipped with a spectrum sensor that is used to sense the spectrum before beginning the transmission phase. In this sensing phase, the SUs listen to all frequency channels to overhear the PUs' transmission so as to decide which channels are free from PUs and which are not. Upon this detection process, the SU picks up a channel, or more, out of the detected-free channels to transmit its data over for a limited amount of time. Once the channel is occupied again by the PU, the SU is expected to refrain from transmission over this channel but allowed to use a different channel after performing the sensing phase again. A practical spectrum sensor might yield wrong decisions, namely, it might detect the presence of a PU on some channel although this channel is actually free, or might miss-detect the PU when it is using the channel. These events are referred to as the false-alarm and miss-detection events, respectively. The higher the false-alarm probability the higher the SU misses transmission opportunities and, thus, the lower the SU's throughput is. Similarly, the higher the probability of miss-detection the more the SU's packet collides with the PU's and leading to a lower throughput since collided packets are lost. While the false-alarm probability affects the SU's throughput alone, the miss-detection probability affects both the SU and the PU. As the sensing phase duration increases, these two probabilities decrease simultaneously. However, increasing the sensing phase duration comes at the expense of the transmission phase duration thus decreasing the throughput. This tradeoff has been studied extensively in the literature \cite{Ewaisha2011Optimization}.

\begin{figure}%
\centering
\includegraphics[width=\widthn\columnwidth]{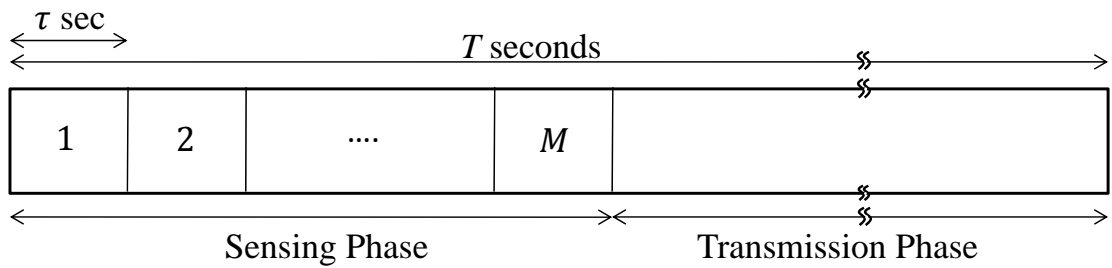}%
\caption{The sensing phase is used to sense $M$ channels to detect the presence of the PU. The SU starts transmitting its data in the transmission phase on one of the free channels.}
\label{Sensing_Phase}%
\end{figure}

In the underlay scheme, the SU is allowed to transmit over any frequency channel at any time as long as the PU can tolerate the interference caused by this transmission. This tolerable level is referred to as the interference temperature as dictated by the Federal Communications Commission (FCC) \cite{FCC_Report_SEWG}. In order to guarantee this protection for the PU, the SU has to adjust its transmission power according to the gain of the channel to the primary receiver referred to as the interference channel. The knowledge of this gain instantaneously is essential at the SU's transmitter. While this channel knowledge might be infeasible in CR systems that assume no cooperation between the PU and the SU, in some scenarios the SU might be able to overhear the pilots sent by the primary receiver when it is acting as a transmitter if the PU is using a time division duplex scheme.

In both cases, the overlay and the underlay, the SU might interfere with the PU. This in turn dictates that the SU should adopt its channel access scheme in such a way that this interference is tolerable so that the PU's quality of service (QoS) is not degraded. With that being said, we might expect that the SUs located physically closer to the PUs might suffer larger degradation in their QoS compared to those that are far because closer SUs transmit with smaller amounts of power. This problem does not appear in conventional non CR cellular systems since frequency channels tend to be orthogonal in non CR systems. In other words, in non CR systems, all users are allocated the channels via some scheduler that guarantees those users do not interfere with each other. While in CR systems, since SUs interfere with PUs, we need to develop scheduling and power control algorithms that prevent harmful interference to PUs, as well as guaranteeing acceptable QoS for the SUs.

\section{Guaranteeing Quality of Service in Cognitive Radio Systems}
Since CR users operate in an interference limited environment, they are expected to experience lower QoS than in conventional systems. However, the QoS provided needs to fall within the acceptable level that varies with the application. For example, the average delay of a packet in online streaming is required to not more than 300ms while that in online gaming should not exceed 50ms. However, these two applications might tolerate small losses in their transmitted packets which is not the case with some other applications as file sharing and email applications that, on the other hand, might tolerate packet delays.

The QoS can include, but is not limited to, throughput, delay, bit-error-rate, interference caused to the PU. Out of these metrics the most two major ones are the throughput and the delay that have gained strong attention in the literature recently \cite{asadi2013survey}. The throughput metric is defined as the average amount of packets (or bits) per channel-use that can be delivered in the SU's network without violating the PU's interference constraints. On the other hand, the delay refers to as the amount of time elapsed from the instant a packet joins the SU's buffer until it is successfully and fully transmitted to its intended receiver. A higher throughput is usually achieved by the efficient power allocation algorithms while better delay performances are usually achieved by efficient scheduling of users.

The problem of scheduling and/or power control has been widely studied in the literature (see \cite{Adaptive_Rate_Power_CR_Sonia,Letaief_PU_Known_Location,NEP_Distributed,Iter_Bit_Allocation_OFDM,6464638,shakkottai2002scheduling,kang2013performance}, and references therein). These works aim at optimizing the throughput, providing delay guarantees and/or guaranteeing protection from interference. In real-time applications, such as audio/video conference calls, one of the most important QoS metrics is the delay metric. The delay is defined as the average amount of time a packet spends in the system starting from the instant it arrives to the buffer until it is completely transmitted. In real-time applications, packets are expected to arrive at the destination before a prespecified deadline \cite{IEEE802p11}. Thus, the average packet delay needs to be as small as possible to prevent jitter and to guarantee acceptable QoS for these applications \cite{shakkottai2002scheduling,kang2013performance}.

There are two different factors that cause delay in data networks. The first is the service time which is the amount of time required to transmit this packet. The second is the queue-waiting time which is the time spent by a packet in the queue waiting for its transmission to begin. The sum of both yields the delay. Thus, in order to optimize over the delay we should study both factors.

\section{Service Time}
The service time is affected by the amount of resources allocated to the packet at the time of transmission. Resources might include power, channel bandwidth, coding rate and transmission time. Several works have been studied to address how to optimally allocate these resources over time and users. However, from a practical implementation point of view, the most challenging resource is channel bandwidth. This is because increasing the bandwidth requires allocating multiple channels to a user which might require the user to be equipped with high cost transmitters (receivers) capable of transmitting (receiving) over multiple channels simultaneously. On the other hand, allocating a single fixed channel to a user is not optimal.

The problem of channel allocation in multi-channel CR systems has gained attention in recent works due to the challenges associated with the sensing and access mechanisms in a multichannel CR system. Practical hardware constraints on the SUs' transceivers may prevent them from sensing multiple channels simultaneously to detect the state of these channels (free/busy). This leads the SU to sense the channels sequentially, then decide which channel should be used for transmission \cite{POMDP_Qing_Zhao,Sensing_Order_Poor}. In a time slotted system if sequential channel sensing is employed, the SU senses the channels one at a time and stops sensing when a channel is found free. But due to the independent fading among channels, the SU is allowed to skip a free channel if its quality, measured by its power gain, is low and sense another channel seeking the possibility of a higher future gain. Otherwise, if the gain is high, the SU stops at this free channel to begin transmission. The question of when to stop sensing can be formulated as an optimal stopping rule problem \cite{Sabharwal_NonCR_Multiband,Sensing_Order_Poor,Jia_Multichannel_Tx_Opt_Stop_Rule, Cheng_Simple_Chan_Sensing}. In \cite{Sabharwal_NonCR_Multiband} the authors present the optimal stopping rule for this problem in a non-CR system. The work in \cite{Sensing_Order_Poor} develops an algorithm to find the optimal order by which channels are to be sequentially sensed in a CR scenario, whereas \cite{Jia_Multichannel_Tx_Opt_Stop_Rule} studies the case where the SUs are allowed to transmit on multiple contiguous channels simultaneously. The authors presented the optimal stopping rule for this problem in a non-fading wireless channel. Transmissions on multiple channels simultaneously may be a strong assumption for low-cost transceivers especially when they cannot sense multiple channels simultaneously.

In general, if a perfect sensing mechanism is adopted, the SU will not cause interference to the PU since the former transmits only on spectrum holes (referred to as an overlay system). Nevertheless, if the sensing mechanism is imperfect, or if the SU's system is an underlay one (where the SU uses the channels as long as the interference to the PU is tolerable), the transmitted power needs to be controlled to prevent harmful interference to the PU. References \cite{Pei2013Sensing} and \cite{Adaptive_Rate_Power_CR_Sonia} consider power control and show that the optimal power control strategy is a water-filling approach under some interference constrain imposed on the SU transmitter. Yet, all of the above work studies single channel systems which cannot be extended to multiple channels in a straightforward manner. A multiuser CR system was considered in \cite{Hu_MultiCR_Contention} in a time slotted system. To allocate the frequency channel to one of the SUs, the authors proposed a contention mechanism that does not depend on the SUs' channel gains, thus neglecting the advantage of multiuser diversity. A major challenge in a multichannel system is the sequential nature of the sensing where the SU needs to take a decision to stop and begin transmission, or continue sensing based on the information it has so far. This decision needs to trade-off between waiting for a potentially higher throughput and taking advantage of the current free channel. Moreover, if transmission takes place on a given channel, the SU needs to decide the amount of power transmitted to maximize its throughput given some average interference and average power constraints.


In Chapter \ref{Ch2_Sequential_Sensing}, we model the overlay and underlay scenarios of a multi-channel CR system that are sensed sequentially. The problem is solved for a single SU first then we discuss extensions to a multi-SU scenario. For the single SU case, the problem is formulated as a joint optimal-stopping-rule and power-control problem with the goal of maximizing the SU's throughput subject to average power and average interference constraints. This formulation results in increasing the expected service time of the SU's packets. The expected service time is the average number of time slots that pass while the SU attempts to find a free channel, before successfully transmitting a packet. The increase in the service time is due to skipping free channels, due to their poor gain, hoping to find a future channel of sufficiently high gain. If no channels having a satisfactory gain were found, the SU will not be able to transmit its packet, and will have to wait for longer time to find a satisfactory channel. This increase in service time increases the queuing delay. Thus, we solve the problem subject to a bound on the expected service time which controls the delay. In the multiple SUs case, we show that the solution to the single SU problem can be applied directly to the multi-SU system with a minor modification. We also show that the complexity of the solution decreases when the system has a large number of SUs.

To the best of our knowledge, this is the first work to study the joint power-control and optimal-stopping-rule problem in a multi channel CR system. Our contribution in this work is the formulation of a joint power-control and optimal-stopping-rule problem that also incorporates a delay constraint and present a low complexity solution in the presence of interference/collision constraint from the SU to the PU due to the imperfect sensing mechanism. The preliminary results in \cite{Ewaisha_Throughput_Maximization} consider an overlay framework for single user case while neglecting sensing errors. But in this work, we also study the problem in the underlay scenario where interference is allowed from the secondary transmitter (ST) to the primary receiver (PR) and extend to multiple SU case. We also generalize the solution to the multi-SU case when the number of SUs is large. We discuss the applicability of our formulation in typical delay-constrained scenarios where packets arrive simultaneously and have a same deadline. We show that the proposed algorithm can be used to solve this problem offline, to maximize the throughput and meet the deadline constraint at the same time. Moreover, we propose an online algorithm that gives higher throughput compared to the offline approach while meeting the deadline constraint.

\section{Queue-Waiting Time}
Unlike the service time, the delay due to queue-waiting time is affected by the scheduling algorithm. The more frequently a user is allocated the channel for transmission, the less its queue-waiting time is, but the more the queue-waiting times for the other users are. Delay due to the queue-waiting time is also well studied recently in the literature and scheduling algorithms have been proposed to guarantee small delay for users in conventional systems \cite{li2011delay,Two_Q_Light_Hvy,neely2003power}. In \cite{li2011delay}, the authors study the joint scheduling-and-power-allocation problem in the presence of an average power constraint. Although in \cite{li2011delay} the proposed algorithm offers an acceptable delay performance, all users are assumed to transmit with the same power. A power allocation and routing algorithm is proposed in \cite{neely2003power} to maximize the capacity region under an instantaneous power constraint. While the authors show an upper bound on the average delay, this delay performance is not guaranteed to be optimal.

Although queuing theory, that was originally developed to model packets at a server, can be applied to wireless channels, the challenges are different. One of the main challenges is the fading nature of the wireless channel that changes from a slot to another. Fading requires adapting the user's power and/or rate according to the channel's fading coefficient. The idea of power and/or rate adaptation based on the channel condition does not have an analogy in server problems and, thus, is absent in the aforementioned references. Instead, existing works treat wireless channels as on-off fading channels and do not consider multiple fading levels. Among the relevant references that consider a more general fading channel model are \cite{neely2003power}, which was discussed above, \cite{Fading_No_Scheduling,E_Hossain_CR_Delay_Analysis} where the optimization over the scheduling algorithm was out of the scope of their work, and \cite{Min_Pow_4_Delay_NonCR} that neglects the interference constraint since it considers a non CR system.

In contrast with \cite{Letaief_PU_Known_Location,NEP_Distributed,Ewaisha_TVT2015,Iter_Bit_Allocation_OFDM,6464638} that do not optimize the queuing delay, the problem of minimizing the sum of SUs' average delays is considered in this work. The proposed algorithm guarantees a bound on the instantaneous interference to the PUs, a guarantee that is absent in \cite{li2011delay,neely2003power}. Based on Lyapunov optimization techniques \cite{li2011delay}, an algorithm that dynamically schedules the SUs as well as optimally controlling their transmission power is presented. The contributions in this work are: i) Proposing a joint power-control and scheduling algorithm that is optimal with respect to the average delay of the SUs in an interference-limited system; ii) Showing that the proposed algorithm can provide differentiated service to the different SUs based on their heterogeneous QoS requirements. Moreover, the complexity of the algorithm is shown to be polynomial in time since it is equivalent to that of sorting a vector of $N$ numbers, where $N$ is the number of SUs in the system.

\chapter{Delay Due To Service Time}
\label{Ch2_Sequential_Sensing}

In this chapter we study the delay resulting from the service time of packets and neglect the delay resulting from the waiting time in the queues. We treat the cognitive radio system as a single secondary user (SU) accessing a multi-channel system. The main problem studied in this chapter is the tradeoff between the service time and the throughput. We assume the SU senses the channels sequentially to detect the presence of the primary user (PU), and stops its search to access a channel if it offers a significantly high throughput. The tradeoff exists because stopping at early-sensed channels gives low average service time but, at the same time, gives low throughput since early channels might have low gains. The joint optimal stopping rule and power control problem is formulated as a throughput maximization problem with an average service time and power constraint. We note that in this chapter we use the word delay to refer to the service time.

\section{Overlay System Model}
\label{Overlay_System_Model}
Consider a PU network that has a licensed access to $M$ orthogonal frequency channels. Time is slotted with a time slot duration of $\Ts$ seconds.
The SU's network consists of a single ST (SU and ST will be used interchangeably) attempting to send real-time data to its intended secondary receiver (SR) through one of the channels licensed to the PU. Before a time slot begins, the SU is assumed to have ordered the channels according to some sequence (we note that the method of ordering the channels is outside the scope of this work. The reader is referred to \cite{Sensing_Order_Poor} for further details about channel ordering), labeled $1,...,M$. The set of channels is denoted by $\script{M}=\{1,...,M\}$. Before the SU attempts to transmit its packet over channel $i$, it senses this channel to determine its availability ``state'' which is described by a Bernoulli random variable $b_i$ with parameter $\theta_i$ ($\theta_i$ is called the availability probability of channel $i$). If $b_i=0$ (which happens with probability $\theta_i$), then channel $i$ is free and the SU may transmit over it until the on-going time slot ends. If $b_i=1$, channel $i$ is busy, and the SU proceeds to sense channel $i+1$. Channel availabilities are statistically independent across frequency channels and across time slots.

We assume that the SU has limited capabilities in the sense that no two channels can be sensed simultaneously. This may be the case when considering radios having a single sensing module with a fixed bandwidth, so that it can be tuned to only one frequency channel at a time. The reader is referred to \cite{Spectrum_Sensing_via_Kolmogorov_Smirnov_Test}, \cite{Zou2011_CognitiveRelaySelection} and \cite{Robust_Spectrum_Sensing_With_Crowd_Sensors} for detailed information on advanced spectrum sensing techniques. Therefore, at the beginning of a given time slot, the SU selects a channel, say channel $1$, senses it for $\tau$ seconds ($\tau \ll \Ts/M$), and if it is free, the SU transmits on this channel if its channel gain is high enough\footnote{How ``high'' is ``high'' is going to be explained later}. Otherwise, the SU skips this channel and senses channel $2$, and so on until it finds a free channel. If all channels are busy (i.e. the PU has transmission activities on all $M$ channels) then this time slot will be considered as ``blocked''. In this case, the SU waits for the following time slot and begins sensing following the same channel sensing sequence. As the sensing duration increases, the transmission phase duration decreases which decreases the throughput. But we cannot arbitrarily decrease the value of $\tau$ since this decreases the reliability of the sensing outcome. This trade-off has been studied extensively in the literature, e.g. \cite{Liang2008_SensingThroughputTradeoff}, \cite{Zou2010_AsymptoticOutageProb}. In this work we study the impact of sequential channel sensing on the throughput rather than the sensing duration on the throughput. Hence we assume that $\tau$ is a fixed parameter and is not optimized over. For details on the trade-off between throughput and sensing duration in this sequential sensing problem the reader is referred to \cite{Ewaisha2011Optimization}.

\begin{figure}
	\centering
		\includegraphics[width=\widthn\columnwidth]{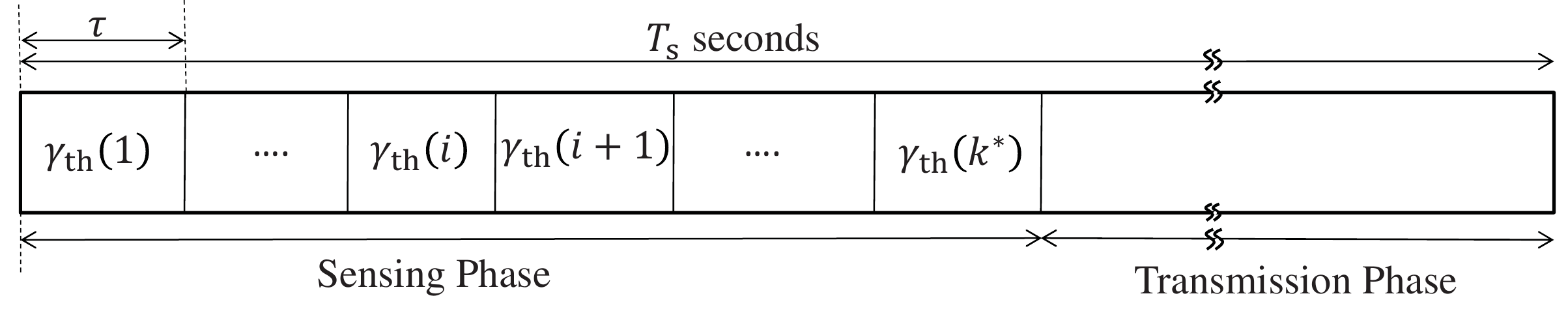}
	\caption{Sensing and transmission phases in one time slot. The SU senses each channel for $\tau$ seconds, determines its state, then probes the gain if the channel is found free. The sensing phase ends if the probed gain $\gamma_i>\gammath{i}$, in which case $k^*=i$. Hence, $k^*$ is a random variable that depends on the channel states and gains.}
	\label{Time_Slot_Fig}
\end{figure}

The fading channel between ST and SR is assumed to be flat fading with independent, identically distributed (i.i.d.) channel gains across the $M$ channels. To achieve higher data rates, the SU adapts its data rate according to the instantaneous power gain of the channel before beginning transmission on this channel. To do this, once the SU finds a free channel, say channel $i$, the gain $\gamma_i$ is probed. The data rate will be proportional to $\log(1+P_{1,i}(\gamma_i){\gamma}_i)$, where $P_{1,i}(\gamma_i)$ is the power transmitted by the SU at channel $i$ as a function of the instantaneous gain \cite{Goldsmith_Wireless_Comm}. Fig. \ref{Time_Slot_Fig} shows a potential scenario where the SU senses $k^*$ channels, skips the first $k^*-1$, and uses the $k^*$th channel for transmission until the end of this on-going time slot. In this scenario the SU ``stops'' at the $k^*$th channel, for some $k^*  \in \script{M}$. Stopping at channel $i$ depends on two factors: 1) the availability of channel $b_i$, and 2) the instantaneous channel gain $\gamma_i$. Clearly, $b_i$ and $\gamma_i$  are random variables that change from one time slot to another. Hence, $k^*$, that depends on these two factors, is a random variable. More specifically, it depends on the states $[b_1,...,b_M]$ along with the gains of each channel $[\gamma_1,...,\gamma_M]$. To understand why, consider that the SU senses channel $i$, finds it free and probes its gain $\gamma_i$. If ${\gamma}_i$ is found to be low, then the SU skips channel $i$ (although free) and senses channel $i+1$. This is to take advantage of the possibility that $\gamma_j \gg \gamma_i$ for $j>i$. On the other hand, if $\gamma_i$ is sufficiently large, the SU stops at channel $i$ and begins transmission. In that latter case $k^*=i$. Defining the two random vectors $\underline{b}=[b_1,...,b_M]^T$ and $\underline{\gamma}=[\gamma_1,...,\gamma_M]^T$, $k^*$ is a deterministic function of $\underline{b}$ and $\underline{\gamma}$.

We define the stopping rule by defining a threshold $\gammath{i}$ to which each $\gamma_i$ is compared when the $i$th channel is found free. If $\gamma_i \geq \gammath{i}$, channel $i$ is considered to have a ``high'' gain and hence the SU ``stops'' and transmits at channel $i$. Otherwise, channel $i$ is skipped and channel $i+1$ sensed. In the extreme case when $\gammath{i}=0$, the SU will not skip channel $i$ if it is found free. Increasing $\gammath{i}$ allows the SU to skip channel $i$ whenever ${\gamma}_i<\gammath{i}$, to search for a better channel, thus potentially increasing the throughput. Setting $\gammath{i}$ too large allows channel $i$ to be skipped even if $\gamma_i$ is high. This constitutes the trade-off in choosing the thresholds $\gammath{i}$. The optimal values of $\gammath{i}$ $i\in \script{M}$, determine the optimal stopping rule.

Let $P_{1,i}(\gamma)$ denote the power transmitted at the $i$th channel when the instantaneous channel gain is $\gamma$, if channel $i$ was chosen for transmission. Since the SU can transmit on one channel at a time, the power transmitted at any time slot at channel $i$ is $P_{1,i}(\gamma_i) \mathds{1} \left(i=k^* \right)$, where $\mathds{1} \left(i=k^* \right)=1$ if $i=k^*$ and $0$ otherwise.
Define $c_i \triangleq 1-\frac{i\tau}{\Ts}$ as the fraction of the time slot remaining for the SU's transmission if the SU transmits on the $i$th channel in the sensing sequence. The average power constraint is $\mathbb{E}_{\underline{\gamma},\underline{b}} [c_{k^*} P_{k^*}(\gamma_{k^*})] \leq P_{\rm avg}$, where the expectation is with respect to the random vectors $\underline{\gamma}$ and $\underline{b}$. We will henceforth drop the subscript from the expected value operator ${\mathbb E}$. This expectation can be calculated recursively from
\begin{equation}
S_i(\bfGamma(i),\bfP{1,i})=\theta_i c_i \int_{\gammath{i}}^\infty{P_{1,i}(\gamma) f_{\gamma_i}(\gamma) \,d\gamma}+ \left[1-\theta_i \bar{F}_{\gamma_i}(\gammath{i}) \right]S_{i+1}(\bfGamma(i+1),\bfP{i+1}),
\label{Average_Power}
\end{equation}
$i \in \script{M}$, where $\bfP{1,i} \triangleq [P_{1,i}(\gamma),...,P_{1,M}(\gamma)]^T$ and $\bfGamma(i) \triangleq [\gammath{i},...,\gammath{M}]^T$ are the vectors of the power functions and thresholds respectively, with $S_{M+1}(\bfGamma(M+1),\bfP{M+1})  \triangleq  0$, $f_{\gamma_i}(\gamma)$ is the Probability Density Function (PDF) of the gain $\gamma_i$ of channel $i$, and $\bar{F}_{\gamma_i}(x) \triangleq \int_x^{\infty}{f_{\gamma_i}(\gamma) \, d\gamma}$ is the complementary cumulative distribution function. The first term in (\ref{Average_Power}) is the average power transmitted at channel $i$ given that channel is chosen for transmission (i.e. given that $k^*=i$). The second term represents the case where channel $i$ is skipped and channel $i+1$ is sensed. It can be shown that $S_1(\bfGamma(1),\bfP{1,1})=\mathbb{E} \left[c_{k^*} P_{k^*}(\gamma) \right]$. Moreover, we will also drop the index $i$ from the subscript of $f_{\gamma_i}(\gamma)$ and $\bar{F}_{\gamma_i}(\gamma)$ since channels suffer i.i.d. fading. Although we have only included an average power constraint in our problem, we will modify, after solving the problem, the solution to include an instantaneous power constraint as well.

The SU's average throughput is defined as $\mathbb{E} [c_{k^*} \log(1+P_{k^*}(\gamma_{k^*})\gamma_{k^*})]$. Similar to the average power, we denote the expected throughput as $U_1(\bfGamma(1),\bfP{1,1})$ which can be derived using the following recursive formula
\begin{align}
\nonumber U_i(\bfGamma(i),&\bfP{1,i})=\theta_i c_i \int_{\gammath{i}}^{\infty}{\log \lb 1+P_{1,i}(\gamma)\gamma \rb f_{\gamma}(\gamma)} \, d\gamma +\\
& \left[1-\theta_i \bar{F}_{\gamma}(\gammath{i}) \right] U_{i+1} \lb \bfGamma(i+1),\bfP{i+1} \rb
\label{Reward}
\end{align}
$i\in \script{M}$, with $U_{M+1}(\cdot,\cdot) \triangleq 0$. $U_1(\bfGamma(1),\bfP{1,1})$ represents the expected data rate of the SU as a function of the threshold vector $\bfGamma(1)$ and the power function vector $\bfP{1,1}$.

If the SU skips all channels, either due to being busy, due to their low gain or due to a combination of both, then the current time slot is said to be blocked. The SU has to wait for the following time slot to begin searching for a free channel again. This results in a delay in serving (transmitting) the SU's packet. Define the delay $D$ as the number of time slots the SU consumes before successfully transmitting a packet. That is, $D-1$ is a random variable that represents the number of consecutively blocked time slots. In real-time applications, there may exist some average delay requirement $\bar{D}_{\rm{max}}$ on the packets that must not be exceeded. Since the availability of each channel is independent across time slots, $D$ follows a geometric distribution having $\mathbb{E}[D]= \left(\rm{Pr}[\rm{Success}]\right)^{-1}$ where $\rm{Pr}[\rm{Success}]=1-\rm{Pr}[\rm{Blocking}]$. In other words, $\rm{Pr}[\rm{Success}]$ is the probability that the SU finds a free channel with high enough gain so that it does not skip all $M$ channels in a time slot. It is given by $\rm{Pr}[\rm{Success}]  \triangleq  p_1(\bfGamma(1))$ which can be calculated recursively using the following equation
\begin{equation}
\label{Prob_recursive}
p_i(\bfGamma(i))=\theta_i \bar{F}_\gamma(\gammath{i})+ \left[1-\theta_i \bar{F}_\gamma(\gammath{i}) \right]p_{i+1}(\bfGamma(i+1)),
\end{equation}
$i\in \script{M}$, where $p_{M+1} \triangleq 0$. Here, $p_i(\bfGamma(i))$ is the probability of transmission on channel $i$, $i+1$,..., or $M$.

\section{Problem Statement and Proposed Solution}
\label{Problem_Formulation}
From equation (\ref{Reward}) we see that the SU's expected throughput $U_1$ depends on the threshold vector $\bfGamma(1)$ and the power vector $\bfP{1,1}$. The goal is to find the optimum values of $\bfGamma(1) \in {\mathbb R}^M$ and functions $\bfP{1,1}$ that maximize $U_1$ subject to an average power constraint and an expected packet delay constraint. The delay constraint can be written as $\mathbb{E}[D] \leq \Dmax$ or, equivalently, $p_1(\bfGamma(1)) \geq \invDmaxline$. Mathematically, the problem becomes

\begin{equation}
\begin{array}{ll}
\rm{maximize}& U_1(\bfGamma(1),\bfP{1,1})\\
\label{Prob_Opt_Pow_Control}
\rm{subject \; to} &S_1(\bfGamma(1),\bfP{1,1}) \leq \Pavg\\
& p_1(\bfGamma(1)) \geq \frac{1}{\bar{D}_{\rm{max}}}\\
\rm{variables} & \bfGamma(1),\bfP{1,1},
\end{array}
\end{equation}
where the first constraint represents the average power constraint, while the second is a bound on the average packet delay. 
We allow the power $P_{1,i}$ to be an arbitrary function of $\gamma_i$ and optimize over this function to maximize the throughput subject to average power and delay constraints. Even though (\ref{Prob_Opt_Pow_Control}) is not proven to be convex, we provide closed-form expressions for the optimal thresholds and power-functions vector. To this end, we first calculate the Lagrangian associated with (\ref{Prob_Opt_Pow_Control}). Let $\lambda_{\rm P}$ and $\lambda_{\rm D}$ be the dual variables associated with the constraints in problem (\ref{Prob_Opt_Pow_Control}). The Lagrangian for (\ref{Prob_Opt_Pow_Control}) becomes
\begin{align}
\nonumber L&\left(\bfGamma(1),\bfP{1,1}, \lambda_{\rm P}, \lambda_{\rm D} \right)=U_1 \left( \bfGamma(1),\bfP{1,1} \right)-\\
&\lambda_{\rm P} \left( S_1(\bfGamma(1),\bfP{1,1}) -P_{\rm{avg}}\right)+ \lambda_{\rm D} \left( p_1(\bfGamma(1)) - \frac{1}{\bar{D}_{\rm{max}}} \right).
\label{Lagrange_Optimum_Pow_Control}
\end{align}
Differentiating (\ref{Lagrange_Optimum_Pow_Control}) with respect to each of the primal variables $P_{1,i}(\gamma)$ and $\gammath{i}$ and equating the resulting derivatives to zero, we obtain the KKT equations below which are necessary conditions for optimality \cite{Cvx_Boyd}, \cite{Miersemann_Calc_Var}:
\begin{align}
	\label{Water_Filling}
	&P_{1,i}^*(\gamma)=\left(\frac{1}{\lambdapst} - \frac{1}{\gamma}\right)^+, \hspace{0.5cm} \gamma> \gammathst{i},\\
	\nonumber &\log \left(1+\left(\frac{1}{\lambdapst}-\frac{1}{\gammathst{i}}\right)^+\gammathst{i} \right) - \lambdapst \left(\frac{1}{\lambdapst} - \frac{1}{\gammathst{i}} \right)^+\\
	&= \frac{U_{i+1}^* - \lambdapst S_{i+1}^*-\lambdadst \cdot \left( 1-p_{i+1}^* \right)}{c_i},
	\label{gamma_i_Equation}\\
\label{Primal_Dual_Feasible}
	&S_1^* \leq P_{\rm{avg}} \hspace{0.2cm}, \hspace{0.2cm} p_1^* \geq \frac{1}{\bar{D}_{\rm{max}}}  \hspace{0.2cm}, \hspace{0.2cm} \lambdapst \geq 0 \hspace{0.2cm}, \hspace{0.2cm} \lambdadst \geq 0,\\
	\label{Comp_Slackness_Power}
	&\lambdapst \cdot \left( S_1^* -P_{\rm{avg}}\right)=0,\\
	\label{Comp_Slackness_Delay}
	&\lambdadst \cdot \left( p_1^* - \frac{1}{\bar{D}_{\rm{max}}} \right)=0,
\end{align}
$i\in \script{M}$. We use $U_{i+1}^* \triangleq U_{i+1}\left(\Gammast{i+1},\Pst{i+1} \right)$ while $S_{i+1}^* \triangleq S_{i+1}\left(\Gammast{i+1},\Pst{i+1} \right)$ and $p_{i+1}^* \triangleq p_{i+1}\left(\Gammast{i+1}\right)$ for brevity in the sequel. We note that $U_{M+1} \lb \cdot , \cdot \rb = S_{M+1} \lb \cdot , \cdot \rb = p_{M+1} \lb \cdot \rb \triangleq 0$ by definition. 
We observe that these KKT equations involve the primal ($\Gammast{1}$ and $\Pst{1}$) and the dual ($\lambdapst$ and $\lambdadst$) variables. Our approach is to find a closed-form expression for the primal variables in terms of the dual variables, then propose a low-complexity algorithm to obtain the solution for the dual variables. The optimality of this approach is discussed at the end of this section (in Section \ref{Optimality_of_Approach}) where we show that, loosely speaking, the KKT equations provide a unique solution to the primal-dual variables. Hence, based on this unique solution, and on the fact that the KKT equations are necessary conditions for the optimal solution, then this solution is not only necessary but sufficient as well, and hence optimal.


\subsection{Solving for Primal Variables}
\label{Primal_Variables}
Equation (\ref{Water_Filling}) is a water-filling strategy with a slight modification due to having the condition $\gamma>\gammath{i}$. This condition comes from the sequential sensing of the channels which is absent in the classic water-filling strategy \cite{Goldsmith_Wireless_Comm}. Equation (\ref{Water_Filling}) gives a closed-form solution for $\bfP{1,1}$. On the other hand, the entries of the vector $\Gammast{1}$ are found via the set of equations (\ref{gamma_i_Equation}). Note that equation (\ref{gamma_i_Equation}) indeed forms a set of $M$ equations, each solves for one of the $\gammathst{i}$, $i\in \script{M}$. We refer to this set as the ``threshold-finding'' equations. For a given value of $i$, solving for $\gammathst{i}$ requires the knowledge of only $\gammathst{i+1}$ through $\gammathst{M}$, and does not require knowing $\gammathst{1}$ through $\gammathst{i-1}$. Thus, these $M$ equations can be solved using back-substitution starting from $\gammathst{M}$. To solve for $\gammathst{i}$, we use the fact that $\gammathst{i} \geq \lambdapst$ that is proven in the following lemma.
\begin{lma}
\label{Lma_Stochastic_Dominance}
The optimal solution of problem (\ref{Prob_Opt_Pow_Control}) satisfies $\gamma_{\rm th}^*(i) \geq \lambdapst$ $\forall i\in \script{M}$.
\end{lma}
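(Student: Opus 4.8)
The plan is to read the inequality directly off the stationarity condition (\ref{gamma_i_Equation}), treating its left-hand side as a function of the single unknown $\gammathst{i}$ after the water-filling power (\ref{Water_Filling}) is substituted at $\gamma=\gammathst{i}$. Define
\[
h(g) \triangleq \log\!\lb 1 + \lb \frac{1}{\lambdapst} - \frac{1}{g} \rb^{+} g \rb - \lambdapst \lb \frac{1}{\lambdapst} - \frac{1}{g} \rb^{+},
\]
which is exactly the left-hand side of (\ref{gamma_i_Equation}) with $g=\gammathst{i}$. Everything then reduces to matching the sign of $h(\gammathst{i})$ against the sign of the right-hand side of (\ref{gamma_i_Equation}).

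First I would carry out a one-variable analysis of $h$. For $g\le\lambdapst$ the clipping $(\cdot)^{+}$ forces the transmit power to zero, so $h(g)=\log 1-0=0$; for $g>\lambdapst$ one has $h(g)=\log(g/\lambdapst)-1+\lambdapst/g$, which vanishes at $g=\lambdapst$ and has derivative $(g-\lambdapst)/g^{2}>0$. Hence $h$ is non-negative everywhere, with $h(g)=0\iff g\le\lambdapst$ and $h(g)>0\iff g>\lambdapst$. This is the clean, routine part.

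Second, I would show the right-hand side of (\ref{gamma_i_Equation}) is non-negative. Writing its numerator as $V_{i+1}\triangleq\lb U_{i+1}^{*}-\lambdapst S_{i+1}^{*}+\lambdadst p_{i+1}^{*}\rb-\lambdadst$ and feeding the recursions (\ref{Reward}), (\ref{Average_Power}) and (\ref{Prob_recursive}) into it, the $\lambdadst$ terms telescope and leave the backward recursion
\[
V_i = \theta_i c_i \int_{\gammathst{i}}^{\infty} h(\gamma)\, f_\gamma(\gamma)\, d\gamma + \lb 1 - \theta_i \bar{F}_\gamma\lb \gammathst{i} \rb \rb V_{i+1},
\]
whose integrand is the same function $h$ from the first step and is therefore non-negative, while the multiplier $1-\theta_i\bar{F}_\gamma(\gammathst{i})\in[0,1]$. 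Combining the two steps, (\ref{gamma_i_Equation}) reads $h(\gammathst{i})=V_{i+1}/c_i$: whenever $V_{i+1}>0$ the positivity of $h$ forces $\gammathst{i}>\lambdapst$, and in the knife-edge case $V_{i+1}=0$ every $g\in[0,\lambdapst]$ solves the equation but all such policies share the same Lagrangian value, so one may fix $\gammathst{i}=\lambdapst$ without loss. Either way $\gammathst{i}\ge\lambdapst$.

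The hard part will be the second step, namely certifying $V_{i+1}\ge 0$, and specifically its terminal value. Since $U_{M+1}^{*}=S_{M+1}^{*}=p_{M+1}^{*}=0$ by definition, the base of the backward recursion is $V_{M+1}=-\lambdadst\le 0$, so non-negativity cannot simply be propagated downward from the last stage; the positive integral contributions at the intermediate stages and the delay complementary-slackness condition (\ref{Comp_Slackness_Delay}) must be marshalled to control the interaction between the delay multiplier $\lambdadst$ and the blocking (terminal) stage. This is the delicate point on which I would concentrate, the remaining algebra being the mechanical verification that the $\lambdadst$ terms indeed telescope into the displayed recursion.
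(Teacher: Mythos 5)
Your reduction of the lemma to the sign of the right-hand side of (\ref{gamma_i_Equation}) is sound as far as it goes: the function $h$ is indeed identically zero on $(0,\lambdapst]$ and strictly positive and increasing beyond it, and your telescoping recursion $V_i = \theta_i c_i \int_{\gammathst{i}}^{\infty} h(\gamma) f_\gamma(\gamma)\,d\gamma + \lb 1-\theta_i \bar{F}_\gamma(\gammathst{i})\rb V_{i+1}$ is correct, since $1-p_i^* = \lb 1-\theta_i\bar{F}_\gamma(\gammathst{i})\rb\lb 1-p_{i+1}^*\rb$ makes the $\lambdadst$ terms collapse. But the proof is not finished: the one step on which everything rests, $V_{i+1}\ge 0$, is exactly the step you leave open, and as you yourself observe the recursion cannot supply it because its terminal value is $V_{M+1}=-\lambdadst\le 0$. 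What you did not notice is that stationarity itself closes most of this gap for free: equation (\ref{gamma_i_Equation}) asserts $h(\gammathst{i}) = V_{i+1}/c_i$, and since $h\ge 0$ pointwise, \emph{any} solution of that equation automatically has $V_{i+1}\ge 0$; no telescoping is needed. The genuine residual difficulty is the knife-edge case $V_{i+1}=0$: there the stationarity equation is satisfied by every $g\in(0,\lambdapst]$, and a purely dual argument cannot select among these solutions. Your proposed fix (``all such policies share the same Lagrangian value, so fix $\gammathst{i}=\lambdapst$ without loss'') does not work, because the problem is not known to be convex, and equal Lagrangian values do not imply equal primal objectives or preserved feasibility: raising the threshold from $g<\lambdapst$ to $\lambdapst$ leaves $V_i$ unchanged but changes $U_1$, $S_1$ and $p_1$ individually (it increases $U_i$ by $\theta_i\lb \bar{F}_\gamma(g)-\bar{F}_\gamma(\lambdapst)\rb U_{i+1}^*$, increases $S_i$ analogously, and decreases $p_i$), so ``same Lagrangian value'' does not license the substitution.

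The paper takes a different and more primal route that sidesteps this issue: it assumes $\gammathst{i}<\lambdapst$ for contradiction, observes that on $[\gammathst{i},\lambdapst)$ the water-filling power (\ref{Water_Filling}) is zero so stopping there earns zero rate, replaces the threshold by $\lambdapst$ (converting those zero-rate stops into the continuation reward $U_{i+1}^*\ge 0$), concludes $U_i$ cannot decrease, and then propagates this monotonicity through the recursion (\ref{Reward}) up to $U_1$, contradicting optimality of the threshold vector. That exchange argument on the objective is precisely the ingredient your stationarity analysis lacks in the degenerate case; to complete your proof you would either import it, or show directly that the optimum cannot simultaneously have $V_{i+1}=0$ and $\gammathst{i}<\lambdapst$.
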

\begin{proof}
See Appendix \ref{Apdx_Stochastic_Dominance} for proof.
\end{proof}
The intuition behind Lemma \ref{Lma_Stochastic_Dominance} is as follows. If, for some channel $i$, $\gammathst{i}<\lambdapst$ was possible, and the instantaneous gain $\gamma_i$ happened to fall in the range $[\gammathst{i},\lambdapst)$ at a given time slot, then the SU will not skip channel $i$ since $\gamma_i>\gammathst{i}$. But the power transmitted on channel $i$ is $P_{1,i}(\gamma_i)=\left(1/\lambdapst-1/\gamma_i \right)^+=0$ since $\gamma_i<\lambdapst$. This means that the SU will neither skip nor transmit on channel $i$, which does not make sense from the SU's throughput perspective. To overcome this event, the SU needs to set $\gammathst{i}$ at least as large as $\lambdapst$ so that whenever $\gamma_i<\lambdapst$, the SU skips channel $i$ rather than transmitting with zero power.

Lemma \ref{Lma_Stochastic_Dominance} allows us to remove the $( \cdot )^+$ sign in equation (\ref{gamma_i_Equation}) when solving for $\gammathst{i}$. 
Rewriting (\ref{gamma_i_Equation}) we get
\begin{align}
&\nonumber \frac{-\lambdapst}{\gammathst{i}} \exp \lb \frac{-\lambdapst}{\gammathst{i}} \rb =\\
& -\exp \lb -\frac{U_{i+1}^* - \lambdapst S_{i+1}^*-\lambdadst \cdot \left( 1-p_{i+1}^* \right)}{c_i} - 1 \rb, \hspace{0.1cm} i \in \script{M},
\label{gamma_i_2}
\end{align}
Equation (\ref{gamma_i_2}) is now on the form $W\exp(W)=c$, whose solution is $W=W_0(c)$, where $W_0(x)$ is the principle branch of the Lambert W function \cite{Lambert_W_Function} and is given by $W_0(x)=\sum_{n=1}^{\infty} \frac{\left( -n \right)^{n-1}}{n!}x^n$. The only solution to \eqref{gamma_i_2} which satisfies Lemma \ref{Lma_Stochastic_Dominance} is given for $i\in \script{M}$ by
\begin{equation}
\gammathst{i}=\frac{-\lambdapst}{W_0 \left( -\exp \left(-\frac{\left(U_{i+1}^* - \lambdapst S_{i+1}^*-\lambdadst \left( 1-p_{i+1}^* \right)\right)^+}{c_i}-1\right) \right)}.
\label{Gamma_Solution_Lambert_W}
\end{equation}

Hence, $\Gammast{1}$ and $\Pst{1}$ are found via equations (\ref{Gamma_Solution_Lambert_W}) and (\ref{Water_Filling}) respectively which are one-to-one mappings from the dual variables $(\lambdapst,\lambdadst)$. And if we had an instantaneous power constraint $P_{1,i}(\gamma)\leq \Pmax$, we could write down the Lagrangian and solve for $P_{1,i}(\gamma)$. The details are similar to the case without an instantaneous power constraint and are, thus, omitted for brevity. The reader is referred to \cite{Adaptive_Rate_Power_CR_Sonia} for a similar proof. The expression for $P_{1,i}^*(\gamma)$ is given by
	\begin{equation}
P_{1,i}^*(\gamma)=\left \{
\begin{array}{lll}
	\left(\frac{1}{\lambda_{\rm P}^*}-\frac{1}{\gamma} \right)^+ & \mbox{if } \frac{1}{\lambda_{\rm P}^*}-\frac{1}{\gamma} < P_{\rm max} \\
	P_{\rm max} & \mbox{otherwise.}
\end{array}
\right.
\end{equation}

Since the optimal primal variables are explicit functions of the optimal dual variables, once the optimal dual variables are found, the optimal primal variables are found and the optimization problem is solved. We now discuss how to solve for these dual variables.

\subsection{Solving for Dual Variables}
\label{Dual_Variables}
The optimum dual variable $\lambdapst$ must satisfy equation (\ref{Comp_Slackness_Power}). Thus if $\lambdapst>0$, then we need $S_1^* -P_{\rm{avg}}=0$. This equation can be solved using any suitable root-finding algorithm. Hence, we propose Algorithm \ref{Alg_lambda_P_lambda_D} that uses bisection \cite{Numerical_Recipes_Ch9}. In each iteration $n$, the algorithm calculates $S_1^*$ given that $\lambdap=\lambdap^{(n)}$, and given some fixed $\lambdad$, compares it to $\Pavg$ to update $\lambdap^{(n+1)}$ accordingly. The algorithm terminates when $S_1^*=\Pavg$, i.e. $\lambdap^{(n)}=\lambdapst$. The superiority of this algorithm over the exhaustive search is due to the use of the bisection algorithm that does not go over all the search space of $\lambdap$. In order for the bisection to converge, there must exist a single solution for equation $S_1^*=\Pavg$. This is proven in Theorem \ref{Thm_Unique_Solution_S}.
\begin{thm}
\label{Thm_Unique_Solution_S}
$S_1^*$ is decreasing in $\lambdapst \in [0,\infty)$ given some fixed $\lambdadst \geq 0$. Moreover, the optimal value $\lambdapst$ satisfying $S_1^*=\Pavg$ is upper bounded by $\lambdapmax \triangleq \sum_{i=1}^M \theta_i c_i /\Pavg$.
\end{thm}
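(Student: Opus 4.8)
\noindent\emph{Proof plan.} The plan is to prove the two assertions by different means: monotonicity of $S_1^*$ via an exchange (revealed-preference) argument on the Lagrangian maximizers, which sidesteps differentiating through the implicitly-defined thresholds $\gammathst{i}$, and the bound $\lambdapst \leq \lambdapmax$ via a direct pointwise estimate of the water-filling power. Throughout I keep $\lambdadst \geq 0$ fixed and regard the primal point $x \triangleq (\bfGamma(1),\bfP{1,1})$ given by (\ref{Water_Filling}) and (\ref{Gamma_Solution_Lambert_W}), with the role of $\lambdapst$ played by a free variable $\lambdap$, as a function of $\lambdap$; I write $S_1^*(\lambdap)$, $U_1^*(\lambdap)$, $p_1^*(\lambdap)$ for the values of $S_1$, $U_1$, $p_1$ at this point. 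The starting observation is that for each $\lambdap$ this $x$ maximizes the Lagrangian $L(x;\lambdap,\lambdadst)$ of (\ref{Lagrange_Optimum_Pow_Control}) over the primal variables.

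For monotonicity I would fix $\lambdap^{(1)} < \lambdap^{(2)}$ with maximizers $x_1,x_2$. Using that $x_1$ maximizes $L(\cdot;\lambdap^{(1)},\lambdadst)$ and $x_2$ maximizes $L(\cdot;\lambdap^{(2)},\lambdadst)$, and cancelling the primal-independent constants $\lambdap\Pavg$ and $-\lambdadst/\Dmax$ on each side, I obtain
\begin{align}
U_1(x_1) - \lambdap^{(1)} S_1(x_1) + \lambdadst\, p_1(x_1) &\geq U_1(x_2) - \lambdap^{(1)} S_1(x_2) + \lambdadst\, p_1(x_2),\\
U_1(x_2) - \lambdap^{(2)} S_1(x_2) + \lambdadst\, p_1(x_2) &\geq U_1(x_1) - \lambdap^{(2)} S_1(x_1) + \lambdadst\, p_1(x_1).
\end{align}
Adding these and cancelling the common $U_1$ and $\lambdadst p_1$ terms leaves $\lb \lambdap^{(2)} - \lambdap^{(1)} \rb \lb S_1(x_1) - S_1(x_2) \rb \geq 0$, so $S_1^*(\lambdap^{(1)}) \geq S_1^*(\lambdap^{(2)})$. (Equivalently, the dual function $\max_x L$ is convex in $\lambdap$ with derivative $\Pavg - S_1^*$.) Strictness comes from the water-filling level $1/\lambdap$ being strictly decreasing, so the per-channel power integral strictly drops on every channel reached with positive probability.

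For the bound I would unwind the recursion (\ref{Average_Power}) at the optimum, using Lemma \ref{Lma_Stochastic_Dominance} ($\gammathst{i} \geq \lambdap$) to drop the $(\cdot)^+$, into
\[
S_1^* = \sum_{i=1}^M \lb \prod_{j=1}^{i-1} \lb 1 - \theta_j \ccdf(\gammathst{j}) \rb \rb \theta_i c_i \int_{\gammathst{i}}^{\infty} \lb \frac{1}{\lambdap} - \frac{1}{\gamma} \rb \pdf(\gamma)\,d\gamma .
\]
Each reach-probability $\prod_{j<i}\lb 1 - \theta_j \ccdf(\gammathst{j}) \rb$ lies in $[0,1]$, and on $[\gammathst{i},\infty)$ one has $1/\lambdap - 1/\gamma \leq 1/\lambdap$, hence $\int_{\gammathst{i}}^{\infty} \lb 1/\lambdap - 1/\gamma \rb \pdf\,d\gamma \leq \ccdf(\gammathst{i})/\lambdap \leq 1/\lambdap$. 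Therefore $S_1^* \leq \lambdap^{-1}\sum_{i=1}^M \theta_i c_i$ for every $\lambdap > 0$. Since $\lambdapmax = \sum_{i=1}^M \theta_i c_i / \Pavg$, this gives $S_1^*(\lambdap) < \Pavg$ for all $\lambdap > \lambdapmax$, so the equation $S_1^* = \Pavg$ has no root beyond $\lambdapmax$ and $\lambdapst \leq \lambdapmax$.

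The hard part is the very first step: because (\ref{Prob_Opt_Pow_Control}) is not known to be convex, I must argue that the stationary point of (\ref{Water_Filling})--(\ref{Gamma_Solution_Lambert_W}) is a genuine \emph{maximizer} of $L$ rather than merely a critical point, for otherwise the two exchange inequalities need not hold. The power variables are harmless, since $\log(1+P\gamma) - \lambdap P$ is concave in $P$ and water-filling is its pointwise maximizer; the delicate piece is the threshold vector, where I would rely on the uniqueness of the stationary solution established in Section \ref{Optimality_of_Approach}. Granting maximality, both the exchange step and the bound are routine.
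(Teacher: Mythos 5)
Your proof of the second assertion (the bound $\lambdapst \leq \lambdapmax$) is correct and is essentially the paper's own argument: unwind (\ref{Average_Power}), bound each per-channel integral by $\ccdf(\gammathst{i})/\lambdapst \leq 1/\lambdapst$ and each reach-probability by one, and use $S_1^*=\Pavg$. The monotonicity half, however, has a genuine gap, and it sits exactly at the step you flag and then defer. Your exchange inequalities are valid only if the point defined by (\ref{Water_Filling}) and (\ref{Gamma_Solution_Lambert_W}) \emph{globally maximizes} the Lagrangian (\ref{Lagrange_Optimum_Pow_Control}) over the primal variables, and the justification you offer---``rely on the uniqueness of the stationary solution established in Section \ref{Optimality_of_Approach}''---is circular: that section deduces its uniqueness/sufficiency claim \emph{from} Theorem \ref{Thm_Unique_Solution_S}, so it cannot be an input to this theorem's proof. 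Worse, the maximality you need can actually fail. Whenever the $(\cdot)^+$ in (\ref{Gamma_Solution_Lambert_W}) is active for some channel $i$, i.e. $U_{i+1}^* - \lambdapst S_{i+1}^* - \lambdadst \lb 1-p_{i+1}^*\rb <0$, the formula pins $\gammathst{i}=\lambdapst$; but with $\lambdadst>0$ the Lagrangian is then increased by pushing $\gammath{i}$ \emph{below} $\lambdapst$, since the SU ``stops'' with zero power, leaving $U_1$ and $S_1$ untouched while collecting the reward $\lambdadst p_1$. So the formula point is not the Lagrangian maximizer in that regime; and because the resulting change in $V_{i}\triangleq U_i-\lambdapst S_i+\lambdadst p_i$ propagates into the threshold equations of channels $j<i$ (shifting thresholds in the region where power is positive), you cannot even argue that $S_1$ at the true maximizer equals the $S_1^*$ the theorem is about. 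Note also that uniqueness of a stationary point, even if granted, would not help: a maximum attained at the boundary $\gammath{i}=0$ need not be stationary.

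The plan can be completed non-circularly in the regime where every $(\cdot)^+$ is inactive: after substituting the water-filling powers, one has $V_i = V_{i+1} + \theta_i\int_{\gammath{i}}^\infty \lb c_i\lb \log(\gamma/\lambdapst)-1+\lambdapst/\gamma\rb +\lambdadst - V_{i+1}\rb \pdf(\gamma)\,d\gamma$, the integrand is nondecreasing in $\gamma$, and each $V_i$ is increasing in $V_{i+1}$; backward induction then shows that truncating each integral at its zero crossing---which is precisely equation (\ref{gamma_i_Equation})---maximizes $V_1$, which is the Lagrangian up to constants. Only with such an argument in hand do your exchange inequalities apply, and even then they give only that $S_1^*$ is non-increasing; your one-line strictness remark is too quick, because as $\lambdap$ changes the thresholds move along with the water level, so it is not evident that every per-channel integral strictly drops. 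By contrast, the paper never needs Lagrangian maximality: it differentiates the recursions (\ref{Average_Power}), (\ref{Reward}), (\ref{Prob_recursive}) at the formula point, derives the identity $\dUdlamp{i}-\lambdapst\dSdlamp{i}+\lambdadst\dpdlamp{i}=0$, uses it to obtain $\dgamdlamp{i}$ in closed form, and proves $\dSdlamp{i}<0$ by backward induction---a purely local computation that yields the strict monotonicity of the formula-defined $S_1^*$ directly, which is what the bisection in Algorithm \ref{Alg_lambda_P_lambda_D} requires.
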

\begin{proof}
See Appendix \ref{Apdx_Unique_Solution_S} for the proof.
\end{proof}
We note that Algorithm \ref{Alg_lambda_P_lambda_D} can be systematically modified to call any other root-finding algorithm (e.g. the secant algorithm \cite{Numerical_Recipes_Ch9} that converges faster than the bisection algorithm).

\begin{algorithm}
\caption{Finding $\lambdapst$ given some $\lambdad$}
\begin{algorithmic}[1]
\label{Alg_lambda_P_lambda_D}
\STATE Initialize $n \leftarrow 1$, $\lambdapmin \leftarrow 0$, $\lambdapmax \leftarrow \sum_{i=1}^M \theta_i c_i /\Pavg$, $\lambdap^{(1)} \leftarrow \lb \lambdapmin + \lambdapmax \rb /2$
\WHILE{$\vert S_1^*-P_{\rm avg} \vert > \epsilon$}
\STATE Calculate $S_1^*$ given that $\lambdapst=\lambdap^{(n)}$. Call it $S^{(n)}$.
\IF{$S^{(n)}-\Pavg>0$}
\STATE $\lambdapmin=\lambdap^{(n)}$
\ELSE
\STATE $\lambdapmax=\lambdap^{(n)}$
\ENDIF
\STATE $\lambdap^{(n+1)} \leftarrow \lb \lambdapmin + \lambdapmax \rb /2$
\STATE $n \leftarrow n+1$
\ENDWHILE
\STATE $\lambdapst \leftarrow \lambdap^{(n)}$
\end{algorithmic}
\end{algorithm}

Now, to search for $\lambdadst$, we state the following lemma.

\begin{lma}
\label{Lma_lambdadst_Bound}
The optimum value $\lambdadst$ that solves problem (\ref{Prob_Opt_Pow_Control}) satisfies $0 \leq \lambdadst <\lambdadmax$, where
\begin{equation}
\lambdadmax \triangleq \frac{c_1 \left[ \log \lb t\rb - t +1 \right] + \Utwomax}{1-\ptwomax}
\label{lambdadst_Bound}
\end{equation}
with $t \triangleq \lb\min \lb \lambdapmax,\ccdf^{-1} \lb \frac{1}{\theta_1 \Dmax}\rb\rb \rb /\lb\ccdf^{-1} \lb \frac{1}{\theta_1 \Dmax}\rb\rb$ and $\Utwomax$ is an upper bound on $U_2^*$ and is given by $\lb\int_{\lambdapmax}^\infty \log \lb \gamma/\lambdapmax\rb \pdf(\gamma) \, d\gamma \rb \lb \sum_{i=2}^M \theta_i c_i\rb$, while $\ptwomax$ is an upper bound on $p_2^*$ and is given by $\sum_{i=2}^M \prod_{j=2}^{i-1} \lb 1-\theta_j \rb \theta_i$.
\end{lma}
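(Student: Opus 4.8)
The plan is to extract an exact expression for $\lambdadst$ from the threshold-finding equation \eqref{gamma_i_Equation} evaluated at $i=1$, and then replace every instance-dependent quantity in it by a constant computable before the problem is solved. First I would invoke Lemma \ref{Lma_Stochastic_Dominance} to drop the $(\cdot)^+$ operators, so that the left-hand side of \eqref{gamma_i_Equation} at $i=1$ collapses: since $1+\lb 1/\lambdapst-1/\gammathst{1}\rb\gammathst{1}=\gammathst{1}/\lambdapst$, it becomes $\log\lb\gammathst{1}/\lambdapst\rb-1+\lambdapst/\gammathst{1}$. Writing $x\triangleq\gammathst{1}/\lambdapst\geq1$ and $g(x)\triangleq\log x-1+1/x$, equation \eqref{gamma_i_Equation} rearranges to
\[
\lambdadst\lb 1-p_2^*\rb=U_2^*-\lambdapst S_2^*-c_1\, g(x).
\]
This identity is the backbone of the proof: the nonnegativity $\lambdadst\geq0$ is immediate from \eqref{Primal_Dual_Feasible}, and the upper bound reduces to over-estimating the right-hand side while under-estimating $1-p_2^*$.

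Next I would bound the three ingredients on the right. The term $-\lambdapst S_2^*\leq0$ is simply discarded, and $U_2^*$ is replaced by its precomputable upper bound $\Utwomax$. The delicate ingredient is $g(x)$, and here the active delay constraint enters. When $\lambdadst>0$, complementary slackness \eqref{Comp_Slackness_Delay} forces $p_1^*=1/\Dmax$; combining this with the recursion \eqref{Prob_recursive} at $i=1$ and the fact that $\lb 1-\theta_1\ccdf(\gammathst{1})\rb p_2^*\geq0$ gives $\theta_1\ccdf(\gammathst{1})\leq p_1^*=1/\Dmax$, hence $\ccdf(\gammathst{1})\leq 1/(\theta_1\Dmax)$ and, since $\ccdf$ is decreasing, $\gammathst{1}\geq\ccdf^{-1}\lb 1/(\theta_1\Dmax)\rb$. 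Together with $\lambdapst\leq\lambdapmax$ from Theorem \ref{Thm_Unique_Solution_S} and $\gammathst{1}\geq\lambdapst$ from Lemma \ref{Lma_Stochastic_Dominance}, this yields $x\geq\max\lb 1,\ \ccdf^{-1}(1/(\theta_1\Dmax))/\lambdapmax\rb=1/t$, where $t$ is exactly the quantity in the statement.

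The final step exploits the shape of $g$. A one-line derivative check gives $g'(x)=(x-1)/x^2\geq0$ for $x\geq1$, so $g$ is increasing on $[1,\infty)$; since $x\geq 1/t\geq1$, we get $g(x)\geq g(1/t)=-\lb\log t-t+1\rb$, i.e. $-c_1g(x)\leq c_1\lb\log t-t+1\rb$. Substituting the three bounds into the identity gives $\lambdadst\lb1-p_2^*\rb\leq \Utwomax+c_1\lb\log t-t+1\rb$, and dividing by $1-p_2^*\geq1-\ptwomax>0$ produces $\lambdadst\leq\lambdadmax$. Strictness follows because the optimal thresholds satisfy $\gammathst{i}\geq\lambdapst>0$, so $\ccdf(\gammathst{i})<1$ and hence $p_2^*<\ptwomax$ strictly, which makes the division step strict.

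The main obstacle I anticipate is not this algebra but justifying the two precomputable bounds $U_2^*\leq\Utwomax$ and $p_2^*\leq\ptwomax$ that the statement asserts. For $\ptwomax$ the verification is routine: expanding the recursion \eqref{Prob_recursive} for $p_2^*$ and replacing every $\ccdf(\gammathst{j})$ by its maximal value $1$ yields exactly $\sum_{i=2}^M\prod_{j=2}^{i-1}\lb1-\theta_j\rb\theta_i$. The bound on $U_2^*$ is the more subtle one, since the expansion of \eqref{Reward} couples the water-filling rate $\log(\gamma/\lambdapst)$ with a chain of ``reach'' probabilities; I would bound each reach probability by $1$ and control the per-channel rate through $\lambdapst\leq\lambdapmax$, taking care that the resulting estimate genuinely over-estimates $U_2^*$ rather than merely resembling it. A secondary point to tidy up is the degenerate case $\lambdadst=0$ (inactive delay constraint), where the claimed strict bound holds trivially provided $\lambdadmax>0$; this positivity is inherited from the active case, in which the right-hand side of the identity is strictly positive.
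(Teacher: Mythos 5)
Your proposal is correct and takes essentially the same route as the paper's own proof: both arguments use complementary slackness together with the recursion \eqref{Prob_recursive} to obtain $\theta_1\ccdf\lb\gammathst{1}\rb \leq \invDmaxline$, hence $\gammathst{1}\geq\ccdf^{-1}\lb 1/(\theta_1\Dmax)\rb$, then feed this into the $i=1$ threshold-finding condition and bound $U_2^*$, $\lambdapst S_2^*$, $p_2^*$ and the logarithmic term exactly as you do. The only difference is presentational: you manipulate \eqref{gamma_i_Equation} directly via monotonicity of $g(x)=\log x-1+1/x$ on $[1,\infty)$, whereas the paper substitutes the Lambert-W form \eqref{Gamma_Solution_Lambert_W} into the same inequality and ``rearranges'' (an identical computation), and your explicit treatment of the degenerate case $\lambdadst=0$, of the division step, and of strictness is, if anything, more careful than the paper's sketch, which — like your proposal — asserts $U_2^*<\Utwomax$ and $p_2^*<\ptwomax$ without detailed verification.
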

\begin{proof}
See Appendix \ref{Apdx_Lma_lambdadst_Bound}.
\end{proof}
Lemma \ref{Lma_lambdadst_Bound} gives an upper bound on $\lambdadst$. This bound decreases the search space of $\lambdadst$ drastically instead of searching over $\mathbb{R}$. Thus the solution of problem (\ref{Prob_Opt_Pow_Control}) can be summarized on 3 steps: 1) Fix $\lambdadst \in [0,\lambdadmax)$ and find the corresponding optimum $\lambdapst$ using Algorithm \ref{Alg_lambda_P_lambda_D}. 2) Substitute the pair $(\lambdapst,\lambdadst)$ in equations (\ref{Water_Filling}) and (\ref{Gamma_Solution_Lambert_W}) to get the power and threshold functions, then evaluate $U_1^*$ from (\ref{Reward}). 3) Repeat steps 1 and 2 for other values of $\lambdadst$ until reaching the optimum $\lambdadst$ that satisfies $p_1^*=\invDmaxline$. If there are multiple $\lambdadst$'s satisfying $p_1^*=\invDmaxline$, then the optimum one is the one that gives the highest $U_1^*$.

Although the order by which the channels are sensed is assumed fixed, the proposed algorithm can be modified to optimize over the sensing order by a relatively low complexity sorting algorithm. Particularly, the dynamic programming proposed in \cite{Sensing_Order_Poor} can be called by Algorithm \ref{Alg_lambda_P_lambda_D} to order the channels. The complexity of the sorting algorithm alone is $O(2^M)$ compared to the $O(M!)$ of the exhaustive search to sort the $M$ channels. The modification to our proposed algorithm would be in step 3 of Algorithm \ref{Alg_lambda_P_lambda_D}, where $S_1^*$ would be optimized over the number of channels (as well as $\Gammast{1}$).

\subsection{Optimality of the Proposed Solution}
\label{Optimality_of_Approach}
Since the problem in (\ref{Prob_Opt_Pow_Control}) is not proven to be convex, the KKT conditions provide only necessary conditions for optimality and need not be sufficient \cite{LinNonlinProg_Luenberger}. This means that there might exist multiple solutions (i.e. multiple solutions for the primal and/or dual variables) satisfying the KKT conditions, at least one of which is optimal. But since Theorem \ref{Thm_Unique_Solution_S} proves that there exists one unique solution to $\lambdapst$ given $\lambdadst$, then $\Gammast{1}$ and $\Pst{1}$ are unique as well (from equations (\ref{Water_Filling}) and (\ref{Gamma_Solution_Lambert_W})) given some $\lambdadst$. Hence, by sweeping $\lambdadst$ over $[0,\lambdadmax)$, we have a unique solution satisfying the KKT conditions, which means that the KKT conditions are sufficient as well and our approach is optimal for problem (\ref{Prob_Opt_Pow_Control}).

\section{Generalization of Deadline Constraints}
\label{General_Delay}
In the overlay and underlay schemes discussed thus far, we were assuming that each packet has a hard deadline of one time slot. If a packet is not delivered as soon as it arrives at the ST, then it is dropped from the system. But in real-time applications, data arrives at the ST's buffer on a frame-by-frame structure. Meaning multiple packets (constituting the same frame) arrive simultaneously rather than one at a time. A frame consists of a fixed number of packets, and each packet fits into exactly one time slot of duration $\Ts$. Each frame has its own deadline and thus, packets belonging to the same frame have the same deadline \cite{Adaptive_NC_Deadline}. This deadline represents the maximum number of time slots that the packets, belonging to the same frame, need to be transmitted by, on average.

In this section we solve this problem for the overlay scenario. The solution presented in Section \ref{Problem_Formulation} can be thought of as a special case of the problem presented in this section where the deadline was equal to $1$ time slot and each frame consists of one packet. We show that the solution presented in Section \ref{Problem_Formulation} can be used to solve this generalized problem in an offline fashion (i.e. before attempting to transmit any packet of the frame). Moreover, we propose an online update algorithm that updates the thresholds and power functions each time slot and show that this outperforms the offline solution.


\subsection{Offline Solution}
Assume that each frame consists of $K$ packets and that the entire frame has a deadline of $t_f$ time slots ($t_f>K$). If the SU does not succeed in transmitting the $K$ packets before the $t_f$ time slots, then the whole frame is considered wasted. Since instantaneous channel gains and PU's activities are independent across time slots, the probability that the SU succeeds in transmitting the frame in $t_f$ time slots or less is given by
\begin{equation}
\pframe \lb K,t_f \rb =\sum_{n=K}^{t_f} { \binom{t_f}{n} p^n \lb 1-p \rb^{t_f-n}}
\label{Binomial}
\end{equation}
where $p$ is the probability of transmitting a packet on some channel in a single time slot and is given by (\ref{Prob_recursive}) or (\ref{Prob_recursive_Soft}) if the SU's system was overlay or underlay respectively. $\pframe \lb K,t_f \rb$ represents the probability of finding $K$ or more free time slots out of a total of $t_f$ time slots.



In order to guarantee some QoS for the real-time data the SU needs to keep the probability of successful frame transmission above a minimum value denoted $\rmin$, that is $\pframe \geq \rmin$. Hence the problem becomes a throughput maximization problem subject to some average power and QoS constraints as follows
\begin{equation}
\begin{array}{ll}
\rm{maximize}& U_1(\bfGamma(1),\bfP{1,1})\\
\label{Optim_Prob_Offline}
\rm{subject \; to} &S_1(\bfGamma(1),\bfP{1,1}) \leq \Pavg\\
& \pframe (K,t_f) \geq \rmin\\
\rm{variables} & \bfGamma(1),\bfP{1,1}.
\end{array}
\end{equation}
This is the optimization problem assuming an overlay system since we used equations (\ref{Reward}) and (\ref{Average_Power}) for the throughput and power, respectively. It can also be modified systematically to the case of an underlay system. Since there exists a one-to-one mapping between $\pframe(K,t_f)$ and $p$, then there exists a value for $\Dmax$ such that the inequality $p \geq \invDmaxline$ is equivalent to the QoS inequality $\pframe \lb K,t_f \rb \geq \rmin$. That is, we can replace inequality $\pframe(K,t_f) \geq \rmin$ by $p \geq \invDmaxline$ for some $\Dmax$ that depends on $\rmin$, $K$ and $t_f$ that are known a priori. Consequently, problem (\ref{Optim_Prob_Offline}) is reduced to the simpler, yet equivalent, single-time-slot problem (\ref{Prob_Opt_Pow_Control}) and the SU can solve for $\Pst{1}$ and $\Gammast{1}$ vectors following the approach proposed in Section \ref{Problem_Formulation}. The SU solves this problem offline (i.e. before the beginning of the frame transmission) and uses this solution each time slot of the $t_f$ time slots. With this offline scheme, the SU will be able to meet the QoS and the average power constraint requirements as well as maximizing its throughput.

\subsection{Online Power-and-Threshold Adaptation}
In problem (\ref{Prob_Opt_Pow_Control}), we have seen that as $\invDmaxline$ decreases, the system becomes less stringent in terms of the delay constraint. This results in an increase in the average throughput $U_1^*$. With this in mind, let us assume, in the generalized delay model, that at time slot $1$ the SU succeeds in transmitting a packet. Thus, at time slot $2$ the SU has $K-1$ remaining packets to be transmitted in $t_f-1$ time slots. And from the properties of equation (\ref{Binomial}), $\pframe (K-1,t_f-1) > \pframe (K,t_f)$. This means that the system becomes less stringent in terms of the QoS constraint after a successful packet transmission. This advantage appears in the form of higher throughput. To see how we can make use of this advantage, define $\pframe(K(t),t_f-t+1)$ as
\begin{align}
\nonumber &\pframe \lb K(t),t_f-t+1 \rb =\\
&\sum_{n=K(t)}^{t_f-t+1} { \binom{t_f-t+1}{n} \lb p(t) \rb^n \lb 1-p(t) \rb^{t_f-t+1-n}},
\label{Binomial_t}
\end{align}
where $K(t)$ is the remaining number of packets before time slot $t \in \{1,...,t_f\}$ and $p(t)$ is the probability of successful transmission at time slot $t$. At each time slot $\tinset$, the SU modifies the QoS constraint to be $\pframe(K(t),t_f-t+1) \geq \rmin$ instead of $\pframe(K,t_f) \geq \rmin$, that was used in the offline adaptation, and solve the following problem
\begin{equation}
\begin{array}{ll}
\rm{maximize}& U_1(\bfGamma(1),\bfP{1,1})\\
\label{Optim_Prob_Online}
\rm{subject \; to} &S_1(\bfGamma(1),\bfP{1,1}) \leq \Pavg\\
& \pframe (K(t),t_f-t+1) \geq \rmin\\
\rm{variables} & \bfGamma(1),\bfP{1,1},
\end{array}
\end{equation}
to obtain the power and threshold vectors. When the delay constraint in (\ref{Optim_Prob_Online}) is replaced by its equivalent constraint $p \geq \invDmaxline$, the resulting problem can be solved using the overlay approach proposed in Section \ref{Problem_Formulation} without much increase in computational complexity since the power functions and thresholds are given in closed-form expressions. With this online adaptation, the average throughput $U_1^*$ increases while still satisfying the QoS constraint.

\section{Underlay System}
\label{Underlay}
In the overlay system, the SU tries to locate the free channels at each time slot to access these spectrum holes without interfering with the PUs. Recently, the FCC has allowed the SUs to interfere with the PU's network as long as this interference does not harm the PUs \cite{FCC_Interference_Threshold}. If the interference from the SU measured at the PU's receiver is below the tolerable level, then the interference is deemed acceptable.

In order to model the interference at the PR, we assume that the SU uses a channel sensing technique that produces the sufficient statistic $z_i$ at channel $i$ \cite{Spectrum_Sensing_Survey_Poor, Spectrum_Sensing_Survey}. The SU is assumed to know the distribution of $z_i$ given channel $i$ is free and busy, namely $f_{z \vert b} \lb z_i \vert b_i=0 \rb$ and $f_{z \vert b} \lb z_i \vert b_i=1 \rb$ respectively. For brevity, we omit the subscript $i$ from $b_i$ whenever it is clear from the context.
The value of $z_i$ indicates how confident the SU is in the presence of the PU at channel $i$. Thus the SU stops at channel $i$ according to how likely busy it is and how much data rate it will gain from this channel (i.e. according to $z_i$ and $\gamma_i$ respectively). Hence, when the SU senses channel $i$ to acquire $z_i$, the channel gain $\gamma_i$ is probed and compared to some function $\gamma_{\rm th}(i,z_i)$; if $\gamma_i\geq\gamma_{\rm th}(i,z_i)$ transmission occurs on channel $i$, otherwise, channel $i$ is skipped and $i+1$ is sensed. Potentially, $\gamma_{\rm th}(i,z_i)$ is a function in the statistic $z_i$. This means that, at channel $i$, for each possible value that $z_i$ might take, there is a corresponding threshold $\gammaz{i}$. Before formulating the problem we note that this model can capture the overlay with sensing errors model as a special case where $\fzb=(1-\pmd)\delta(z-\zb)+\pmd\delta(z-\zf)$ while $\fzf=\pfa\delta(z-\zb)+(1-\pfa)\delta(z-\zf)$, where $\pmd$ and $\pfa$ are the probabilities of missed-detection and false-alarm respectively, while $\delta(z)$ is the Dirac delta function, and $\zb$ and $\zf$ that represent the values that $z$ takes when the channel is busy and free, respectively. Hence, the interference constraint, which will soon be described, can be modified to a detection probability constraint and/or a false alarm probability constraint.

The SU's expected throughput is given by $\Usoft_1(\GammaS{1},\bfPS{1})$ which can be calculated recursively from
\begin{equation}
\begin{array}{ll}
\Usoft_i&(\GammaS{i},\bfPS{i})=\\
&c_i \int_{-\infty}^{\infty}{\int_{\gammaz{i}}^{\infty}{\log(1+\PS{i}\gamma) f_{\gamma}(\gamma)} \, d\gamma f_z(z)} \, dz +
\\ &\pskip{i} \Usoft_{i+1}(\GammaS{i+1},\bfPS{i+1}), \hspace{0.2in} i\in \script{M},
\label{Reward_Soft}
\end{array}
\end{equation}
where $\Usoft_{M+1}(\GammaS{M+1},\bfPS{M+1}) \triangleq 0$, $\GammaS{i}\triangleq [\gammaz{i},...,\gammaz{M}]^T$, $f_z(z) \triangleq \theta_i \fzf + (1-\theta_i) \fzb$ is the PDF of the random variable $z_i$ and $\pskip{i} \triangleq \int_{-\infty}^{\infty}{\int_0^{\gammaz{i}}{f_{\gamma}(\gamma)} \, d\gamma f_z(z)} \, dz$. The first term in (\ref{Reward_Soft}) is the SU's throughput at channel $i$ averaged over all realizations of $z_i$ and that of $\gamma_i \geq \gammaz{i}$. The second term is the average throughput when the SU skips channel $i$ due to its low gain. Also, let the average interference from the SU's transmitter to the PU's receiver, aggregated over all $M$ channels, be $\Isoft_1(\GammaS{1},\bfPS{1})$. This represents the total interference affecting the PU's network due to the existence of the SU. The SU is responsible for guaranteeing that this interference does not exceed a threshold $\Iavg$ dictated by the PU's network. $\Isoft_1(\GammaS{1},\bfPS{1})$ can be derived using the following recursive formula
\begin{equation}
\begin{array}{ll}
&\Isoft_i(\GammaS{i},\bfPS{i})=\\
&\left( 1- \theta_i \right) c_i \int_{-\infty}^{\infty}{\int_{\gammaz{i}}^{\infty}{\PS{i} f_{\gamma}(\gamma)} \, d\gamma f_{z \vert b} \lb z \vert b_i=1 \rb} \, dz \\
&+ \pskip{i}\Isoft_{i+1}(\GammaS{i+1},\bfPS{i+1}), \hspace{0.3in} i\in \script{M},
\label{Interference_Soft}
\end{array}
\end{equation}
where $\Isoft_{M+1}(\GammaS{M+1},\bfPS{M+1}) \triangleq 0$. This interference model is based on the assumption that the channel gain from the SU's transmitter to the PU's receiver is known at the SU's transmitter. This is the case for reciprocal channels when the PR acts as a transmitter and transmits training data to its intended primary transmitter (when it is acting as a receiver)~\cite{Wireless_Comm_Tse}. The ST overhears this training data and estimates the channel from itself to the PR. Moreover, the gain at each channel from the ST to the PR is assumed unity for presentation simplicity. This could be extended easily to the case of non-unity-gain by multiplying the first term in (\ref{Interference_Soft}) by the gain from the ST to the PR at channel $i$. Finally, $\psoft_1(\GammaS{1})$ is the probability of a successful transmission in the current time slot and can be calculated using
\begin{equation}
\begin{array}{ll}
	\psoft_i(\GammaS{i})=&\int_{-\infty}^{\infty}{\int_{\gammaz{i}}^{\infty}{f_{\gamma}(\gamma)} \, d\gamma f_z(z)} \, dz +\\
	&\pskip{i}\psoft_{i+1}(\GammaS{i+1}),
\label{Prob_recursive_Soft}
\end{array}
\end{equation}
$i\in \script{M}$, $\psoft_{M+1}(\GammaS{M+1}) \triangleq 0$. Given this background, the problem is
\begin{equation}
\begin{array}{ll}
\rm{maximize}& \Usoft_1(\GammaS{1},\bfPS{1})\\
\label{Prob_Opt_Pow_Control_Soft}
\rm{subject \; to} &\Isoft_1(\GammaS{1},\bfPS{1}) \leq \Iavg\\
&\psoft_1(\GammaS{1}) \geq \invDmax\\
\rm{variables} & \GammaS{1},\bfPS{1},
\end{array}
\end{equation}
Let $\lambdai$ and $\lambdad$ be the Lagrange multipliers associated with the interference and delay constraints of problem (\ref{Prob_Opt_Pow_Control_Soft}), respectively. 
Problem (\ref{Prob_Opt_Pow_Control_Soft}) is more challenging compared to the overlay case. This is because, unlike in (\ref{Prob_Opt_Pow_Control}), the thresholds in (\ref{Prob_Opt_Pow_Control_Soft}) are functions rather than constants. The KKT conditions for \eqref{Prob_Opt_Pow_Control_Soft} are given by
\begin{align}
\label{Water_Filling_Soft}
	&\PSst{i}=\left( \frac{1}{\lambdaist {\rm Pr}\left[b_i=1 \vert z \right]} - \frac{1}{\gamma}\right)^+ \hspace{0.5cm},\hspace{0.5cm} i\in \script{M}.\\
	\label{gammaS_i}
\nonumber &\gammazst{i}=\\
&\frac{-\lambdaist {\rm Pr}\left[b_i=1 \vert z \right]}{W_0 \left( -\exp \left(-\frac{\left(\Usoftst_{i+1} - \lambdaist \Isoftst_{i+1} - \lambdadst \left(1-\psoftst_{i+1} \right) \right)^+}{c_i}-1\right) \right)}, \hspace{0.25cm} i\in \script{M},
\end{align}
in addition to the primal feasibility, dual feasibility and the complementary slackness equations given in \eqref{Primal_Dual_Feasible}, \eqref{Comp_Slackness_Power} and \eqref{Comp_Slackness_Delay}, where $\Usoftst_{i+1} \triangleq \Usoft_1  \left( \Gammasst{1},\PSst{1} \right)$, $\Isoftst_{i+1} \triangleq \Isoft_1  \left( \Gammasst{1},\PSst{1} \right)$ and $\psoftst_{i+1} \triangleq \psoft_1  \lb \Gammasst{1}\rb$ while ${\rm Pr}\left[b_i=1 \vert z \right]$ is the conditional probability that channel $i$ is busy given $z_i$ and is given by
\begin{equation}
\label{Cond_Prob_i_given_z_i}
{\rm Pr}\left[b_i=1 \vert z \right]=\frac{\left( 1-\theta_i \right) f_{z \vert b} \lb z \vert b_i=1 \rb}{f_z \left( z \right)}.
\end{equation}
Note that $\PSst{i}$ is increasing in $\gamma$ and is upper bounded by the term $1 / \lb \lambdaist {\rm Pr}\left[b_i=1 \vert z \right] \rb$. Hence, as ${\rm Pr}\left[b_i=1 \vert z \right]$ increases, the SU's maximum power becomes more limited, i.e. the maximum power decreases. This is because the PU is more likely to be occupying channel $i$. Thus the power transmitted from the SU should decrease in order to protect the PU.


Algorithm \ref{Alg_lambda_P_lambda_D} can also be used to find $\lambdaist$. Only a single modification is required in the algorithm which is that $S_1^*$ would be replaced by $\Isoftst_1$. Thus the solution of problem (\ref{Prob_Opt_Pow_Control_Soft}) can be summarized on 3 steps: 1) Fix $\lambdadst \in \mathbb{R}^+$ and find the corresponding optimum $\lambdaist$ using the modified version of Algorithm \ref{Alg_lambda_P_lambda_D}. 2) Substitute the pair $(\lambdaist,\lambdadst)$ in equations (\ref{Water_Filling_Soft}) and (\ref{gammaS_i}) to get the power and threshold functions, then evaluate $\Usoftst_1$ from (\ref{Reward_Soft}). 3) Repeat steps 1 and 2 for other values of $\lambdadst$ until reaching the optimum $\lambdadst$ that satisfies $\psoftst_1=\invDmaxline$ and if there are multiple $\lambdadst$'s satisfying $p_1^*=\invDmaxline$, then the optimum one is the one that gives the highest $\Usoftst_1$. This approach yields the optimal solution. Next, Theorem \ref{Thm_Unique_Solution_I} asserts the monotonicity of $\Isoftst_1$ in $\lambdaist$ which allows using the bisection to find $\lambdaist$ given some fixed $\lambdadst$.
\begin{thm}
\label{Thm_Unique_Solution_I}
$\Isoftst_1$ is decreasing in $\lambdaist \in [0,\infty)$ given some fixed $\lambdadst \geq 0$.
\end{thm}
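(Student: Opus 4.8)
The plan is to sidestep any direct analysis of the threshold functions $\gammazst{i}$---whose dependence on $\lambdaist$ through \eqref{gammaS_i} is analytically unwieldy---and to obtain monotonicity by a revealed-preference argument on the Lagrangian-relaxed problem. For fixed $\lambdadst \geq 0$, write the relaxed inner objective as $g_\lambda \triangleq \Usoft_1\lb \GammaS{1},\bfPS{1}\rb - \lambda \Isoft_1\lb \GammaS{1},\bfPS{1}\rb + \lambdadst \psoft_1\lb \GammaS{1}\rb$, with $\lambda$ playing the role of $\lambdaist$. The first step is to establish that the pair $\lb \Gammasst{1},\PSst{1}\rb$ produced by \eqref{Water_Filling_Soft}--\eqref{gammaS_i} is a \emph{global} maximizer of $g_\lambda$ over the admissible set of threshold/power pairs (a set that does not depend on $\lambda$), so that the interference evaluated at this pair is precisely $\Isoftst_1(\lambda)$.

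Granting this, monotonicity is immediate. Take $\lambda_1 < \lambda_2$, let solutions $1$ and $2$ be the corresponding global maximizers of $g_{\lambda_1}$ and $g_{\lambda_2}$, and abbreviate by $U_j$, $I_j$, $p_j$ the values of $\Usoft_1$, $\Isoft_1$, $\psoft_1$ at solution $j$. Since each solution is admissible in the other's problem, global optimality yields $U_1 - \lambda_1 I_1 + \lambdadst p_1 \geq U_2 - \lambda_1 I_2 + \lambdadst p_2$ and $U_2 - \lambda_2 I_2 + \lambdadst p_2 \geq U_1 - \lambda_2 I_1 + \lambdadst p_1$. Adding the two inequalities cancels every $U_j$ and every $\lambdadst p_j$ term---the success-probability penalty enters both problems with the same fixed coefficient $\lambdadst$ and so drops out---and collapses the whole thing to $\lb \lambda_2 - \lambda_1 \rb \lb I_1 - I_2 \rb \geq 0$. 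Hence $I_1 \geq I_2$, i.e. $\Isoftst_1$ is non-increasing in $\lambdaist$; strictness on the region where transmission occurs with positive probability follows from the strict decrease of the water level $1/\lb \lambdaist \Prob{b_i=1 \vert z} \rb$ in \eqref{Water_Filling_Soft}.

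The substance of the argument---and the step I expect to be the main obstacle---is the global-optimality claim of the first paragraph, because problem \eqref{Prob_Opt_Pow_Control_Soft} is not known to be convex and the KKT system \eqref{Water_Filling_Soft}--\eqref{gammaS_i} only certifies stationarity. The device I would use is to read $g_\lambda$ as the expected payoff of a finite-horizon optimal stopping problem over the channel index $i = M,\dots,1$: stopping on channel $i$ at gain $\gamma$ and statistic $z$ yields a power-optimized reward plus the fixed bonus $\lambdadst$ charged to a successful transmission, while skipping forfeits the current channel and passes control to $i+1$. For a frozen threshold vector the power maximization is a concave water-filling program with optimizer \eqref{Water_Filling_Soft}, and the stop-versus-continue comparison against the value-to-go $\Usoftst_{i+1} - \lambdaist \Isoftst_{i+1} - \lambdadst \lb 1 - \psoftst_{i+1} \rb$ is resolved precisely by the threshold \eqref{gammaS_i} (the identity $W_0\lb -e^{-1} \rb = -1$ pinning down the terminal threshold $\gammazst{M} = \lambdaist \Prob{b_M=1 \vert z}$). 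Backward induction therefore returns the genuine maximum of $g_\lambda$ rather than a mere critical point, which is exactly the global optimality the two-point argument consumes; this is the same optimal-stopping structure that underlies the uniqueness and optimality discussion of Section~\ref{Optimality_of_Approach} and the companion Theorem~\ref{Thm_Unique_Solution_S}.

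A direct route paralleling the overlay proof is also available but messier: induct on the recursion \eqref{Interference_Soft} to show $\Isoftst_i$ is decreasing in $\lambdaist$ from $i=M$ downward. The base case rests on $\gammazst{M} = \lambdaist \Prob{b_M=1 \vert z}$ together with the elementary fact that $a \mapsto \int_a^\infty \lb 1/a - 1/\gamma \rb f_\gamma(\gamma)\,d\gamma$ has derivative $-\bar{F}_\gamma(a)/a^2 < 0$. The inductive step, however, requires signing $\tfrac{d}{d\lambdaist}\lb \Usoftst_{i+1} - \lambdaist \Isoftst_{i+1} - \lambdadst\lb 1 - \psoftst_{i+1} \rb \rb$ in order to control the drift of $\gammazst{i}$, and the middle term contributes $-\Isoftst_{i+1} - \lambdaist \tfrac{d \Isoftst_{i+1}}{d \lambdaist}$, whose two summands carry opposite signs under the induction hypothesis. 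This sign ambiguity is exactly what the revealed-preference argument avoids, which is why I would adopt the latter as the primary proof.
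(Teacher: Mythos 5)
The gap is in the first step of your primary argument, and it is fatal for every $\lambdadst>0$ (the theorem, and the bisection it justifies, must hold for each fixed $\lambdadst$ swept over $[0,\lambdadmax)$, so this regime cannot be discarded). The pair defined by \eqref{Water_Filling_Soft}--\eqref{gammaS_i} is \emph{not} a global maximizer of $g_\lambda=\Usoft_1-\lambda \Isoft_1+\lambdadst\psoft_1$ when $\lambdadst>0$. The culprit is the $(\cdot)^+$ clipping in \eqref{gammaS_i}: in the relaxed objective, a ``success'' in \eqref{Prob_recursive_Soft} is just the event $\gamma_i\geq\gammazst{i}$, so stopping earns the bonus $\lambdadst$ even if the transmitted power is zero. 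Hence whenever the value-to-go $\Usoftst_{i+1}-\lambdaist\Isoftst_{i+1}-\lambdadst\lb 1-\psoftst_{i+1}\rb$ is negative, the stop-versus-continue indifference equation has no solution and the true dynamic-programming threshold is $0$ (stop at every gain and collect $\lambdadst$), not the water level $\lambdaist\Prob{b_i=1 \vert z}$ that the clipped formula returns. This is not an exotic regime: at $i=M$ the terminal values vanish, so the value-to-go equals $-\lambdadst<0$ for every $\lambdadst>0$, and the policy with $\gammazst{M}=0$ strictly beats the formula's policy (by $\lambdadst$ times the probability of reaching channel $M$ with $\gamma_M$ below the water level). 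Your parenthetical that $W_0(-e^{-1})=-1$ ``pins down the terminal threshold'' as the water level is exactly the step where backward induction and the KKT formula part ways. The natural repair --- restricting the admissible class to thresholds at or above the water level (the class singled out by Lemma \ref{Lma_Stochastic_Dominance}), within which the formula pair \emph{is} DP-optimal --- destroys the other pillar of the revealed-preference argument: that class depends on $\lambdaist$, the two solutions are no longer cross-feasible (the $\lambda_1$-solution's terminal threshold $\lambda_1\Prob{b_M=1 \vert z}$ lies strictly below the $\lambda_2$ water level), so only one of your two inequalities survives and $(\lambda_2-\lambda_1)(I_1-I_2)\geq0$ no longer follows. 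Your argument is sound at $\lambdadst=0$ but not beyond; a secondary issue is that even where it applies it yields only non-increasing, while the strictness the bisection needs is asserted rather than proved.

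The paper takes precisely the ``messier'' direct route you set aside, and the sign ambiguity that made you abandon it is resolved by a cancellation you did not exploit. In Appendix \ref{Apdx_Unique_Solution_S} (for Theorem \ref{Thm_Unique_Solution_S}; the proof of Theorem \ref{Thm_Unique_Solution_I} repeats it with $\Usoft,\Isoft,\psoft$ from \eqref{Reward_Soft}, \eqref{Interference_Soft}, \eqref{Prob_recursive_Soft}), the derivatives of the three recursions are combined to show $\dUdlamp{i}-\lambdapst\dSdlamp{i}+\lambdadst\dpdlamp{i}=0$ for all $i\in\script{M}$. This envelope-type identity is what kills the term you could not sign: the derivative of the value-to-go collapses to $-S_{i+1}^*$ (respectively $-\Isoftst_{i+1}$ in the underlay), with no stray $\lambdapst\,\dSdlamp{i+1}$ left over. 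It then yields the threshold drift in closed form, and back-substitution turns the recursion for $\dSdlamp{i}$ into a sum of manifestly negative terms --- the troublesome cross term reappears as $-\alpha_i\lb c_i\Pist{i}-S_{i+1}^*\rb^2$, a nonpositive multiple of a square --- so induction from $i=M$ downward closes the proof. If you want to salvage your write-up, import that identity into your ``direct route''; the revealed-preference route cannot be repaired without changing what $\Isoftst_1$ means.
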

\begin{proof}
We differentiate $\Isoftst_1$ with respect to $\lambdaist$ given that $\PSst{i}$ and $\gammazst{i}$ are given by equations \eqref{Water_Filling_Soft} and \eqref{gammaS_i} respectively, then show that this derivative is negative. The proof is omitted since it follows the same lines of Theorem \ref{Thm_Unique_Solution_S}. 
\end{proof}

Although the interference power constraint is sufficient for the problem to prevent the power functions from going to infinity, in some applications one may have an additional power constraint on the SUs. Hence, problem (\ref{Prob_Opt_Pow_Control_Soft}) can be modified to introduce an average power constraint that is given by $\Ssoft_1(\GammaS{1},\bfPS{1}) \leq \Pavg$ where
\begin{equation}
\begin{array}{ll}
\Ssoft_i(\GammaS{i},\bfPS{i})&=c_i \int_{-\infty}^{\infty}{\int_{\gammaz{i}}^{\infty}{\PS{i} f_{\gamma}(\gamma)} \, d\gamma f_z(z)} \, dz \\
&+ \pskip{i}\Ssoft_{i+1}(\GammaS{i+1},\bfPS{i+1}).
\label{Avg_Pow_Soft}
\end{array}
\end{equation}
It can be easily shown that the solution to the modified problem is similar to that presented in equations (\ref{Water_Filling_Soft}) and (\ref{gammaS_i}) which is
\begin{align}
\label{Water_Filling_Soft_Avg_Pow}
	&\PSst{i}=\left( \frac{1}{\lambdapst +\lambdaist {\rm Pr}\left[b_i=1 \vert z \right]} - \frac{1}{\gamma}\right)^+ \hspace{0.5cm},\\
	\label{gammaS_i_Avg_Pow}
\nonumber &\gammazst{i}=\\
&\frac{-\lb \lambdapst + \lambdaist {\rm Pr}\left[b_i=1 \vert z \right] \rb}{W_0 \left( -\exp \left(-\frac{\left(\Usoftst_{i+1} - \lambdaist \Isoftst_{i+1} -\lambdapst \Ssoftst_{i+1} - \lambdadst \left(1-\psoftst_{i+1} \right) \right)^+}{c_i}-1\right) \right)},
\end{align}
$\forall i\in \script{M}$ where $\Ssoft_i^* \triangleq \Ssoft_i(\Gammasst{i},\PSst{i})$. This solution is more general since it takes into account both the average interference and the average power constraint besides the delay constraint. Moreover, it allows for the case where the power constraint is inactive which happens if the PU has a strict average interference constraint. In this case the optimum solution would result in $\lambdapst=0$ making equations (\ref{Water_Filling_Soft_Avg_Pow}) and (\ref{gammaS_i_Avg_Pow}) identical to equations (\ref{Water_Filling_Soft}) and (\ref{gammaS_i}) respectively.

\section{Multiple Secondary Users}
\label{Multi_SU}
In this section, we show how our single SU framework can be extended to multiple SUs in a multiuser diversity framework without increase in the complexity of the algorithm. We will show that when the number of SUs is high, with slight modifications to the definitions of the throughput, power and probability of success, the single SU optimization problem in \eqref{Prob_Opt_Pow_Control} (or \eqref{Prob_Opt_Pow_Control_Soft}) can capture the multi-SU scenario. Moreover, the proposed solution for the overlay model still works for the multi-SU scenario. Finally, at the end of this section, we show that the proposed algorithm provides a throughput-optimal and delay-optimal solution with even a lower complexity for finding the thresholds compared to the single SU case, if the number of SUs is large.

Consider a CR network with $L$ SUs associated with a centralized secondary base station (BS) in a downlink overlay scenario. Before describing the system model, we would like to note that when we say that channel $i$ will be sensed, this means that each user will independently sense channel $i$ and feedback the sensing outcome to the BS to make a global decision. Although we neglect sensing errors in this section, the analysis will work similarly in the presence of sensing errors by using the underlay model. At the beginning of each time slot the $L$ SUs sense channel 1. If it is free, each SU observes it free with no errors and probes the instantaneous channel gain and feeds it back to the BS. The BS compares the maximum received channel gain among the $L$ received channel gains to $\gammath{1}$. Channel 1 is assigned to the user having the maximum channel gain if this maximum gain is higher than $\gammath{1}$, while the remaining $L-1$ users continue to sense channel 2. On the other hand if the maximum channel gain is less than $\gammath{1}$, channel 1 is skipped and channel 2 is sensed by all $L$ users. 
Unlike the case in the single SU scenario where only a single channel is claimed per time slot, in this multi-SU system, the BS can allocate more than one channel in one time slot such that each SU is not allocated more than one channel and each channel is not allocated to more than one SU. 
Based on this scheme, the expected per-SU throughput $U_1^L$ is calculated from
\begin{align}
\nonumber U_i^l=&\frac{\theta_i c_i}{l} \int_{\gammath{i}}^{\infty}{\log \lb 1+P_{1,i}(\gamma)\gamma \rb f_l(\gamma)} \, d\gamma+\\
 &\theta_i \bar{F}_l(\gammath{i})\left(1-\frac{1}{l} \right) U_{i+1}^{l-1} + \left(1-\theta_i \bar{F}_l(\gammath{i}) \right) U_{i+1}^l
\label{Reward_l_SUs}
\end{align}
$i\in \script{M}$ and $l \in\{L-i+1,...,L\}$ with initialization $U_{M+1}^l=0$. Here $f_l(\gamma)$ represents the density of the maximum gain among $l$ i.i.d. users' gains, while $\bar{F}_l(\gamma)$ is its complementary cumulative distribution function. We study the case where $L \gg M$, thus when a channel is allocated to a user we can assume that the remaining number of users is still $L$. Thus we approximate $l$ with $L$ $\forall l \in\{L-i,...,L\}$ and $\forall i \in \script{M}$. 
Similar to the the throughput derived in \eqref{Reward_l_SUs}, we could write the exact expressions for the per-SU average power and per-SU probability of transmission. And since $L \gg M$, we can approximate $S_i^l$ with $S_i^L$ and $p_i^l$ with $p_i^L$, $\forall l \in\{L-i+1,...,L\}$ and $\forall i \in \script{M}$. The per-SU expected throughput $U_1^L$, the average power $S_1^L$ and the probability of transmission $p_1^L$ can be derived from
\begin{align}
\nonumber U_i^L(\bfGamma(i),\bfP{1,i})=&\frac{\theta_i c_i}{L} \int_{\gammath{i}}^{\infty}{\log \lb 1+P_{1,i}(\gamma)\gamma \rb f_{L}(\gamma)} \, d\gamma +\\
&\left[1-\frac{\theta_i \bar{F}_{L}(\gammath{i})}{L} \right] U_{i+1}^L \lb \bfGamma(i+1),\bfP{i+1} \rb
\label{Reward_L_SU}\\
\nonumber S_i^L(\bfGamma(i),\bfP{1,i})=&\frac{\theta_i c_i}{L} \int_{\gammath{i}}^\infty{P_{1,i}(\gamma) f_{L}(\gamma) \,d\gamma}+ \\
&\left[1-\frac{\theta_i \bar{F}_{L}(\gammath{i})}{L} \right]S_{i+1}^L(\bfGamma(i+1),\bfP{i+1}),
\label{Average_Power_L_SU}\\
\nonumber p_i^L(\bfGamma(i))=&\frac{\theta_i}{L} \bar{F}_{L}(\gammath{i})+ \\
&\left[1-\frac{\theta_i \bar{F}_{L}(\gammath{i})}{L} \right]p_{i+1}^L(\bfGamma(i+1)),
\label{Prob_Trans_L_SU}
\end{align}
$i\in \script{M}$, respectively, with  $U_{M+1}^L=S_{M+1}^L=p_{M+1}^L=0$. To formulate the multi-SU optimization problem, we replace $U_1$, $S_1$ and $p_1$  in (\ref{Prob_Opt_Pow_Control}) with $U_1^L$, $S_1^L$ and $p_1^L$ derived in equations (\ref{Reward_L_SU}), (\ref{Average_Power_L_SU}) and (\ref{Prob_Trans_L_SU}), respectively. Taking the Lagrangian and following the same procedure as in Section \ref{Problem_Formulation}, we reach at the solution for $P_{1,i}^*$ and $\gammathst{i}$ as given by equations (\ref{Water_Filling}) and (\ref{Gamma_Solution_Lambert_W}) respectively. Hence, equations (\ref{Water_Filling}) and (\ref{Gamma_Solution_Lambert_W}) represent the optimal solution for the multi-SU scenario. The details are omitted since they follow those of the single SU case discussed in Section \ref{Problem_Formulation}.

Next we show that this solution is optimal with respect to the delay as well as the throughput when $L$ is large. We show this by studying the system after ignoring the delay constraint and show that the resulting solution of this system (which is what we refer to as the unconstrained problem) is a delay optimal one as well. The solution of the unconstrained problem is given by setting $\lambdadst=0$ in \eqref{Gamma_Solution_Lambert_W} arriving at
\begin{equation}
\gammathst{i}\vert_{\lambdadst=0}=\frac{-\lambdapst}{W_0 \left( -\exp \left(-\frac{\left(U_{i+1}^{L*} - \lambdapst S_{i+1}^{L*}\right)^+}{c_i}-1\right) \right)}.
\label{Gamma_Solution_Lambert_W_Thr_Opt}
\end{equation}
$\forall i \in \script{M}$. As the number of SUs increases, the per-user expected throughput $U_1^L$ decreases since these users share the total throughput. Moreover, $U_i^L$ decreases as well $\forall i \in \script{M}$ decreasing the value of $\gammathst{i}$ (from equation \eqref{Gamma_Solution_Lambert_W_Thr_Opt} until reaching its minimum (i.e. $\gammathst{i}=\lambdapst$) (the right-hand-side of \eqref{Gamma_Solution_Lambert_W_Thr_Opt} is minimum when its denominator is as much negative as possible. That is, when $W_0(x)=-1$ since $W_0(x)\geq -1$, $\forall x\in \mathbb{R}$) as $L \rightarrow \infty$. From \eqref{Prob_Trans_L_SU}, it can be easily shown that $p_1^L(\bfGamma(1))$ is monotonically decreasing in $\gammath{i}$ $\forall i\in \script{M}$. Thus the minimum possible average delay (corresponding to the maximum $p_1^L(\bfGamma(1))$) occurs when $\gammath{i}$ is at its minimum possible value for all $i \in \script{M}$. Consequently, having $\gammathst{i}=\lambdapst$ means that the system is at the optimum delay point. That is, the unconstrained problem cannot achieve any smaller delay with an additional delay constraint. Hence, the multi-SU problem, that is formulated by adding a delay constraint to the unconstrained problem, achieves the optimum delay performance when $L$ is asymptotically large.

Recall that the overall complexity of solution for the single SU case is due to three factors: 1) evaluating the Lambert W function in Algorithm \ref{Alg_lambda_P_lambda_D}, 2) the bisection algorithm in Algorithm \ref{Alg_lambda_P_lambda_D} and 3) the search over $\lambdad$. On the other hand, the complexity of solution for the multi-SU case decreases asymptotically (as $L\rightarrow \infty$). This is because of two reasons: 1) When $L \gg M$, $\gammathst{i} \rightarrow \lambdapst \forall i \in \script{M}$. Which means that we will not have to evaluate the Lambert W function in \eqref{Gamma_Solution_Lambert_W} but instead we set $\gammathst{i}=\lambdapst$, since $L\gg M$. 2) When $\gammathst{i} = \lambdapst$ there will be no need to find $\lambdadst$ since the delay is minimum (we recall that in the single SU case, we need to calculate $\lambdadst$ to substitute it in \eqref{Gamma_Solution_Lambert_W} to evaluate $\gammathst{i}$, but in the multi-SU case $\gammathst{i} = \lambdapst$).

\section{Numerical Results}
\label{Results_Seq}
We show the performance of the proposed solution for the overlay and underlay scenarios. The slot duration is taken to be unity (i.e. all time measurements are taken relative to the time slot duration), while $\tau=0.05\Ts$. Here, we use $M=10$ channels that suffer i.i.d. Rayleigh fading. The availability probability is taken as $\theta_i=0.05i$ throughout the simulations. The power gain $\gamma$ is exponentially distributed as $f_{\gamma} \left( \gamma \right)=\exp \left( \gamma / \bar{\gamma}\right) / \bar{\gamma}$ where $\bar{\gamma}$ is the average channel gain and is set to be $1$ unless otherwise specified.


Fig. \ref{Overlay_Fig_2_Unconst_Const_NoOSR_M10} plots the expected throughput $U_1^*$ for the overlay scenario after solving problem \eqref{Prob_Opt_Pow_Control}. $U_1^*$ is plotted using equation \eqref{Reward} that represents the average number of bits transmitted divided by the average time required to transmit those bits, taking into account the time wasted due to the blocked time slots. We plot $U_1^*$ with $\Dmax=1.02 \Ts$ and with $\Dmax=\infty$ (i.e. neglecting the delay constraint). We refer to the former problem as constrained problem, while to the latter as unconstrained problem. We also compare the performance to the non optimum stopping rule case (No-OSR) where the SU transmits at the first available channel. We expect the No-OSR case to have the best delay and the worst throughput performances. We can see that the unconstrained problem has the best throughput amongst all constrained problems.

Examining the constrained problem for different sensing orders of the channels, we observe that when the channels are sorted in an ascending order of $\theta_i$, the throughput is higher. This is because a channel $i$ has a higher chance of being skipped if put at the beginning of the order compared to the case if put at the end of the order. This is a property of the problem no matter how the channels are ordered, i.e. this property holds even if all channels have equal values of $\theta_i$. Hence, it is more favorable to put the high quality channels at the end of the sensing order so that they are not put in a position of being frequently skipped. However, this is not necessarily optimum order, which is out of the scope of this work and is left as a future work for this delay-constrained optimization problem.

We also plot the expected throughput of a simple stopping rule that we call $K$-out-of-$M$ scheme, where we choose the highest $K$ channels in availability probability and ignore the remaining channels as if they do not exist in the system. The SU senses those $K$ channels sequentially, probes the gain of each free channel, if any, and transmits on the channel with the highest gain. This scheme has a constant fraction $K\tau/\Ts$ of time wasted each slot. Yet it has the advantage of choosing the best channel among multiple available ones. In Fig. \ref{Overlay_Fig_2_Unconst_Const_NoOSR_M10} we can see that the degradation of the throughput when $K=5$ compared to the optimal stopping rule scheme. The reason is two-fold: 1) Due to the constant wasted fraction of time, and 2) Ignoring the remaining channels that could potentially be free with a high gain if they were considered as opposed to the constrained problem.

The delay is shown in Fig. \ref{Overlay_Fig_3_Unconst_Const_NoOSR_M10} for both the constrained and the unconstrained problems. We see that the unconstrained problem suffers around $6\%$ increase in the delay, at $\Pavg=10$, compared to the constrained one.

\begin{figure}
	\centering
	\includegraphics[width=1\columnwidth]{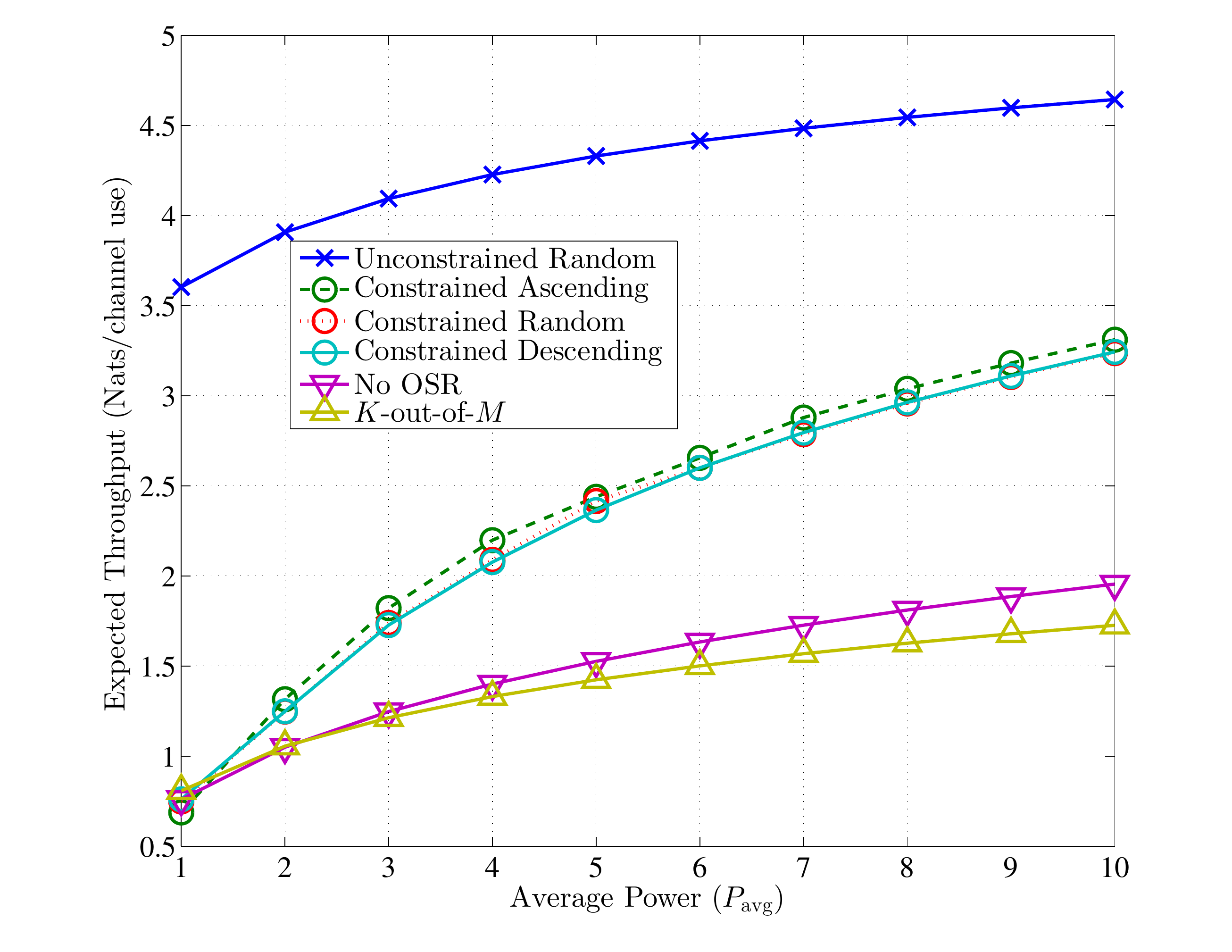}
		\caption{The expected throughput for the overlay scenario for four cases: 1) {\bf Proposed constrained problem}: with average delay constraint for three channel ordering possibilities (ascending ordering of channel availability probabilities, descending ordering, and random ordering), 2) {\bf Unconstrained problem} that ignores the delay constraint, 3) {\bf No optimum stopping rule (No-OSR)} where the SU transmits at the first free channel  and 4) {\bf $K$-out-of-$M$ scheme} where the SU assumes the system has only $K=5$ channels and ignores the remaining $M-K$ channels.}
	\label{Overlay_Fig_2_Unconst_Const_NoOSR_M10}
\end{figure}

\begin{figure}
	\centering
	\includegraphics[width=1\columnwidth]{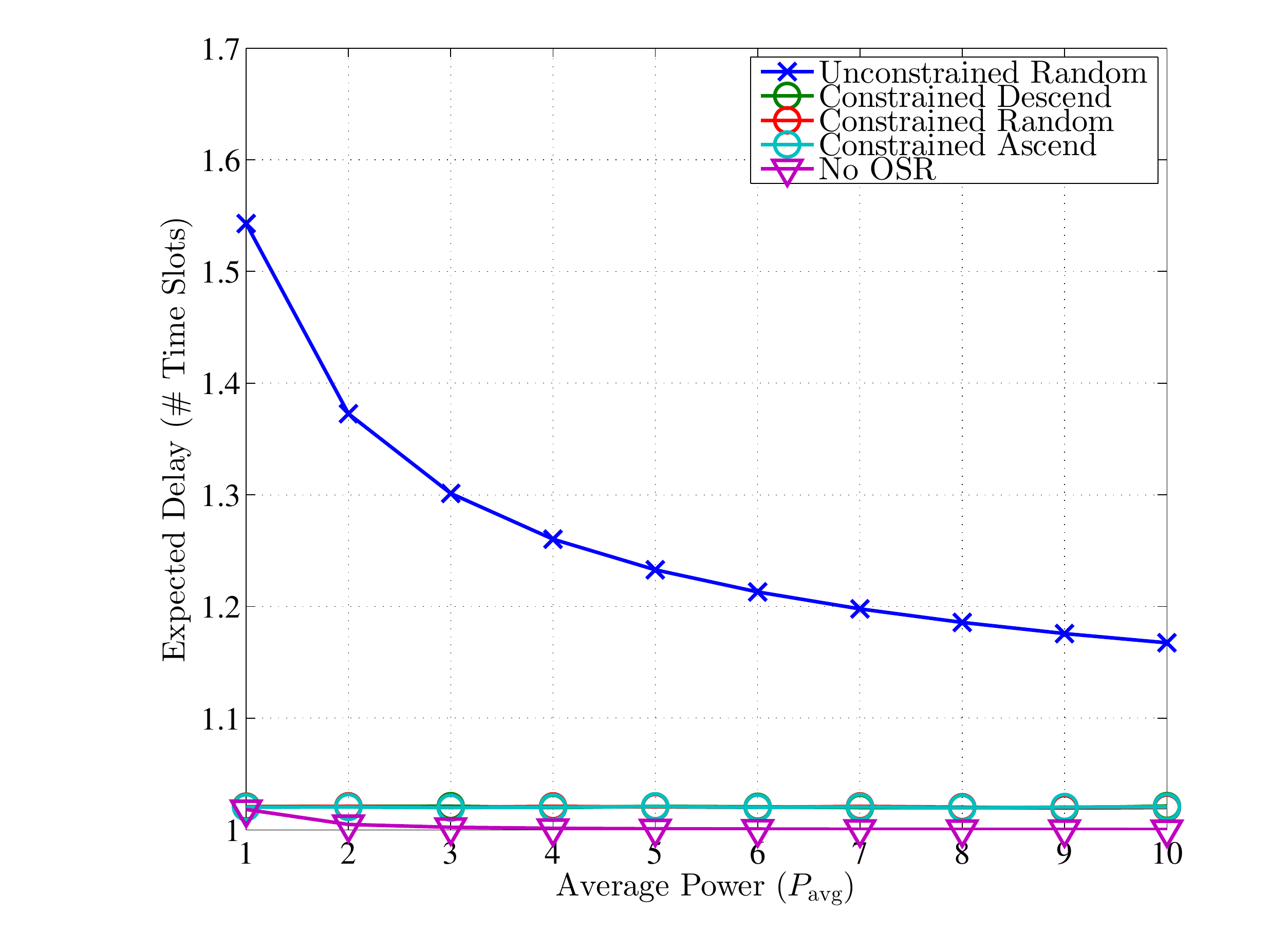}
		\caption{The expected delay for the overlay scenario for problem (\ref{Prob_Opt_Pow_Control}). The unconstrained problem can suffer arbitrary high delay. The constrained problem has a guaranteed average delay for all ordering strategies. The No-OSR scenario, on the other hand, has the best delay performance since the SU uses the first free channel.}
	\label{Overlay_Fig_3_Unconst_Const_NoOSR_M10}
\end{figure}

Studying the system performance under low average channel gain is essential. A low average channel gain represents a SU's channel being in a permanent deep fade or if there is a relatively high interference level at the secondary receiver. Fig. \ref{Overlay_Fig4_Optimum_Threshold} shows $\gammathst{i}$ versus the $\bar{\gamma}$. At low $\bar{\gamma}$, the throughput is expected to be small, hence $\gammathst{i}$ is close to its minimum value $\lambdapst$ so that even if $\gamma_i$ is relatively small, $i$ should not be skipped. In other words, at low average channel gain, the expected throughput is small, thus a relatively low instantaneous gain will be satisfactory for stopping at channel $i$. While when the average channel gain increases, $\gammathst{i}$ should increase so that only high instantaneous gains should lead to stopping at channel $i$. In both cases, high and low $\bar{\gamma}$ there still is a trade-off between choosing a high versus a low value of $\gammathst{i}$.
\begin{figure}[htbp]
	\centering
		\includegraphics[width=1\columnwidth]{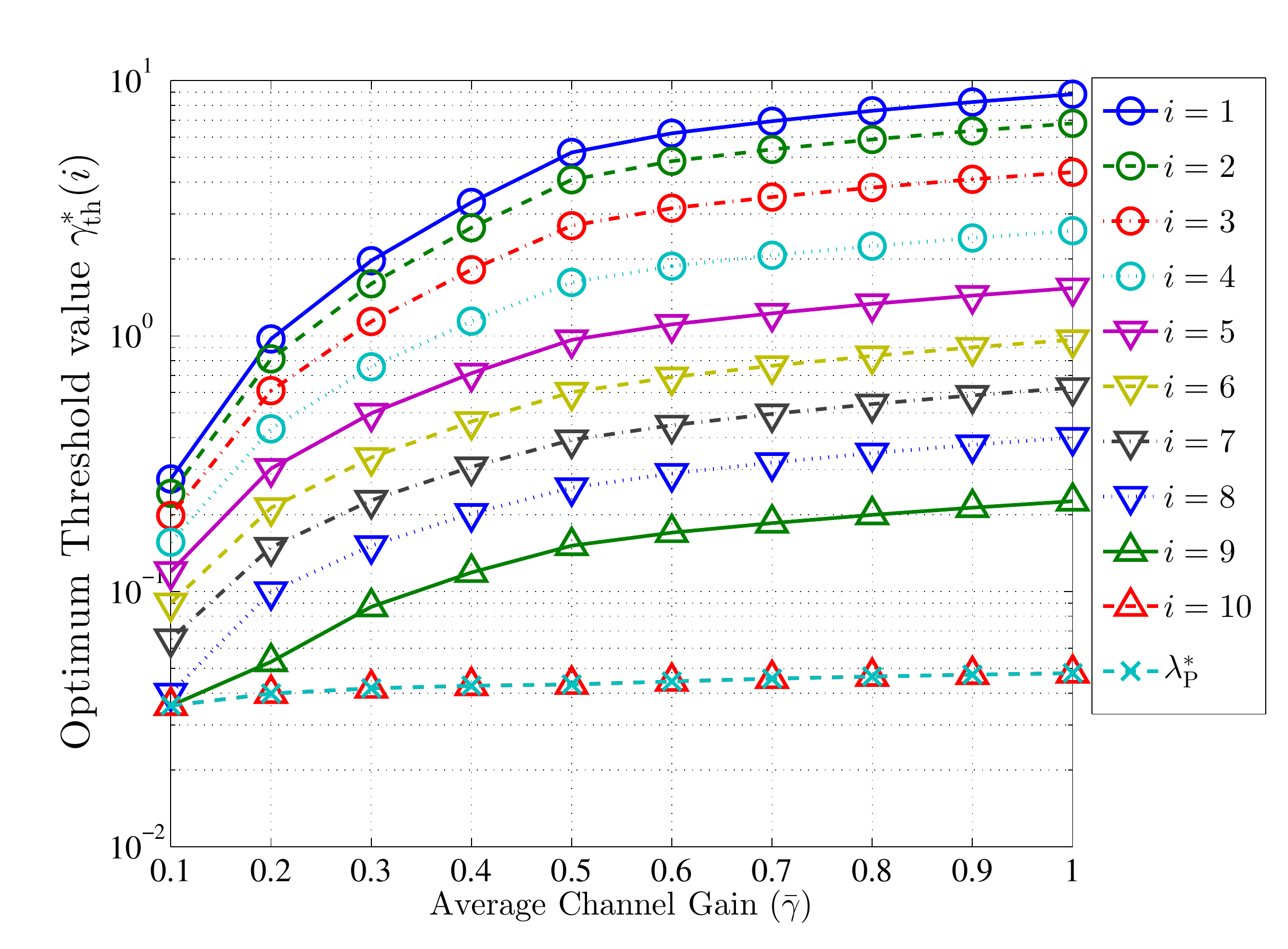}
	\caption{The gap between the optimum threshold $\gammathst{i}$ and its minimum value $\lambdapst$ increases as the average gain increases. This is because as $\bar{\gamma}$ increases, $U_{i+1}$ increases as well. Hence $\gammathst{i}$ increases so that only sufficiently high instantaneous gains should lead to stopping at channel $i$.}
	\label{Overlay_Fig4_Optimum_Threshold}
\end{figure}

The sensing channel (i.e. the channel between the PT and ST over which the ST overhears the PT activity) is modeled as AWGN with unit variance.
 The distributions of the energy detector output $z$ (average energy of N samples sampled from this sensing channel) under the free and busy hypotheses are the Chi-square and a Noncentral Chi-square distributions given by
\begin{align}
\label{X2_Distribution}
&\fzf =\left( \frac{N}{\sigma^2} \right)^N \frac{z^{N-1}}{\left( N-1 \right)!} \exp \left( \frac{-N z}{\sigma^2} \right),\\
\label{NCX2_Distribution}
\nonumber &\fzb =\\
&\left( \frac{N}{\sigma^2} \right) \left( \frac{z}{\cal E} \right)^{\frac{N-1}{2}} \exp \left( \frac{-N \left( z+ {\cal E} \right)}{\sigma^2} \right) \Ibessel{N-1} \left( \frac{2N \sqrt{{\cal E} z}}{\sigma^2} \right),
\end{align}
where $\sigma^2$, which is set to $1$, is the variance of the Gaussian noise of the energy detector, $\cal E$ is the amount of energy received by the ST due to the activity of the PT and is taken as ${\cal E}=2\sigma^2$ throughout the simulations, while $\Ibessel{i}(x)$ is the modified Bessel function of the first kind and $i$th order, and $N=10$.

The main problem we are addressing in this chapter is the optimal stopping rule that dictates for the SU when to stop sensing and start transmitting. As we have seen, this is identified by the threshold vector $\Gammasst{1}$. If the SU does not consider the optimal stopping rule problem and rather transmits as soon as it detects a free channel, then it will be wasting future opportunities of possibly higher throughput. Hence, we expect a degradation in the throughput. We plot the two scenarios in Fig. \ref{Underlay_OSR_vs_noOSR_vs_Iavg_M10} for the underlay system with no delay constraint.

Throughout this chapter, we use bold fonts for vectors and asterisk to denote that $x^*$ is the optimal value of $x$; all logarithms are natural, while the expected value operator is denoted $\mathbb{E}[\cdot]$ and is taken with respect to all the random variables in its argument. Finally, we use $(x)^+ \triangleq  {\rm{max}}(x,0)$ and $\mathbb{R}$ to denote the set of the real numbers.

\begin{figure}[htbp]
	\centering
		\includegraphics[width=1\columnwidth]{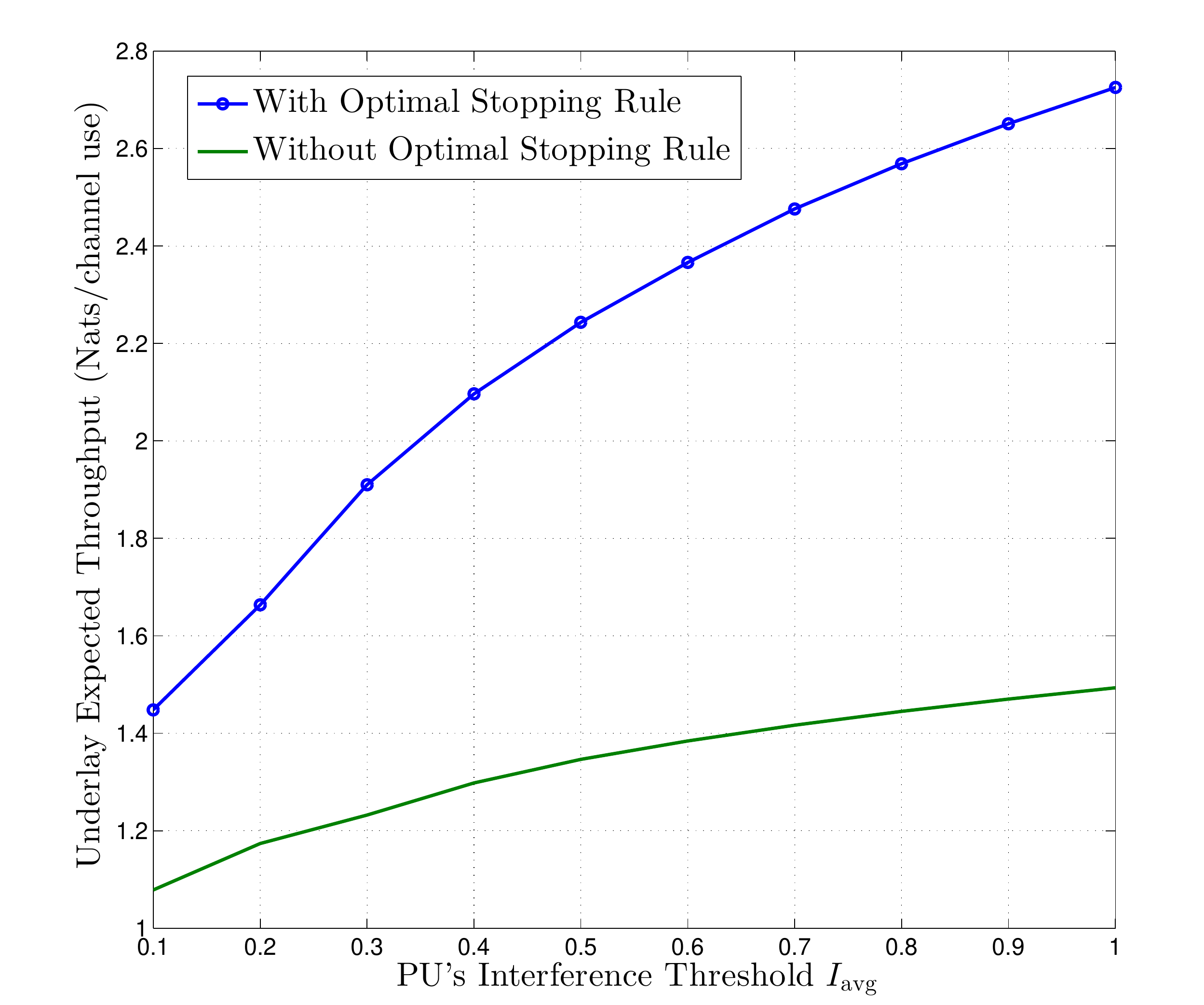}
	\caption{The underlay expected throughput versus the average interference threshold $I_{\rm avg}$. Two scenarios are shown: with and without the optimal stopping rule formulation. In the latter, the SU transmits as soon as a channel is found free.}
	\label{Underlay_OSR_vs_noOSR_vs_Iavg_M10}
\end{figure}

For the multiple SU scenario, the numerical analysis were run for the case of $L=30$ SUs while $M=10$ channels. We assumed the fading channels are i.i.d. among users and among frequency channels. Each channel is exponentially distributed with unity average channel gain. And since $L$ is large, the distribution of the maximum gain among $L$ random gains converges in distribution to the Gumbel distribution \cite{zhang2009asymptotic} having a cumulative distribution function of $\exp \left( -\exp \left( -\gamma/\bar{\gamma}\right)\right)$. The per-user throughput $U_1^{L*}$ is plotted in Fig. \ref{Throughput_MultiSU} where the throughput of the delay-constrained and of the unconstrained optimization problems coincide. This is because when $L\gg M$, the solution of the unconstrained problem is delay optimal as well. Hence, adding a delay constraint does not sacrifice the throughput, when $L$ is large. Moreover, the delay performance shown in Fig. \ref{Delay_MultiSU} shows that the delay does not change with and without considering the average delay constraint since the system is delay- (and throughput-) optimal already.

\begin{figure}%
	\centering
	\includegraphics[width=1\columnwidth]{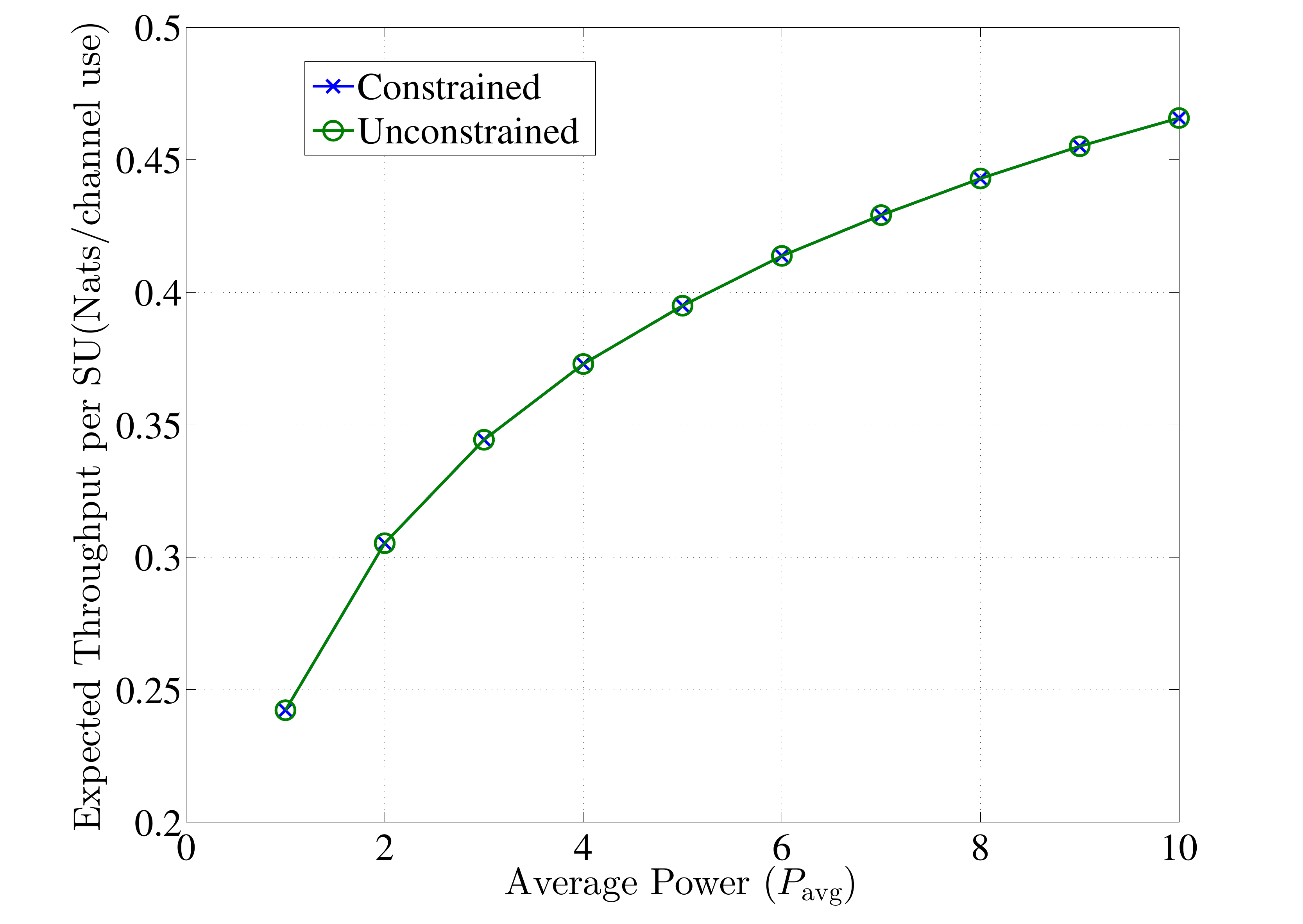}
	\caption{Per user throughput of the system at $L=30$ SUs. The throughput of the constrained and unconstrained problem coincide since the system is throughput (and delay) optimal.}
	\label{Throughput_MultiSU}
	\end{figure}
	\begin{figure}
	\centering
	\includegraphics[width=1\columnwidth]{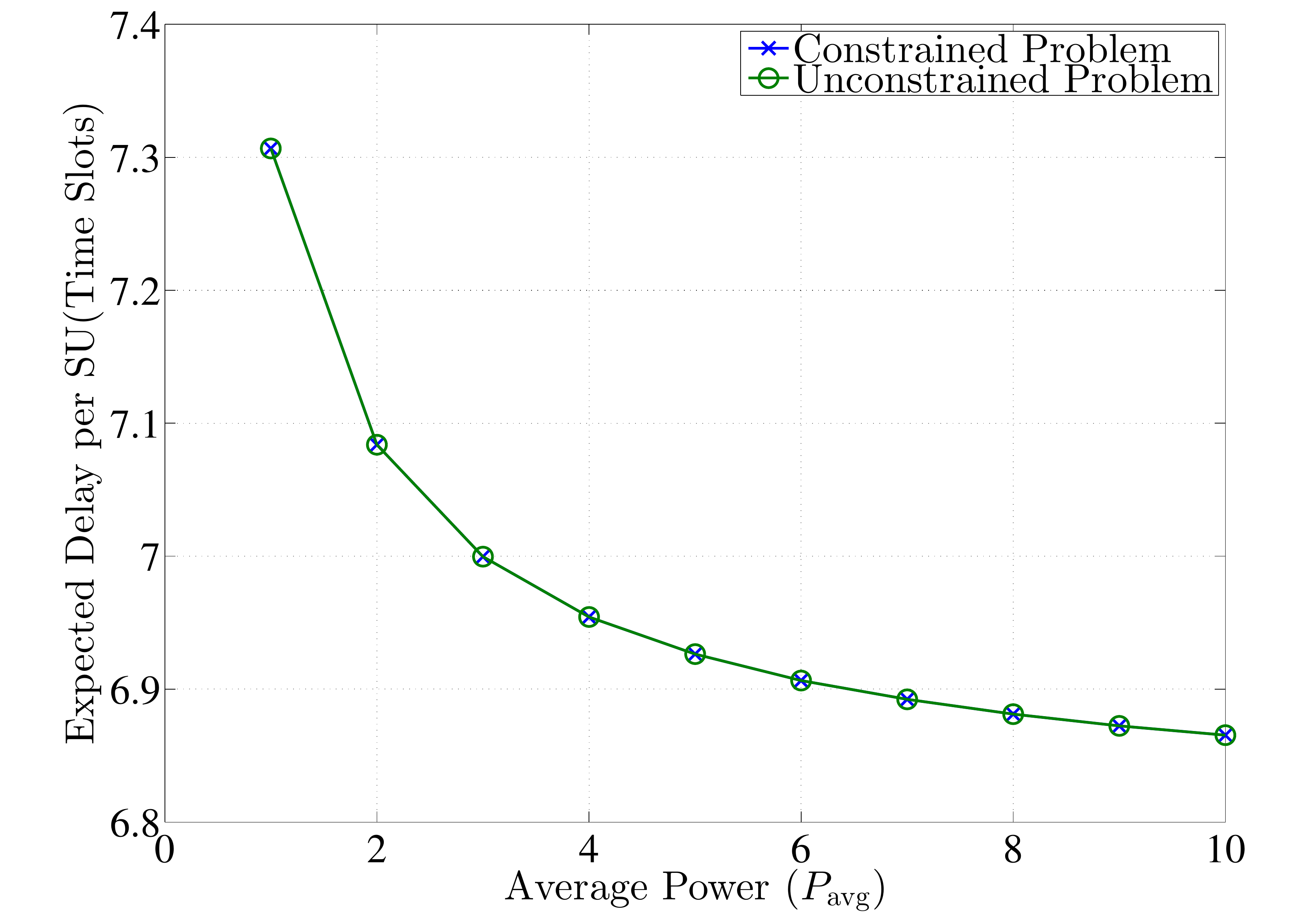}
	\caption{The average delay seen by each user in the system at $L=30$ SUs. The delay of the constrained and unconstrained problems coincide since the system is delay (and throughput) optimal.}
	\label{Delay_MultiSU}
\end{figure}

We have simulated the system for the online algorithm of Section \ref{General_Delay} for $K(1)=2$ packets and $t_f=4$ time slots. We simulated the system at $\rmin=0.95$ which means that the QoS of the SU requires that at least $95\%$ of the frames to be successfully transmitted. Fig. \ref{General_Delay_Overlay_OptPow} shows the improvement in the throughput of the online over the offline adaptation. This is because the SU adapts the power and thresholds at each time slot to allocate the remaining resources (i.e. remaining time slots) according to the remaining number of packets and the desired QoS. This comes at the expense of re-solving the problem at each time slot (i.e. $t_f$ times more).

\begin{figure}%
\centering
\includegraphics[width=1\columnwidth]{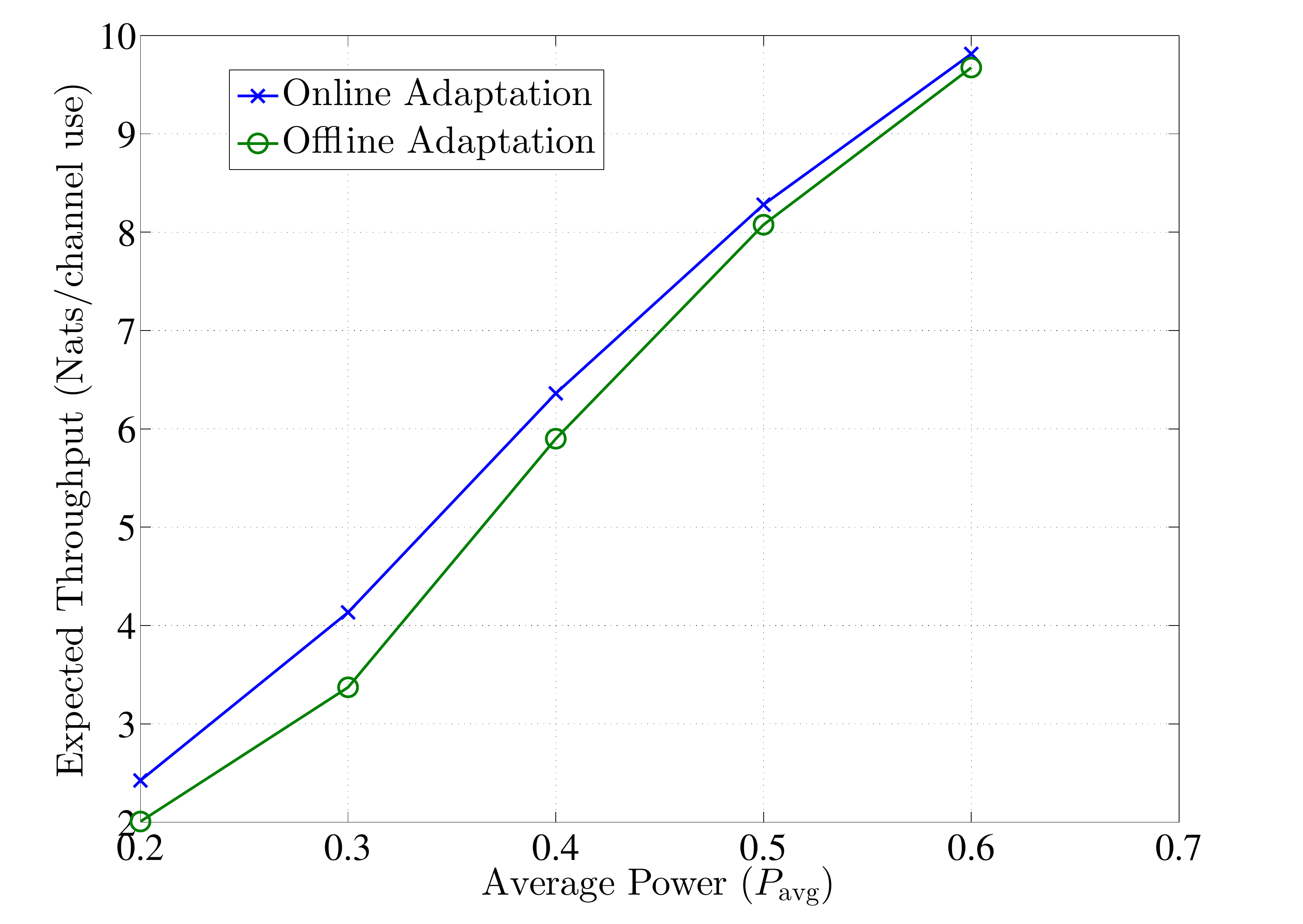}
\caption{The performance of the online adaptation algorithm for the general delay case.}
\label{General_Delay_Overlay_OptPow}
\end{figure}

\chapter{Delay Due to Queue-Waiting Time}
\label{Ch3_Queues}
In this chapter we study the delay resulting from the service time as well as that from queue-waiting time. The service time is affected by the power transmitted by the SU, while the queue-waiting time is affected by the scheduling algorithm. We propose a delay-optimal scheduling-and-power-allocation algorithm that guarantees bounds on the SUs' delays while causing an acceptable interference to the PUs. This algorithm is useful to provide fair delay guarantees to the SUs when delay fairness cannot be achieved due to the heterogeneity in SUs' channel statistics.

\section{Network Model}
\label{Model}
We assume a CR system consisting of a single secondary base station (BS) serving $N$ secondary users (SUs) indexed by the set $\script{N}\triangleq \{1,\cdots N\}$ (Fig. \ref{Cell_Fig}). We are considering the uplink phase where each SU has its own queue buffer for packets that need to be sent to the BS. The SUs share a single frequency channel with a single PU that has licensed access to this channel. The CR system operates in an underlay fashion where the PU is using the channel continuously at all times. SUs are allowed to transmit as long as they do not cause harmful interference to the PU. In this work, we consider two different scenarios where the interference can be considered as harmful. The first is an instantaneous interference constraint where the interference received by the PU at any given slot should not exceed a prespecified threshold $\Iinst$, while the second is an average interference constraint where the interference received by the PU averaged over a large duration of time should not exceed a prespecified threshold $\Iavg$. Moreover, in order for the secondary BS to be able to decode the received signal, no more than one SU at a time slot is to be assigned the channel for transmission.

\begin{figure}%
\centering
\includegraphics[width=0.6\columnwidth]{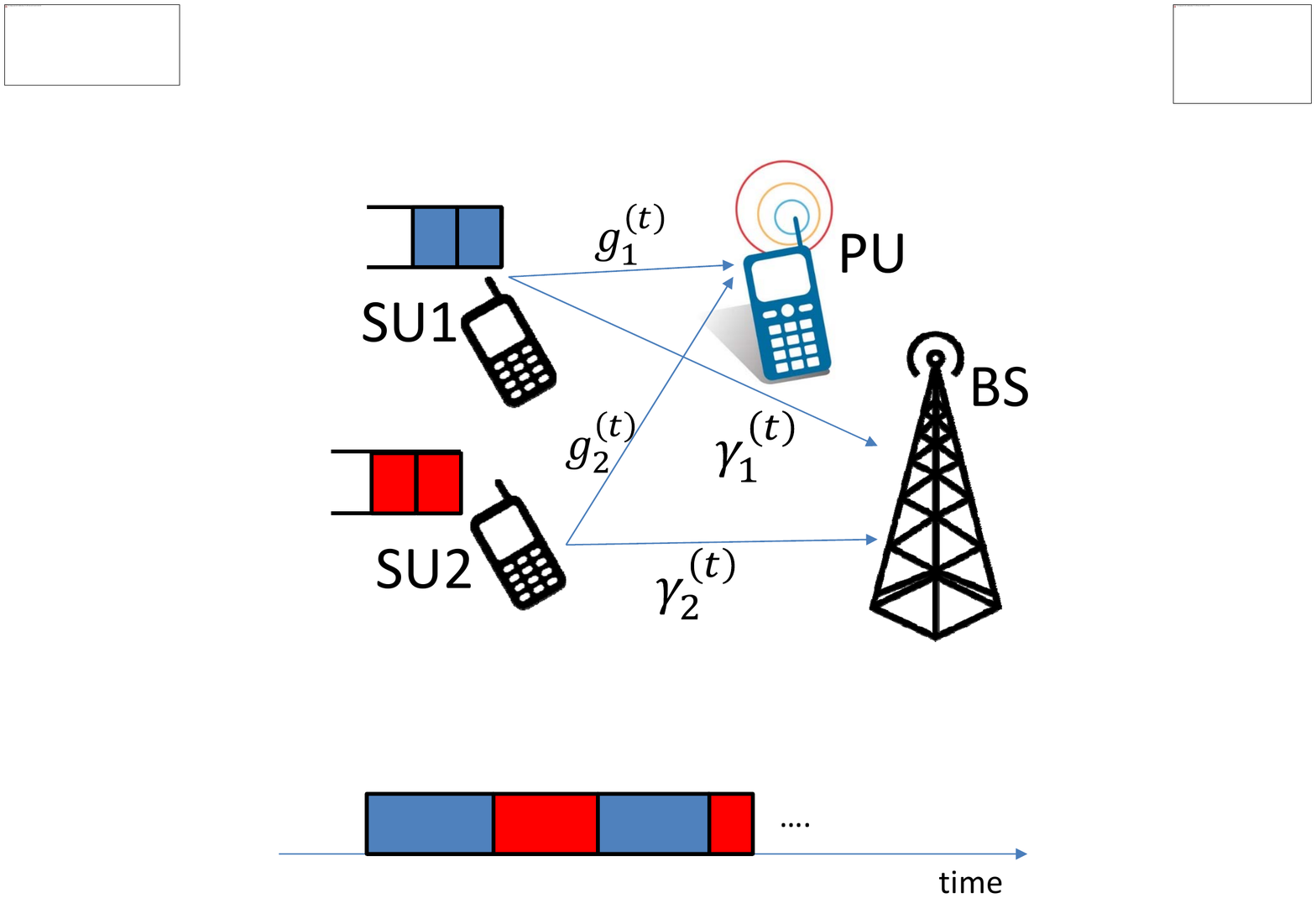}%
\caption{The CR system considered is an uplink one with $N$ SUs (in this figure $N=2$) communicating with their BS. There exists an interference link between each SU and the existing PU. The PU is assumed to be using the channel continuously.}%
\label{Cell_Fig}%
\end{figure}

\subsection{Channel and Interference Model}
\label{Channel_Model}
We assume a time slotted structure where each slot is of duration $\Ts$ seconds, and equal to the coherence time of the channel. The channel between SU $i$ and the BS is block fading with instantaneous power gain $\gamma_i^{(t)}$, at time slot $t$, following the probability mass function $\fgammai(\gamma)$ with mean $\bgamma_i$ and i.i.d. across time slots, and $\gammamax$ is the maximum gain $\gamma_i^{(t)}$ could take. The channel gain is also independent across SUs but not necessary identically distributed allowing heterogeneity among users
. 
SUs use a rate adaptation scheme based on the channel gain $\gamma_i^{(t)}$. The transmission rate of SU $i$ at time slot $t$ is
\begin{equation}
\Rit=\log \lb 1+P_i^{(t)}\gamma_i^{(t)} \rb \hspace{0.1in} \rm{bits},
\label{Tx_Rate}
\end{equation}
where $P_i^{(t)}$ is the power by which SU $i$ transmits its bits at slot $t$. We assume that there exists a finite maximum rate $\Rmax$ that the SU cannot exceed. This rate is dictated by the maximum power $\Pmax$ and the maximum channel gain $\gammamax$.

The PU experiences interference from the SUs through the channel between each SU and the PU. The interference channel between SU $i$ and the PU, at slot $t$, has a power gain $g_i^{(t)}$ following the probability mass function $\fgi(g)$ with mean $\bgi$, and having $\gmax$ as the maximum value that $\git$ could take. These power gains are assumed to be independent among SUs but not identically distributed. We assume that SU $i$ knows the value of $\gamma_i^{(t)}$ as well as $g_i^{(t)}$, at the beginning of slot $t$ through some channel estimation phase \cite{haykin2005cognitive}. The channel estimation to acquire $g_i^{(t)}$ can be done by overhearing the pilots transmitted by the primary receiver, when it is acting as a transmitter, to its intended transmitter \cite[Section VI]{haykin2005cognitive}. The channel estimation phase is out of the scope of this work, however the effect of channel estimation errors will be discussed in Section \ref{Results}.

\subsection{Queuing Model}
\subsubsection{Arrival Process} We assume that packets arrive to the SU $i$'s buffer at the beginning of each slot. The number of packets arriving to SU $i$'s buffer follows a Bernoulli process with a fixed parameter $\lambda_i$ packets per time slot. Following the literature, packets are buffered in infinite-sized buffers \cite[pp. 163]{Bertsekas_Data_Networks} and are served according to the first-come-first-serve discipline. Each packet has a fixed length of $L$ bits that is constant for all users. In this paper, we study the case where $L\gg\Rmax$ which is a typical case for packets with large sizes as video packets \cite[Section 3.1.6.1]{semiconductor2008long}. Due to the randomness in the channels, each packet takes a random number of time slots to be transmitted to the BS. This depends on the rate of transmission $\Rit$ as will be explained next.

\subsubsection{Service Process} When SU $i$ is scheduled for transmission at slot $t$, it transmits $M_i^{(t)}$ bits of the head-of-line (HOL) packet of its queue. The remaining bits of this HOL packet remain in the HOL of SU $i$'s queue until it is reassigned the channel in subsequent time slots. The values $M_i^{(t)}$ and $\HOL_i(t)$ are given by
\begin{align}
M_i^{(t)}&\triangleq\min \lb \Rit,\HOL_i(t) \rb \hspace{0.1in} \rm{bits, and}
\label{Num_Bits}\\
\HOL_i(t+1)&\triangleq \HOL_i(t) - M_i^{(t)},
\label{HOL}
\end{align}
respectively, where $\HOL_i(t)$ is the remaining number of bits of the HOL packet at SU $i$ at the beginning of slot $t$. $\HOL_i(t)$ is initialized by $L$ whenever a packet joins the HOL position of SU $i$'s queue so that it always satisfies $0\leq \HOL_i(t)\leq L$, $\forall t$. A packet is not considered transmitted unless all its $L$ bits are transmitted, i.e. unless $\HOL_i(t)$ becomes zero, at which point SU $i$'s queue decreases by 1 packet. At the beginning of slot $t+1$ the following packet in the buffer, if any, becomes SU $i$'s HOL packet and $\HOL_i(t+1)$ is reset back to $L$ bits. The SU $i$'s queue evolves as follows
\begin{equation}
Q_i^{(t+1)}= \lb Q_i^{(t)} + \vert \script{A}_i^{(t)}\vert - S_i^{(t)} \rb^+,
\label{Queue}
\end{equation}
where $\script{A}_i^{(t)}$ is the set carrying the index of the packet, if any, arriving to SU $i$ at slot $t$, thus $\vert \script{A}_i^{(t)}\vert$ is either $0$ or $1$ since at most one packet per slot can arrive to SU $i$; 
the packet service indicator $S_i^{(t)}=1$ if $\HOL_i(t)$ becomes zero at slot $t$.

The service time $s_i$ of SU $i$ is the number of time slots required to transmit one packet for SU $i$, excluding the service interruptions. It can be shown that the average service time is $L/\EE{\Rit}$ time slots per packet where the expectation is taken over the channel gain $\gamma_i^{(t)}$ as well as over the power $P_i^{(t)}$ when it is channel dependent and random. One example of a random power policy is the \emph{channel inversion} policy as will be discussed later (see equation \eqref{Power_Allocation}). The service time is assumed to follow a general distribution throughout the paper that depends on the distribution of $\Pit\gamma_i^{(t)}$.

We define the delay $W_i^{(j)}$ of a packet $j$ as the total amount of time, in time slots, packet $j$ spends in SU $i$'s buffer from the slot it joined the queue until the slot when its last bit is transmitted. 
The time-average delay experienced by SU $i$'s packets is given by \cite{li2011delay}
\begin{equation}
\bW_i \triangleq \lim_{T \rightarrow \infty} \frac{\EE{\sum_{t=1}^T{\sum_{j\in\script{A}_i^{(t)}}{W_i^{(j)}}}}}{\EE{\sum_{t=1}^T{\vert \script{A}_i^{(t)}\vert}}}
\label{Delay}
\end{equation}
which is the expected total amount of time spent by all packets arriving in a time interval, of a large duration, normalized by the expected number of packets that arrived in this interval.

\subsection{Transmission Process}
At the beginning of each time slot $t$, the BS schedules a SU and broadcasts its index $i^*$ and its power $P_{i^*}^{(t)}$ to all SUs on a common control channel. SU ${i^*}$, in turn, begins transmission of $M_{i^*}^{(t)}$ bits of its HOL packet with a constant power $P_{i^*}^{(t)}$. We assume the BS receives these bits error-free by the end of slot $t$ then a new time slot $t+1$ starts. In this paper, our main goal is the selection of the SU $i^*$ which is a scheduling problem, as well as the choice of the power $P_{i^*}^{(t)}$ which is power allocation. We now elaborate further on this problem.

\section{Problem Statement}
\label{Prob_Statement}
Each SU $i$ has an average delay constraint $\bW_i \leq d_i$ that needs to be satisfied. Moreover, there is an interference constraint that the SU needs to meet in order to coexist with the PU. We discuss the two different constraints and state the problem associated with each constraint.

\subsection{Instantaneous Interference Constraint}
Under the instantaneous interference constraint, the main objective is to solve the following problem
\begin{equation}
\begin{array}{ll}
\underset{\{i^{*(t)}\},\{\bfP{}^{(t)}\}}{\rm{minimize}}& \sum_{i=1}^N \bW_i\\
\label{Problem}
\rm{subject \; to} & \sum_{i=1}^N {P_i^{(t)}g_i^{(t)}} \leq \Iinst \hspace{0.25in} , \hspace{0.25in} \forall t\geq 1\\
& \bW_i \leq d_i\\
& P_i^{(t)} \leq \Pmax \hspace{0.1in} , \hspace{0.1in} \forall i \in \script{N} \rm{\; and \;}\forall t \geq 1,\\
& \sum_{i=1}^N{  \mathds{1} \lb P_i^{(t)}\rb} \leq 1 \hspace{0.25in}, \hspace{0.25in} \forall t \geq 1,
\end{array}
\end{equation}
where $\bfP{}^{(t)} \pardef{P}$, $\{i^{*(t)}\}$ represents the scheduler at each time slot $t\geq1$, 
while $\mathds{1}(x)\triangleq 1$ if $x\neq 0$ and $0$ otherwise. The last constraint indicates that no more than a single SU is to be transmitting at slot $t$.


\subsection{Average and Instantaneous Interference Constraint}
\label{Avg_Interf_Prob}
Let $I$ denote the long-term average interference received by the PU given by
\begin{equation}
I\triangleq \lim_{T\rightarrow \infty} \sum_{i=1}^N{\frac{1}{T}\sum_{t=1}^T{P_i^{(t)} g_i^{(t)}}}.
\label{Interference}
\end{equation}
The following problem is the same as \eqref{Problem} with an additional constraint on the average interference:
\begin{equation}
\begin{array}{ll}
\underset{\{i^{*(t)}\},\{\bfP{}^{(t)}\}}{\rm{minimize}}& \sum_{i=1}^N \bW_i \\
\label{Prob}
\rm{subject \; to} & \sum_{i=1}^N {P_i^{(t)}g_i^{(t)}} \leq \Iinst \hspace{0.25in} , \hspace{0.25in} \forall t\geq 1\\
& I \leq \Iavg\\
& \bW_i \leq d_i\\
& P_i^{(t)} \leq \Pmax \hspace{0.1in} , \hspace{0.1in} \forall i \in \script{N} \rm{\; and \;}\forall t \geq 1,\\
& \sum_{i=1}^N{  \mathds{1} \lb P_i^{(t)}\rb} \leq 1 \hspace{0.25in}, \hspace{0.25in} \forall t \geq 1,
\end{array}
\end{equation}

We notice that problems \eqref{Problem} and \eqref{Prob} are joint power allocation and scheduling problems where the objective function and constraints are expressed in terms of asymptotic time averages and cannot be solved by conventional optimization techniques. The next section proposes low complexity update policies and proves their optimality.

\section{Proposed Power Allocation and Scheduling Algorithm}
\label{Proposed_Algorithm}
We solve problems \eqref{Problem} and \eqref{Prob} by proposing online joint scheduling and power allocation policies that dynamically update the scheduling and the power allocation. We show that these policies have performances that come arbitrarily close to being optimal. That is, we can achieve a sum of the average delays arbitrarily close to its optimal value depending on some control parameter $V$.

We first discuss the idea behind our policies. Then we present the proposed policy for each problem, \eqref{Problem} and \eqref{Prob}, separately.

\subsection{Frame-Based Policy}
\label{Frame_Based_Policy}
The idea behind the policies that solve \eqref{Problem} and \eqref{Prob} is to divide time into frames where frame $k$ consists of a random number $T_k$ time slots and update the power allocation and scheduling at the beginning of each frame. Where each frame begins and ends is specified by idle periods and will be precisely defined later in this section. During frame $k$, SUs are scheduled according to some priority list $\bfpi(k)$ and each SU is assigned some power to be used when it is assigned the channel. The priority list and the power functions are fixed during the entire frame $k$ and are found at the beginning of frame $k$ based on the history of the SUs' time-averaged delays and, in the case of \eqref{Prob}, the PU's suffered interference up to the end of frame $k-1$.

We define $\bfpi(k) \parFdef{\pi}$ where $\pi_j(k)$ is the index of the SU who is given the $j$th priority during frame $k$. Given $\bfpi(k)$, the scheduler becomes a priority scheduler with preemptive-resume priority queuing discipline \cite[pp. 205]{Bertsekas_Data_Networks}. The idea of dividing time into frames and assigning fixed priority lists for each frame was also used in \cite{li2011delay}. Lemma 1 of \cite{li2011delay} proves that restricting the scheduling algorithm to frame-based preemptive-resume priority lists does not result in any loss of optimality.

Frame $k$ consists of $T_k\triangleq\vert \script{F}(k)\vert$ consecutive time-slots, where $\script{F}(k)$ is the set containing the indices of the time slots belonging to frame $k$ (see Fig. \ref{Frame_Structure}). Each frame consists of exactly one \emph{idle period} followed by exactly one \emph{busy period}, both are defined next.

\begin{figure}
\centering
\includegraphics[width=0.8\columnwidth]{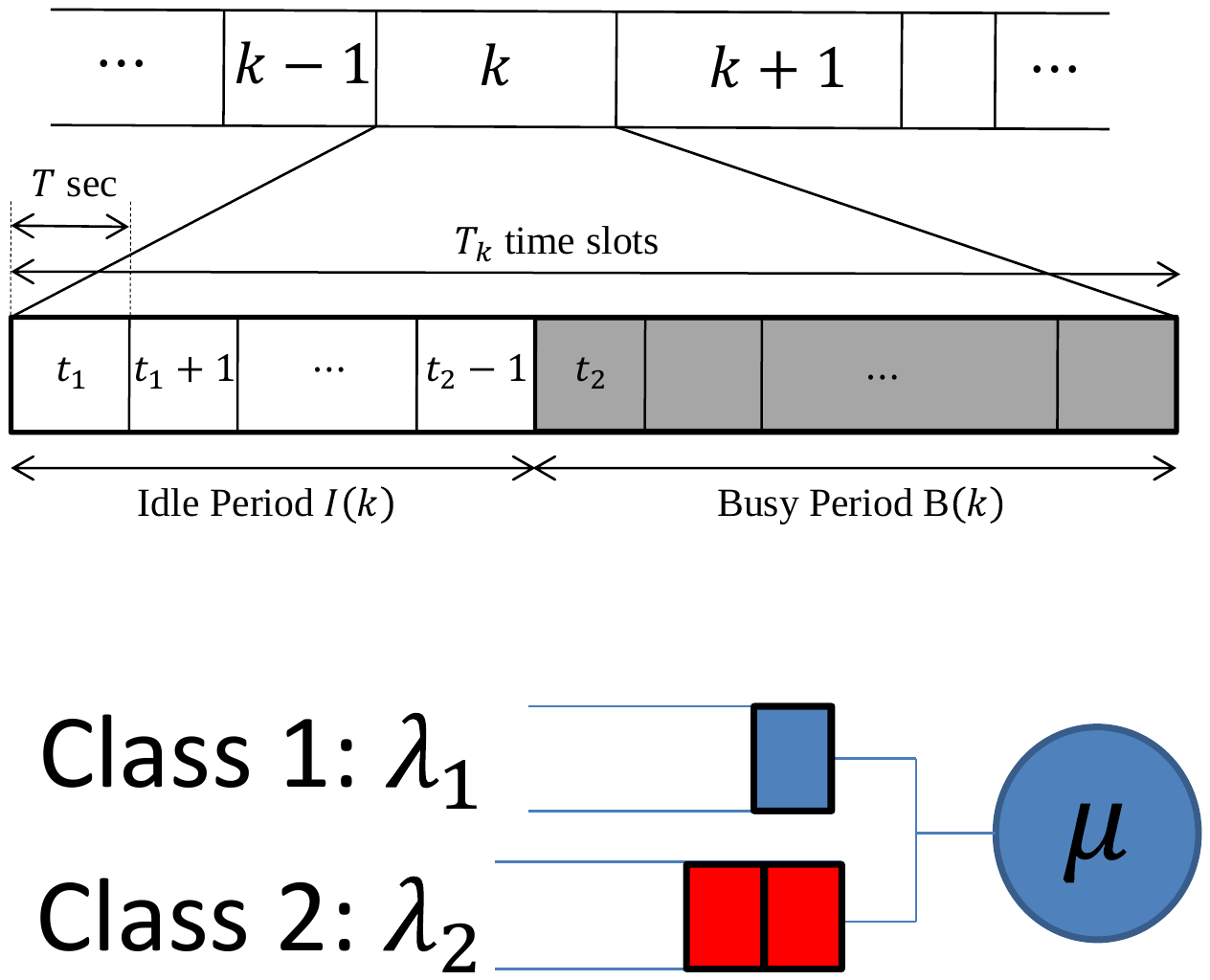}%
\caption{Time is divided into frames. Frame $k$ has $\FDurK\triangleq\vert \script{F}(k)\vert$ slots, each is of duration $\Ts$ seconds. Different frames can have different number of time slots.}%
\label{Frame_Structure}%
\end{figure}

\begin{Def}
\label{Idle_Def}
An idle period is the time interval formed by the consecutive time slots where all SUs have empty buffers. An idle period starts with the time slot $t_1$ following the completion of transmission of the last packet in the system, and ends with a time slot $t_2$ when one or more of the SUs' buffer receives one a new packet to be transmitted (see Fig. \ref{Frame_Structure}). In other words, $t_1$ satisfies $\sum_{i\in\script{N}}Q_i^{(t_1)}=0$ and $\sum_{i\in\script{N}}Q_i^{(t_1-1)}\neq 0$, while $t_2$ satisfies $\sum_{t=t_1}^{t_2-1}\sum_{i\in\script{N}}Q_i^{(t)}=0$ and $\sum_{i\in\script{N}}Q_i^{(t_2)}\neq 0$.
\end{Def}

\begin{Def}
\label{Busy_Def}
Busy period is the time interval between two consecutive idle periods.
\end{Def}
The duration of the idle period $I(k)$ and busy period $B(k)$ of frame $k$ are random variables, thus $T_k=I(k)+B(k)$ is random as well. Since frames do not overlap, if $t\in\script{F}(k_1)$ then $t \notin \script{F}(k_2)$ as long as $k_1\neq k_2$. Our goal in this paper is to choose, at the beginning of each frame $k$, the best priority list $\bfpi(k)$ as well as the best power allocation policy for each SU so that the system has an optimal average delay performance satisfying the constraints in \eqref{Problem} or \eqref{Prob}. An equivalent equation for the average delay equation in \eqref{Delay} is
\begin{equation}
\bW_i \triangleq \lim_{K \rightarrow \infty} \frac{\EE{\sum_{k=0}^K \lb\sum_{j\in \script{A}_i(k)}W_i^{(j)}\rb}}{\EE{\sum_{k=0}^{K}{\vert\script{A}_i(k)\vert}}}
\label{Delay_Frame}
\end{equation}
where $\script{A}_i(k)\triangleq\cup_{t\in\script{F}(k)}\script{A}_i^{(t)}$ is the set of all packets that arrive at SU $i$'s buffer during frame $k$. We note that the long-term average delay $\bW_i$ in \eqref{Delay_Frame} depends on the chosen priority lists as well as the power allocation policy, in all frames $k\geq 0$.

\subsection{Satisfying Delay Constraints}
In order to guarantee a feasible solution satisfying the delay constraints in problems \eqref{Problem} and \eqref{Prob}, we set up a ``virtual queue'' associated with each delay constraint $\bW_i\leq d_i$. The virtual queue for SU $i$ at frame $k$ is given by
\begin{equation}
Y_i(k+1)\triangleq\lb Y_i(k)+\sum_{j\in \script{A}_i(k)}{\lb W_i^{(j)}-r_i(k)\rb} \rb^+
\label{Delay_Q}
\end{equation}
where $r_i(k)\in[0,d_i]$ is an auxiliary random variable, that is to be optimized over and $Y_i(0)\triangleq 0$, $\forall i$. We define $\bfY(k) \parFdef{Y}$. Equation \eqref{Delay_Q} is calculated at the end of frame $k-1$ and represents the amount of delay exceeding the delay bound $d_i$ for SU $i$ up to the beginning of frame $k$. We first give the following definition, then state a lemma that gives a sufficient condition on $Y_i(k)$ for the delay of SU $i$ to satisfy $\bW_i \leq d_i$.
\begin{Def}
\label{Mean_Rate_Def}
A random sequence $\{Y_i(k)\}_{k=0}^\infty$ is mean rate stable if and only if $\lim_{K\rightarrow\infty}\EE{Y_i(K)}/K=0$ holds.
\end{Def}

\begin{lma}
\label{Mean_Rate_Lemma}
If $\{Y_i(k)\}_{k=0}^\infty$ is mean rate stable, then the time-average delay of SU $i$ satisfies $\bW_i \leq d_i$.
\end{lma}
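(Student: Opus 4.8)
The plan is to turn the one-sided projection in the virtual-queue recursion \eqref{Delay_Q} into a pathwise lower bound, telescope it over frames, and then normalize by the total number of arrivals so that the resulting ratio is exactly the time-average delay in \eqref{Delay_Frame}. First I would discard the $(\cdot)^+$ in \eqref{Delay_Q} using $(x)^+\geq x$, and pull the frame-constant $r_i(k)$ out of the sum, to obtain for every frame $k$
\begin{equation}
Y_i(k+1)\geq Y_i(k)+\sum_{j\in\script{A}_i(k)}W_i^{(j)}-r_i(k)\vert\script{A}_i(k)\vert .
\end{equation}
Since $r_i(k)\in[0,d_i]$ and $\vert\script{A}_i(k)\vert\geq0$, replacing $r_i(k)$ by its largest admissible value $d_i$ only weakens the bound, giving $Y_i(k+1)-Y_i(k)\geq \sum_{j\in\script{A}_i(k)}W_i^{(j)}-d_i\vert\script{A}_i(k)\vert$.

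Next I would sum this inequality over $k=0,\dots,K$. The left-hand side telescopes, and with $Y_i(0)=0$ this yields
\begin{equation}
\sum_{k=0}^{K}\sum_{j\in\script{A}_i(k)}W_i^{(j)}\leq Y_i(K+1)+d_i\sum_{k=0}^{K}\vert\script{A}_i(k)\vert .
\end{equation}
Taking expectations and dividing by $\EE{\sum_{k=0}^{K}\vert\script{A}_i(k)\vert}$ (the expected number of packets arriving at SU $i$ during the first $K+1$ frames) produces
\begin{equation}
\frac{\EE{\sum_{k=0}^{K}\sum_{j\in\script{A}_i(k)}W_i^{(j)}}}{\EE{\sum_{k=0}^{K}\vert\script{A}_i(k)\vert}}\leq \frac{\EE{Y_i(K+1)}}{\EE{\sum_{k=0}^{K}\vert\script{A}_i(k)\vert}}+d_i .
\end{equation}
As $K\to\infty$ the left-hand side is precisely $\bW_i$ by \eqref{Delay_Frame}, so the claim reduces to showing that the first term on the right vanishes.

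The hard part, though a mild one, is controlling that residual term, because mean rate stability (Definition \ref{Mean_Rate_Def}) only tells me that $\EE{Y_i(K+1)}$ grows sublinearly in $K$; I therefore need the denominator to grow at least linearly in $K$. I would establish this directly from the arrival statistics: every frame contains at least one time slot (indeed at least one idle and one busy slot), so the first $K+1$ frames occupy at least $K+1$ slots, and since the per-slot arrivals to SU $i$ are Bernoulli with rate $\lambda_i>0$ and the summands are nonnegative, restricting the count to the deterministic first $K+1$ slots and invoking linearity of expectation gives $\EE{\sum_{k=0}^{K}\vert\script{A}_i(k)\vert}\geq (K+1)\lambda_i$. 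Hence the residual term is at most $\EE{Y_i(K+1)}/\lb(K+1)\lambda_i\rb=\lambda_i^{-1}\,\EE{Y_i(K+1)}/(K+1)$, which tends to $0$ by mean rate stability. Letting $K\to\infty$ then gives $\bW_i\leq d_i$, as required.
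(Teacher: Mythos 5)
Your proposal is correct and follows essentially the same route as the paper's proof: drop the $(\cdot)^+$ projection in \eqref{Delay_Q}, telescope over frames using $Y_i(0)=0$, take expectations, normalize by the expected number of arrivals, bound $r_i(k)$ by $d_i$, and invoke mean rate stability together with \eqref{Delay_Frame}. The only difference is that you explicitly justify why the residual term vanishes by lower-bounding $\EE{\sum_{k=0}^{K}\vert\script{A}_i(k)\vert}$ by $(K+1)\lambda_i$, a step the paper leaves implicit when it factors the residual as $\lb\EE{Y_i(K)}/K\rb\cdot\lb K/\EE{\sum_{k=0}^{K-1}\vert\script{A}_i(k)\vert}\rb$.
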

\begin{proof}
Removing the $(\cdot)^+$ sign from equation \eqref{Delay_Q} yields
\begin{equation}
Y_i(k+1) \geq Y_i(k)+\sum_{j\in \script{A}_i(k)}{\lb W_i^{(j)}-r_i(k)\rb}.
\label{Inequality1}
\end{equation}
Summing inequality \eqref{Inequality1} over $k=0,\cdots K-1$ and noting that $Y_i(0)=0$ gives
\begin{equation}
Y_i(K)\geq \sum_{k=0}^{K-1} \lb\sum_{j\in \script{A}_i(k)}W_i^{(j)}\rb-\sum_{k=0}^{K-1} {\lb r_i(k) \vert\script{A}_i(k)\vert\rb}.
\label{Inequality2}
\end{equation}
Taking the $\EE{\cdot}$ then dividing by $\EE{\sum_{k=0}^{K-1}{\vert\script{A}_i(k)\vert}}$ gives
\begin{equation}
\frac{\EE{\sum_{k=0}^{K-1} \lb\sum_{j\in \script{A}_i(k)}W_i^{(j)}\rb}}{\EE{\sum_{k=0}^{K-1}{\vert\script{A}_i(k)\vert}}} \leq \frac{\EE{Y_i(K)}}{K}\frac{K}{\EE{\sum_{k=0}^{K-1}{\vert\script{A}_i(k)\vert}}} + \frac{\sum_{k=0}^{K-1}{\EE{\vert\script{A}_i(k)\vert {r_i(k)}}}}{\sum_{k=0}^{K-1}\EE{\vert\script{A}_i(k)\vert}}.
\label{Wait_r_i}
\end{equation}
Replacing $r_i(k)$ by its upper bound $d_i$, taking the limit as $K\rightarrow\infty$ then using the mean rate stability definition and equation \eqref{Delay_Frame} completes the proof.
\end{proof}
Lemma \ref{Mean_Rate_Lemma} provides a condition on the virtual queue $\Yivq$ so that SU $i$'s average delay constraint $\bW_i\leq d_i$ in \eqref{Problem} and \eqref{Prob} is satisfied. That is, if the proposed joint power allocation and scheduling policy results in a mean rate stable $\Yivq$, then $\bW_i\leq d_i$. For both problems, the proposed policy depends on the Lyapunov optimization where the goal is to choose the joint scheduling and power allocation policy that minimizes the drift-plus-penalty. In Section \ref{Algorithm_DOIC} (Section \ref{Algorithm_DOAC}) we will show that if problem \eqref{Problem} (problem \eqref{Prob}) is feasible, then the proposed policy guarantees mean rate stability for the queues $\Yivq$.

\subsection{Algorithm for Instantaneous Interference Constraint}
\label{Algorithm_DOIC}
We now propose the \emph{Delay Optimal with Instantaneous Interference Constraint} (\emph{DOIC}) policy that solves problem \eqref{Problem}. This policy is executed at the beginning of each frame $k$ for finding $\bfP{}^{(t)}$ as well as the optimum list $\bfpi(k)$, given some prespecified control parameter $V$. Define the random variable $R_i(P)$ as
\begin{equation}
R_i(P)\triangleq\log\lb 1+\min\lb \frac{\Iinst}{g_i^{(t)}},P\rb\gamma_i^{(t)}\rb,
\label{Rate_Explicit}
\end{equation}
where $P$ is some fixed constant argument and define $\mu_i(P)\triangleq\EE{R_i(P)}/L$ where the expectation is taken over $\git$ and $\gamma_i^{(t)}$. The {\DOIC} policy is as follows.\\
\noindent{\bf DOIC Policy} (executed at the beginning of frame $k$):
\begin{enumerate}
	\item The BS sorts the SUs according to the descending order of $Y_i(k)\mu_i(\Pmax)$. The sorted list is denoted by $\bfpi(k)$.
	\item At the beginning of each slot $t\in\script{F}(k)$ the BS schedules SU $i^*$ that has the highest priority in the list $\bfpi(k)$ among those having non-empty buffers.
	\item SU $i^*$, in turn, transmits $M_{i^*}^{(t)}$ packets as dictated by equation \eqref{Num_Bits} where $P_i^{(t)}=0$ $\forall i\neq i^*$ while $P_{i^*}^{(t)}$ is calculated as
\begin{equation}
	P_{i^*}^{(t)}=\min\lb\frac{\Iinst}{g_{i^*}^{(t)}},\Pmax\rb,
\label{Power_Allocation}
\end{equation}
	\item At the end of frame $k$, for all $i\in \script{N}$ the BS updates:
	\begin{enumerate}
		\item $r_i(k)= d_i$ if $V<Y_i(k)\lambda_i$, and $r_i(k)=0$ otherwise, and then
		\item $Y_i(k+1)$ via equation \eqref{Delay_Q}.
	\end{enumerate}
\end{enumerate}
Before we discuss the optimality of the {\DOIC} in Theorem \ref{Optimality}, we define the following quantities. Let $a\triangleq 1-\Pi_{i=1}^N \lb 1-\lambda_i \rb$ denote the probability of receiving a packet from a user or more at a given time slot, while $C_Y\triangleq\sum_{i=1}^N C_{Y_i}$ with $C_{Y_i}\triangleq \sqrt{\EE{A^4}\EE{B^4}} + d_i^2\EE{A^2}$, where $\EE{A^2}$ and $\EE{A^4}$ are bounds on the second and fourth moments of the total number of arrivals $\sum_i\vert \script{A}_i(k)\vert$ during frame $k$, respectively, while $\EE{B^4}$ is a bound on the fourth moment of the busy period $B(k)$. The finiteness of these moments can be shown to hold if the first four moments of the service time are finite. In Appendix \ref{No_Deep_Fade} we show that all the service time moments exist given any distribution for $\Pit\gamma_i^{(t)}$.

\begin{thm}
\label{Optimality}
If problem \eqref{Problem} is feasible, then the proposed {\DOIC} policy results in a time average of the SUs' delays satisfying the following inequality
\begin{equation}
\sum_{i=1}^N{\bW_i} \leq \frac{aC_Y}{V} + \sum_{i=1}^N{\bW_i^*}
\label{Optimality_Equation}
\end{equation}
where $\bW_i^*$ is the optimum value of the delay when solving problem \eqref{Problem}, while $a$ and $C_Y$ are as given above. Moreover, the virtual queues $\Yivq$ are mean rate stable $\forall i \in \script{N}$.
\end{thm}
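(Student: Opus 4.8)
The plan is to run the Lyapunov drift-plus-penalty argument on the frame time-scale, placing the penalty on the auxiliary variables $r_i(k)$. First I would take the quadratic Lyapunov function $\mathcal{L}(\bfY(k))\triangleq\frac12\sum_{i=1}^N Y_i(k)^2$ and study the one-frame conditional drift $\Delta(k)\triangleq\EE{\mathcal{L}(\bfY(k+1))-\mathcal{L}(\bfY(k))\mid\bfY(k)}$. Squaring the virtual-queue recursion \eqref{Delay_Q} and dropping the $(\cdot)^+$ (which only enlarges the right-hand side) gives
\[
\Delta(k)\le C+\sum_{i=1}^N Y_i(k)\,\EE{\sum_{j\in\script{A}_i(k)}\lb W_i^{(j)}-r_i(k)\rb\mid\bfY(k)},
\]
where $C$ collects half of the second moments $\EE{(\sum_{j\in\script{A}_i(k)}(W_i^{(j)}-r_i(k)))^2\mid\bfY(k)}$. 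This is exactly where $C_Y$ enters: bounding $\sum_{j}W_i^{(j)}\le|\script{A}_i(k)|\,B(k)$ and applying Cauchy--Schwarz produces the $\sqrt{\EE{A^4}\EE{B^4}}$ term, while $r_i(k)\le d_i$ produces the $d_i^2\EE{A^2}$ term, so that $C$ is controlled by $C_Y=\sum_i C_{Y_i}$; finiteness of all these moments is guaranteed by the service-time-moment result of Appendix~\ref{No_Deep_Fade}.

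Next I would add the penalty, namely $V$ times the per-frame auxiliary cost proportional to $\sum_i r_i(k)$ accumulated over the $T_k$ slots of the frame, and argue that the \DOIC\ policy minimizes the resulting drift-plus-penalty upper bound over all admissible frame policies. The bound separates into three groups of terms that \DOIC\ optimizes independently: the terms linear in $r_i(k)$ collapse, after using $\EE{|\script{A}_i(k)|\mid\bfY(k)}=\lambda_i\,\EE{T_k\mid\bfY(k)}$ (the factor $\lambda_i$ coming from the drift, the factor $V$ from the penalty), to a multiple of $(V-Y_i(k)\lambda_i)\,r_i(k)$, whose minimizer over $r_i(k)\in[0,d_i]$ is precisely the bang-bang rule of step~4a; the channel-inversion power \eqref{Power_Allocation} maximizes the instantaneous rate $R_i(\Pmax)$ subject to $\sum_iP_i^{(t)}g_i^{(t)}\le\Iinst$, hence maximizes $\mu_i(\Pmax)=\EE{R_i(\Pmax)}/L$ and shrinks every service time; and, given these service rates, sorting by descending $Y_i(k)\mu_i(\Pmax)$ is the $c\mu$-type priority rule that minimizes the $Y_i(k)$-weighted accumulated delay $\sum_iY_i(k)\EE{\sum_jW_i^{(j)}\mid\bfY(k)}$ appearing in the drift, using the no-loss-of-optimality of frame-based preemptive-resume priority lists from \cite{li2011delay}.

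I would then use the feasibility of \eqref{Problem} to exhibit a stationary, possibly randomized, comparison policy attaining the optimum $\sum_i\bW_i^*$ while meeting all constraints, and set its auxiliary variables to $r_i(k)=\bW_i^*\le d_i$. Since \DOIC\ minimizes the drift-plus-penalty bound, substituting this comparison policy yields, for every frame,
\[
\Delta(k)+V\,\EE{\mathrm{cost}(k)\mid\bfY(k)}\le C+V\sum_i\bW_i^*\,\EE{T_k\mid\bfY(k)}.
\]
Telescoping over $k=0,\dots,K-1$, using $\mathcal{L}\ge0$ and $\bfY(0)=\mathbf0$, then normalizing and letting $K\to\infty$ turns the frame sums into the time-average delay \eqref{Delay_Frame} via the renewal-reward theorem; the factor $a$ emerges in this normalization as the expected number of fresh packets initiating service per frame, equivalently the renewal rate of idle periods. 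This produces $\sum_i\overline{r_i}\le aC_Y/V+\sum_i\bW_i^*$, and combining it with the refinement of Lemma~\ref{Mean_Rate_Lemma} that keeps $r_i(k)$ rather than bounding it by $d_i$ (so that $\bW_i\le\overline{r_i}$) gives the claimed inequality \eqref{Optimality_Equation}. Mean rate stability of each $\Yivq$ follows from the same telescoped drift inequality, which forces $\EE{Y_i(K)^2}$ to grow at most linearly in $K$, whence $\EE{Y_i(K)}/K\le\sqrt{\EE{Y_i(K)^2}}/K\to0$ by Jensen's inequality.

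The main obstacle is the conversion in the third step: because the frame length $T_k=I(k)+B(k)$ is random and policy-dependent and the objective \eqref{Delay_Frame} is a ratio of expectations, the per-frame drift-plus-penalty inequality does not translate into a delay bound without the renewal-reward structure of the frames and uniform control of the moments of $B(k)$ and of the per-frame arrivals --- precisely the quantities packaged into $C_Y$. The delicate point is to justify that \DOIC\ minimizes the drift-plus-penalty over a class of policies rich enough to contain the stationary comparison policy (not merely over deterministic per-frame decisions), and that this comparison policy can simultaneously attain $\sum_i\bW_i^*$ and render the conditional drift terms nonpositive; this is the analogue of the stationary-policy existence lemma underlying the framework of \cite{li2011delay}, adapted here to the slot-by-slot instantaneous interference constraint $\sum_iP_i^{(t)}g_i^{(t)}\le\Iinst$ enforced through the channel-inversion power \eqref{Power_Allocation}.
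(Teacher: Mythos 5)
Your proposal is correct and follows essentially the same route as the paper's proof in Appendix \ref{Optimality_Proof_Inst}: the identical frame-based Lyapunov drift-plus-penalty bound with constant $C_Y$, the same decoupled minimization that yields the {\DOIC} steps (bang-bang choice of $r_i(k)$, channel-inversion power, $c\mu$ priority rule), the same genie-aided comparison policy with $r_i(k)=\bW_i^*$, and the same telescoping and Jensen argument, with the factor $a$ entering through the normalization $\EE{\FDurK}\geq \EE{I(k)}=1/a$.
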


\begin{proof}
See Appendix \ref{Optimality_Proof_Inst}.
\end{proof}

Theorem \ref{Optimality} says that the objective function of problem \eqref{Problem} is upper bounded by the optimum value $\sum_i\bW_i^*$ plus some constant gap that vanishes as $V\rightarrow\infty$. Having a vanishing gap means that the {\DOIC} policy is asymptotically optimal. Moreover, based on the mean rate stability of the queues $\Yivq$, the set of delay constraints of problem \eqref{Prob} is satisfied. The drawback of setting $V$ very large is that the time needed for the algorithm to converge increases. This increase is linear in $V$ \cite{NeelyPhD}. That is, if the number of frames required for the quantity $\sum_i{Y_i(k)}/(Nk)$ to be less than $\epsilon$ (for some $\epsilon>0$) is $O(K_1)$, then increasing $V$ to $\beta V$ will require $O(\beta K_1)$ frames for it to be less than $\epsilon$, for any $\beta>1$. We note that the complexity of the {\DOIC} policy is $O(N)$ because calculating $\mu_i(\Pmax)$ is of $O(1)$, while the power is closed-form in \eqref{Power_Allocation}. We note that if problem \eqref{Problem} is not feasible, then this is because one of two reasons; either one or more of the constraints is stringent, or otherwise because $\sum_{i=1}^N\lambda_i/\mu_i(\Pmax)\geq1$. If it is the former, then the {\DOIC} policy will result in a point that is as close as possible to the feasible region. On the other hand, if it is the latter, then we could add an admission controller that limits the average number of packets arriving at buffer $i$ to $\lambda_i(1-\epsilon)/\lb \sum_{i=1}^N\lambda_i/\mu_i(\Pmax)\rb$ for some $\epsilon>0$.

\subsection{Algorithm for Average Interference Constraint}
\label{Algorithm_DOAC}
We now propose the Delay-Optimal-with-Average-Interference-Constraint {\DOAC} policy for problem \eqref{Prob}. We first give the following useful definitions. Since the scheduling scheme in frame $k$ is a priority scheduling scheme with preemptive-resume queuing discipline, then given the priority list $\bfpi$ we can write the expected waiting time of all SUs in terms of the average residual time \cite[pp. 206]{Bertsekas_Data_Networks} defined as $T_{\pi_j}^{\rm R}\triangleq \sum_{l=1}^j\lambda_{\pi_l}\EE{s_{\pi_l}^2}/2$, where the expectation is taken over $P_{\pi_l}^{(t)}\gamma_{\pi_l}^{(t)}$. The waiting time of SU $\pi_j$ that is given the $j$th priority is \cite[pp. 206]{Bertsekas_Data_Networks}
\begin{align}
W_{\pi_j}\lb P,\mu_{\pi_j}(P),\rho_{\pi_j}(P),\brho_{\pi_{j-1}},T_{\pi_j}^{\rm R}\rb&=\frac{1}{\lb 1-\brho_{\pi_{j-1}}\rb}\left[\frac{1}{\mu_{\pi_j}(P)} + \frac{T_{\pi_j}^{\rm R}}{\lb 1-\brho_{\pi_{j-1}} - \rho_{\pi_j}(P)\rb}\right]
\label{Priority_Delay}\\
\nonumber&\leq \frac{1}{\lb 1-\brhomax_{\pi_{j-1}}\rb}\left[\frac{1}{\mu_{\pi_j}(P)} + \frac{T^{\rm R}}{\lb 1-\brhomax_{\pi_{j-1}} - \rho_{\pi_j}(P)\rb}\right]\\
&\triangleq \Wup\lb P,\rho_{\pi_j}(P),\brhomax_{\pi_{j-1}},T^{\rm R}\rb,
\label{Priority_Delay_UB}
\end{align}
where we define $\rho_i(P)\triangleq \lambda_i/\mu_i(P)$, $\brho_{\pi_{j-1}} \triangleq \sum_{l=1}^{j-1} \rho_{\pi_l}(P_{\pi_l})$, while $T^{\rm R}$ is an upper bound on $T_{\pi_j}^{\rm R}$ and is given by $T^{\rm R}\triangleq\sum_{i=1}^N\lambda_i\lb L^2+L\lb 1-p_i(\Pminn_i)\rb\rb/p_i^2(\Pminn_i)/2$ with $p_i(P)\triangleq1-\Prob{R_i(P)=0}$ and $\Pminn_i$ is the minimum power satisfying $\rho_i(\Pminn_i)+\sum_{j\neq i}\rho_j(\Pmax)<1$ (see Appendix \ref{No_Deep_Fade} for the derivation of $T^{\rm R}$), while $\brhomax_i$ is some upper bound on $\brho_i$ that will be defined later. We henceforth drop all the arguments of $\Wup(P,\brhomax_{\pi_{j-1}})$ except $P$ and $\brhomax_{\pi_{j-1}}$ and all those from $W_{\pi_j}(P)$ except $P$.

To track the average interference at the PU up to the end of frame $k$ we set up the following virtual queue that is associated with the average interference constraint in problem \eqref{Prob} and is calculated at the BS at the end of frame $k$.
\begin{equation}
X(k+1)\triangleq \lb X(k) + \sum_{i=1}^N{\sum_{t\in\script{F}(k)}{P_i^{(t)} g_i^{(t)}}} -\Iavg T_k\rb^+,
\label{Avg_Interf_Q}
\end{equation}
where the term $\sum_{i=1}^N{\sum_{t\in\script{F}(k)}{ P_i^{(t)}g_i^{(t)}}}$ represents the aggregate amount of interference energy received by the PU due to the transmission of the SUs during frame $k$. Hence, this virtual queue is a measure of how much the SUs have exceeded the interference constraint above the level $\Iavg$ that the PU can tolerate. Lemma \ref{Mean_Rate_Lemma_Avg_Interf} provides a sufficient condition for the interference constraint of problem \eqref{Prob} to be satisfied.

\begin{lma}
\label{Mean_Rate_Lemma_Avg_Interf}
If $\{X(k)\}_{k=0}^\infty$ is mean rate stable, then the time-average interference received by the PU satisfies $I \leq \Iavg$.
\end{lma}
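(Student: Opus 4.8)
The plan is to mirror the proof of Lemma \ref{Mean_Rate_Lemma}, with the delay virtual queue $Y_i(k)$ replaced by the interference virtual queue $X(k)$ of \eqref{Avg_Interf_Q}. First I would drop the $(\cdot)^+$ operator in \eqref{Avg_Interf_Q}, which can only decrease the right-hand side, to obtain the one-step inequality
\begin{equation}
X(k+1) \geq X(k) + \sum_{i=1}^N \sum_{t\in\script{F}(k)} P_i^{(t)} g_i^{(t)} - \Iavg T_k .
\label{Interf_Inequality1}
\end{equation}
Summing \eqref{Interf_Inequality1} over $k=0,\cdots,K-1$ and using $X(0)=0$, the left-hand side telescopes and I obtain
\begin{equation}
X(K) \geq \sum_{k=0}^{K-1} \sum_{i=1}^N \sum_{t\in\script{F}(k)} P_i^{(t)} g_i^{(t)} - \Iavg \sum_{k=0}^{K-1} T_k .
\label{Interf_Inequality2}
\end{equation}

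Next I would exploit the fact that the frames partition the time axis. Since the sets $\script{F}(0),\script{F}(1),\cdots$ are disjoint and consecutive, the nested sum over $k$ and over $t\in\script{F}(k)$ in \eqref{Interf_Inequality2} is exactly a sum over the first $T_{\rm tot}(K)\triangleq\sum_{k=0}^{K-1}T_k$ time slots. Rearranging \eqref{Interf_Inequality2}, taking $\EE{\cdot}$, and dividing by $\EE{T_{\rm tot}(K)}$ then yields
\begin{equation}
\frac{\EE{\sum_{i=1}^N \sum_{t=1}^{T_{\rm tot}(K)} P_i^{(t)} g_i^{(t)}}}{\EE{T_{\rm tot}(K)}} \leq \Iavg + \frac{\EE{X(K)}}{\EE{T_{\rm tot}(K)}} .
\label{Interf_Ratio}
\end{equation}
The last term is then driven to zero: because every frame contains at least one slot, $T_k\geq 1$, so $\EE{T_{\rm tot}(K)}\geq K$ and hence $\EE{X(K)}/\EE{T_{\rm tot}(K)} \leq \EE{X(K)}/K$, which tends to zero by the mean rate stability of $\{X(k)\}$ (Definition \ref{Mean_Rate_Def}). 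This is the exact analogue of the factor $\EE{Y_i(K)}/K$ vanishing in the proof of Lemma \ref{Mean_Rate_Lemma}.

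The step I expect to be the real obstacle is identifying the left-hand side of \eqref{Interf_Ratio} with the long-term average interference $I$ of \eqref{Interference}: the definition in \eqref{Interference} is normalized by a \emph{deterministic} slot count $T$, whereas \eqref{Interf_Ratio} is normalized by the \emph{random} frame horizon $T_{\rm tot}(K)$. I would bridge this with a renewal-reward argument over the frame structure. As $K\to\infty$ one has $T_{\rm tot}(K)\to\infty$, which is guaranteed by the finiteness of the busy-period moments established in Appendix \ref{No_Deep_Fade}, so that the frame boundaries form a subsequence of slot indices along which the per-slot time average converges to the same $I$; the ratio of the expected accumulated interference to the expected accumulated slots over whole frames then converges to $I$. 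Taking $\limsup_{K\to\infty}$ on both sides of \eqref{Interf_Ratio} and combining the two limits gives $I\leq\Iavg$, which completes the proof.
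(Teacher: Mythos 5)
Your proposal is correct and is exactly the argument the paper intends: the paper omits this proof, stating only that it is ``similar to that of Lemma \ref{Mean_Rate_Lemma},'' and your steps (dropping the $(\cdot)^+$, telescoping with $X(0)=0$, dividing by $\EE{T_{\rm tot}(K)}\geq K$, and invoking mean rate stability) are the faithful analogue of that proof with $\Iavg$ playing the role of the bound $d_i$ on $r_i(k)$. Your closing renewal-reward remark fills in the one detail the paper glosses over, namely reconciling the per-slot normalization in \eqref{Interference} with the frame-based normalization, so nothing is missing.
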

\begin{proof}
The proof is similar to that of Lemma \ref{Mean_Rate_Lemma} and is omitted for brevity.
\end{proof}
Lemma \ref{Mean_Rate_Lemma_Avg_Interf} says that if the power allocation and scheduling algorithm results in mean rate stable $\Xvq$, then the interference constraint of problem \eqref{Prob} is satisfied.

Before presenting the {\DOAC} policy, we first discuss the idea behind it. Theorem \ref{Optimality_Avg} will show that the optimum power allocation for SU $i$ is
	\begin{equation}
	\Pit=\min\lb\frac{\Iinst}{g_i^{(t)}},P_i(k)\rb,
	\label{Pow_Allocation}
	\end{equation}
where $P_i(k)\in[\Pminn_i,\Pmax]$ is a power parameter that is fixed within frame $k$ (i.e. $\forall t\in\script{F}(k)$). Intuitively, a policy that solves problem \eqref{Prob} should allocate SU $i$'s power and assign its priority such that SU $i$'s expected delay and the expected interference to the PU is minimized. The {\DOAC} policy is defined as the policy that selects the power parameter vector $\bfP{}(k)\parFdef{P}$ jointly with the priority list $\bfpi(k)$ that minimizes  $\Psi\triangleq\sum_{j=1}^N\psi_{\pi_j}(P_{\pi_j}(k),\brhomax_{\pi_{j-1}})$ where
\begin{align}
\psi_{\pi_j}(P,\brhomax_{\pi_{j-1}})&\triangleq \psiDel_{\pi_j}(P,\brhomax_{\pi_{j-1}}) + \psiInt_{\pi_j}(P),  \hspace{0.1in} {\rm with}
\label{Optimization_Obj}\\
\psiDel_{\pi_j}(P,\brhomax_{\pi_{j-1}})&\triangleq Y_{\pi_j}(k) \lambda_{\pi_j} \Wup(P,\brhomax_{\pi_{j-1}}), \hspace{0.1in} {\rm while}\\
\psiInt_{\pi_j}(P)&\triangleq X(k)\rho_{\pi_j}(P)P.
\end{align}
The function $\psiDel_{\pi_j}(P,\brhomax_{\pi_{j-1}})$ (and $\psiInt_{\pi_j}(P)$) represents the amount of delay (interference) that SU $\pi_j$ is expected to experience (to cause to the PU) during frame $k$.

To minimize $\Psi$ in an efficient way we need to set the function $\brhomax_{\pi_j}$, that upper bounds $\brho_{\pi_j}$, to some function that does not depend except on the power $P_{\pi_j}(k)$ of user $\pi_j$. Thus, the functions $\psi_{\pi_j}(P_{\pi_j}(k),\brhomax_{\pi_{j-1}})$ become decoupled for all $j\in\script{N}$. That is, each $\psi_{\pi_j}(P_{\pi_j}(k),{\pi_{j-1}})$ is a function in $P_{\pi_j}(k)$ only. This $\brhomax_{\pi_j}$ functions, for all $j\in \script{N}$, are given by
\begin{equation}
\brhomax_{\pi_j}\triangleq \sum_{l=1}^j \rho_{\pi_l}\lb P_{\pi_l}^{\brhomax}\rb,
\label{brhomax}
\end{equation}
where
\begin{equation}
P_{\pi_j}^{\brhomax}\triangleq\arg \min_P \psi_{\pi_j}\lb P,\brhomax_{\pi_{j-1}}\rb.
\label{Local_psi_Inequality}
\end{equation}
Equation \eqref{Local_psi_Inequality} means that in order to find $P_{\pi_j}^{\brhomax}$ we need to find $P_{\pi_{j-1}}^{\brhomax}$. Hence, we find $P_{\pi_j}^{\brhomax}$ recursively starting from $j=1$ at which $\brhomax_{\pi_0}=0$ by definition. We will show that $\brhomax_{\pi_j}$ is an upper bound on $\brho_{\pi_j}$ in the following lemma.
\begin{lma}
\label{Local_psi_Lma}
Given some priority list $\bfpi(k)$, for any user $\pi_j\in\script{N}$ the function $\brho_{\pi_j}$ evaluated at the power vector $\bfP{}^{\brho}$ which is the power vector that minimizes $\sum_{j=1}^N\psi_{\pi_j}(P_{\pi_j}(k),\brho_{\pi_{j-1}})$, is upper bounded by $\brhomax_{\pi_j}$. Namely,
\begin{equation}
\left.\brho_{\pi_j}\right\vert_{\bfP{}^{\brho}}\leq \left.\brhomax_{\pi_j}\right\vert_{\bfP{}^{\brhomax}}
\label{brho_leq_brhomax}
\end{equation}
where
\begin{equation}
\bfP{}^{\brho}\triangleq \arg\min_{\bfP{}}\sum_{j=1}^N\psi_{\pi_j}(P_{\pi_j},\brho_{\pi_{j-1}})
\label{bfP_rho}
\end{equation}
while
\begin{equation}
\bfP{}^{\brhomax}\triangleq \arg\min_{\bfP{}}\sum_{j=1}^N\psi_{\pi_j}(P_{\pi_j},\brhomax_{\pi_{j-1}})
\label{bfP_rhomax}
\end{equation}
\end{lma}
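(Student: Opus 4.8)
The plan is to establish \eqref{brho_leq_brhomax} by induction on the priority index $j$, comparing the jointly optimal (coupled) vector $\bfP{}^{\brho}$ of \eqref{bfP_rho} with the greedily optimal (decoupled) vector $\bfP{}^{\brhomax}$ of \eqref{bfP_rhomax}, whose cumulative load is precisely $\brhomax_{\pi_j}$ by \eqref{brhomax}. Two monotonicity facts drive the argument. First, writing $b$ for the preceding-load argument, the explicit form \eqref{Priority_Delay_UB} shows that $\Wup(P,b)$, and hence each $\psi_{\pi_j}(P,b)$, is non-decreasing in $b$, since both $1/(1-b)$ and $1/\lb 1-b-\rho_{\pi_j}(P)\rb$ increase with $b$ while $\psiInt_{\pi_j}$ is independent of $b$. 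Second, since $\rho_i(P)=\lambda_i/\mu_i(P)$ is decreasing in $P$ (more power gives a larger service rate $\mu_i$), a larger $b$ makes the delay term more sensitive to $\rho_{\pi_j}(P)$ and pushes the per-user minimizer $P^*(b)\triangleq\arg\min_P\psi_{\pi_j}(P,b)$ toward larger power; consequently $g_j(b)\triangleq\rho_{\pi_j}\lb P^*(b)\rb$ is non-increasing in $b$.

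The induction hypothesis is $\left.\brho_{\pi_{j-1}}\right\vert_{\bfP{}^{\brho}}\leq\brhomax_{\pi_{j-1}}$, and the base case $j=1$ holds since $\brho_{\pi_0}=\brhomax_{\pi_0}=0$. For the inductive step I would invoke the optimality of $\bfP{}^{\brho}$ for the coupled objective $\sum_{m}\psi_{\pi_m}(P_{\pi_m},\brho_{\pi_{m-1}})$. Because $P_{\pi_j}$ enters $\brho_{\pi_{m-1}}$ for every lower-priority $m>j$, and each $\psi_{\pi_m}$ is non-decreasing in its preceding load while $\rho_{\pi_j}$ is decreasing in $P_{\pi_j}$, the downstream contribution to the gradient is non-positive; the first-order (KKT) condition at $\bfP{}^{\brho}$ therefore forces $\left.\partial\psi_{\pi_j}/\partial P\right\vert_{P_{\pi_j}^{\brho}}\geq 0$. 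By unimodality of $\psi_{\pi_j}(\cdot,b)$ in $P$ this places $P_{\pi_j}^{\brho}$ at or above $P^*(b_1)$ with $b_1\triangleq\left.\brho_{\pi_{j-1}}\right\vert_{\bfP{}^{\brho}}$, so that $\rho_{\pi_j}(P_{\pi_j}^{\brho})\leq g_j(b_1)$. The boundary cases $P_{\pi_j}^{\brho}\in\{\Pminn_{\pi_j},\Pmax\}$ are handled identically by reading the first-order condition as the appropriate one-sided KKT inequality.

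Accumulating gives $\left.\brho_{\pi_j}\right\vert_{\bfP{}^{\brho}}=b_1+\rho_{\pi_j}(P_{\pi_j}^{\brho})\leq b_1+g_j(b_1)=:h_j(b_1)$, whereas $\brhomax_{\pi_j}=b_2+g_j(b_2)=h_j(b_2)$ with $b_2\triangleq\brhomax_{\pi_{j-1}}\geq b_1$ by the induction hypothesis. The step then closes provided $h_j(b)=b+g_j(b)$ is non-decreasing, i.e. the total utilization through priority $j$ grows with the upstream load. This is the main obstacle: since $g_j$ is itself decreasing, one must show it does not decrease faster than the identity, $g_j'(b)\geq-1$. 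I would derive this from the closed forms of $\Wup$ in \eqref{Priority_Delay_UB} and of $\rho_{\pi_j}(P)$ via the implicit-function relation $dP^*/db=-\lb\partial^2\psi_{\pi_j}/\partial P\,\partial b\rb/\lb\partial^2\psi_{\pi_j}/\partial P^2\rb$, using the stability margin $b+\rho_{\pi_j}(P)<1$ ensured by the choice of $\Pminn_{\pi_j}$ to keep the relevant denominators bounded away from zero. Granting $h_j$ monotone, $h_j(b_1)\leq h_j(b_2)=\brhomax_{\pi_j}$ completes the induction and hence the lemma.
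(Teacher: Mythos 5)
Your setup is coherent up to the point you yourself flag: the induction on $j$, the sign of the cross-partial $\partial^2\psi_{\pi_j}/\partial P\,\partial b<0$, and the one-sided KKT argument giving $P_{\pi_j}^{\brho}\geq P^*(b_1)$ and hence $\rho_{\pi_j}(P_{\pi_j}^{\brho})\leq g_j(b_1)$ are all at the same level of rigor as the paper's own ``two forces'' reasoning. The genuine gap is exactly the step you defer and then assume: that $h_j(b)\triangleq b+g_j(b)$ is non-decreasing, i.e.\ $g_j'(b)\geq -1$. This is not a technicality; it is the crux, and nothing in the problem structure delivers it. Along smooth branches of the minimizer, $g_j'(b)=\rho_{\pi_j}'\lb P^*(b)\rb\cdot dP^*/db$ with $dP^*/db=-\lb \partial^2\psi_{\pi_j}/\partial P\partial b\rb /\lb \partial^2\psi_{\pi_j}/\partial P^2\rb$, so $g_j'\geq-1$ amounts to a lower bound on the curvature $\partial^2\psi_{\pi_j}/\partial P^2$ at the minimizer in terms of $\vert\rho_{\pi_j}'\vert$ and $\vert\partial^2\psi_{\pi_j}/\partial P\partial b\vert$; neither \eqref{Priority_Delay_UB} nor the definition of $\psiInt_{\pi_j}$ provides any such bound, and the curvature can degenerate. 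Worse, under the mere unimodality you permit, $P^*(b)$ need not be continuous: if $\psi_{\pi_j}(\cdot,b)$ develops two competing local minima, the global minimizer jumps upward at a critical $b$, $g_j$ drops by a finite amount there, and $h_j$ strictly decreases across the jump. So ``granting $h_j$ monotone'' grants precisely what must be proven, and it can fail under the assumptions your own argument tolerates.

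The gap is a direct consequence of routing the induction through cumulative loads rather than through the powers, which is where you diverge from the paper. The paper never needs $h_j$: it proves the stronger coordinatewise ordering $P_{\pi_j}^{\brho}\geq P_{\pi_j}^{\brhomax}$ for every $j$ (equations \eqref{P_pi_j_brhomax}--\eqref{Neglect_All_Terms_But_One_2}) and then obtains \eqref{brho_leq_brhomax} by summing the termwise inequalities $\rho_{\pi_l}(P_{\pi_l}^{\brho})\leq\rho_{\pi_l}(P_{\pi_l}^{\brhomax})$, using only that each $\rho_{\pi_l}$ is decreasing in $P_{\pi_l}$. Be aware, however, that you cannot close your hole by grafting in the paper's inductive step: at the analogous point the paper invokes $\arg\min_P\left[\psiInt_{\pi_j}(P)+\psiDel_{\pi_j}(P,\brho_{\pi_{j-1}})\right]\geq\arg\min_P\left[\psiInt_{\pi_j}(P)+\psiDel_{\pi_j}(P,\brhomax_{\pi_{j-1}})\right]$ for $\brho_{\pi_{j-1}}\leq\brhomax_{\pi_{j-1}}$, i.e.\ that the per-user minimizer is non-increasing in the preceding load --- the opposite monotonicity to the one your cross-partial computation (correctly, in my reading) establishes. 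The two arguments are incompatible at precisely the step in question, so as written your proposal proves the lemma only under the additional, unestablished hypothesis that $b\mapsto b+g_j(b)$ is non-decreasing.
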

\begin{proof}
We first argue that $P_{\pi_j}^{\brho}\geq P_{\pi_j}^{\brhomax}$ for any $j\in\script{N}$. Then we show that $\brho_{\pi_j}$ is decreasing in $P_{\pi_l}$ for all $l\leq j$ which completes the proof.

From \eqref{bfP_rhomax}, we have
\begin{equation}
P_{\pi_j}^{\brhomax}=\underset{P_{\pi_j}\leq\Pmax}{\arg\min}\sum_{l=1}^N \psi_{\pi_l}\lb P_{\pi_l},\brhomax_{\pi_{l-1}}\rb.
\end{equation}
But since $\brhomax_{\pi_j}$ is not a function in $P_{\pi_l}$ except if $j=l$, then
\begin{equation}
P_{\pi_j}^{\brhomax}=\underset{P_{\pi_j}\leq\Pmax}{\arg\min}\left[\psiInt_{\pi_j}\lb P_{\pi_j}\rb+\psiDel_{\pi_j}\lb P_{\pi_j},\brhomax_{\pi_{j-1}}\rb \right]
\label{P_pi_j_brhomax}
\end{equation}

Moreover, from \eqref{bfP_rho}, we have
\begin{align}
P_{\pi_j}^{\brho}&=\underset{P_{\pi_j}\leq\Pmax}{\arg\min}\sum_{l=1}^N \psi_{\pi_l}\lb P_{\pi_l},\brho_{\pi_{l-1}}\rb\\
&=\underset{P_{\pi_j}\leq\Pmax}{\arg\min}\left[\psiInt_{\pi_j}\lb P_{\pi_j}\rb+\sum_{l=j}^N \psiDel_{\pi_l}\lb P_{\pi_l},\brho_{\pi_{l-1}}\rb \right]
\label{Obj_ToBe_Differentiated}
\end{align}
If $\psiInt_{\pi_j}\lb P_{\pi_j}\rb$ is non increasing in $P_{\pi_j}$ over its entire domain, then the optimum solution for \eqref{Obj_ToBe_Differentiated} is $P_{\pi_j}^{\brho}=\Pmax$, which is the same as the optimum solution of $P_{\pi_j}^{\brhomax}$. Hence, we continue the proof assuming that there exists a region in the domain of $P_{\pi_j}$ where $\psiInt_{\pi_j}\lb P_{\pi_j}\rb$ is increasing.

Since $\Wup(P_{\pi_j},\brho_{\pi_{j-1}})$ is decreasing in $P_{\pi_j}$, $\psiDel_{\pi_j}\lb P_{\pi_j},\brho_{\pi_{j-1}}\rb$ is also decreasing in $P_{\pi_j}$. Hence, the summation in \eqref{Obj_ToBe_Differentiated} is decreasing in $P_{\pi_j}$. At the same time, $\psiInt_{\pi_j}\lb P_{\pi_j}\rb$ is increasing in $P_{\pi_j}$. Thus, there are two forces that determines the optimum value of $P_{\pi_j}^{\brho}$; the first term is in favor of decreasing it, while the summation is in favor of increasing it. We continue the proof by induction on $j$. Setting $j=1$ in \eqref{Obj_ToBe_Differentiated}, we can easily see that if we neglect all the terms in the summation except one, the value of $P_{\pi_1}^{\brho}$ decreases. Namely,
\begin{align}
\label{Neglect_All_Terms_But_One_1}
P_{\pi_1}^{\brho}&\geq\underset{P_{\pi_1}\leq\Pmax}{\arg\min}\left[\psiInt_{\pi_1}\lb P_{\pi_1}\rb+\psiDel_{\pi_1}\lb P_{\pi_1},\brho_{\pi_0}\rb \right]\\
&=P_{\pi_1}^{\brhomax},
\label{P_brho_geq_P_brhomax_1}
\end{align}
where the last equation follows after setting $j=1$ in \eqref{P_pi_j_brhomax}. From \eqref{Rate_Explicit} we can see that $\brho_{\pi_1}$ is decreasing in $P_{\pi_1}$. With this in mind, and after using the inequality in \eqref{P_brho_geq_P_brhomax_1} we get $\brho_{\pi_1}\leq\brhomax_{\pi_1}$. Setting $j=2$ in \eqref{Obj_ToBe_Differentiated} and neglecting all terms in the summation except at $j=2$ yields
\begin{align}
\label{Neglect_All_Terms_But_One_2}
P_{\pi_2}^{\brho}&\geq\underset{P_{\pi_2}\leq\Pmax}{\arg\min}\left[\psiInt_{\pi_2}\lb P_{\pi_2}\rb+\psiDel_{\pi_2}\lb P_{\pi_2},\brho_{\pi_1}\rb \right]\\
&\geq\underset{P_{\pi_2}\leq\Pmax}{\arg\min}\left[\psiInt_{\pi_2}\lb P_{\pi_2}\rb+\psiDel_{\pi_2}\lb P_{\pi_2},\brhomax_{\pi_1}\rb \right]\\
&=P_{\pi_2}^{\brhomax}.
\end{align}
Thus we get $\brho_{\pi_2}\leq\brhomax_{\pi_2}$. Repeating for a general $j\leq N$ and assuming that $\brho_{\pi_{j-1}}\leq\brhomax_{\pi_{j-1}}$, we get $P_{\pi_2}^{\brho}\geq P_{\pi_2}^{\brhomax}$ yielding $\brho_{\pi_j}\leq\brhomax_{\pi_j}$ which completes the proof.
\end{proof}
Lemma \ref{Local_psi_Lma} states that we can replace $\brho_{\pi_j}$ by $\brhomax_{\pi_j}$ as an upper bound in \eqref{Priority_Delay_UB}. $\brhomax_{\pi_j}$ has an advantage over $\brho_{\pi_j}$ (and hence $\psi_{\pi_j}\lb P_{\pi_j},\brhomax_{\pi_{j-1}}\rb$ over $\psi_{\pi_j}\lb P_{\pi_j},\brho_{\pi_{j-1}}\rb$) which is that it is not a function in $P_{\pi_l}$ as long as $l\neq j$. This decouples the power search optimization problem to $N$ one-dimensional searches

To solve $\min_{\bfpi(k),\bfP{}(k)} \Psi$, we use the dynamic programming illustrated in Algorithm \ref{DOACopt}.
\begin{algorithm}
\caption{{\DOACopt}: Optimization-problem-solution algorithm called by the {\DOAC} policy at the beginning of frame $k$ to solve for $\bfPst(k)$ as well as $\bfpist(k)$.}
\begin{algorithmic}[1]
\label{DOACopt}
\STATE Define $\script{S}$ as the set of all sets formed of all subsets of $\script{N}$ and define the auxiliary functions
\begin{align*}
&\tPsi(\cdot,\cdot):\script{N}\times\sS\rightarrow \mathbb{R}^+\\
& \trho(\cdot):\sS\rightarrow [0,1],\\
&\tS(\script{X}):\sS\rightarrow\script{N}^{\vert\script{X}\vert},\\
&\tP(\script{X}):\sS\rightarrow[0,\Pmax]^{\vert\script{X}\vert},\\
&\bP(\cdot,\cdot):\sS\times\script{N}\rightarrow[0,\Pmax].
\end{align*}
\FOR{$i=1,\cdots, N$}
\STATE In stage $i$, the first $i$ priorities have been assigned to $i$ users. The corresponding priority list is denoted $[\pi_1,\cdots,\pi_i]$. In stage $i$ we have $\binom{N}{i}$ states each corresponds to a set $j$ formed from all possible combinations of $i$ elements chosen from the set $\script{N}$. We calculate $\tPsi(i,j)$ associated with each state $j$ in terms of $\tPsi(i-1,\cdot)$ obtained in stage $i-1$ as follows.
\FOR{$j\in$ all possible $i$-element sets}
\STATE At state $j\triangleq \{\pi_1,\cdots,\pi_i\}$, we have $i$ transitions, each connects it to state $j'\triangleq j\backslash l$ in stage $i-1$, $\forall l\in j$. Find the power associated with each transition $l\in j$ denoted
\begin{equation}
\bP(j,l)\triangleq\arg\min_P \psi_l(P,\trho(j\backslash l))
\label{Local_psi_minimization}
\end{equation}
\STATE Set
\begin{align*}
& l^*=\arg\min_{l\in j}\tPsi\lb i-1,j\backslash l\rb+ \psi_l\lb\bP(j,l),\trho(j\backslash l)\rb,\\
&\tPsi(i,j)=\tPsi(i-1,j\backslash l^*)+ \psi_{l^*}\lb\bP(j,l^*),\trho(j\backslash l^*)\rb,\\
&\trho(j)=\trho\lb j\backslash l^*\rb+\rho\lb\bP(j,l^*)\rb,\\
&\tS(j)=\left [\tS\lb j\backslash l^*\rb,l^*\right],\\
&\tP(j)=\left [\tP\lb j\backslash l^*\rb,\bP(j,l^*)\right].
\end{align*}
\ENDFOR
\ENDFOR
\STATE Set $\bfpist(k)= \tS\lb\script{N}\rb$ and $\bfPst(k)=\tP\lb\script{N}\rb$.
\end{algorithmic}
\end{algorithm}
 Its search complexity is of $O(MN2^N)$ where $M$ is the number of iterations in a one-dimensional search. Compared to the complexity of $O(M^N \cdot N!)$ which is that of the $N$-dimensional power search along with the brute-force of all $N!$ permutations of priority list $\bfpi(k)$, this is a large complexity reduction. However, the $O(MN2^N)$ is still high if $N$ was large. Finding an optimal algorithm with a lower complexity is extremely difficult since the scheduling and power control problem are coupled. In other words, in order to find the optimum scheduler we need to know the optimum power vector and vice versa. In Section \ref{Suboptimal} we propose a sub-optimal policy with a very low complexity and little degradation in the delay performance. We now present the {\DOAC} policy that the BS executes at the beginning of frame $k$.

{\bf {\DOAC} Policy} (executed at the beginning of frame $k$):
\begin{enumerate}
	\item The BS executes {\DOACopt} in Algorithm \ref{DOACopt} to find the optimum power parameter vector $\bfPst(k) \parFdef{P^*}$ as well as the optimum priority list $\bfpist(k) \parFdef{\pi^*}$ that will be used during frame $k$.
	\item The BS broadcasts the vector $\bfPst(k)$ to the SUs.
	\item At the beginning of each slot $t\in\script{F}(k)$, the BS schedules SU $i^*$ that has the highest priority in the list $\bfpist(k)$ among those having non-empty buffers.
	\item SU $i^{*(t)}$, in turn, transmits $M_{i^{*(t)}}^{(t)}$ bits as dictated by equation \eqref{Num_Bits} where $P_i^{(t)}=0$ for all $i\neq i^{*(t)}$ while $P_{i^{*(t)}}^{(t)}$ is given by equation \eqref{Pow_Allocation}.
	\item At the end of frame $k$, for all $i\in \script{N}$ the BS updates:
	\begin{enumerate}
		\item $r_i(k)= d_i$ if $V<Y_i(k)\lambda_i$, and $r_i(k)=0$ otherwise,
		\item $X(k+1)$ via equation \eqref{Avg_Interf_Q},
		\item $Y_i(k+1)$ via equation \eqref{Delay_Q}, $\forall i\in \script{N}$.
	\end{enumerate}
\end{enumerate}
Define $C_X\triangleq\lb\Pmax^2\gmax^2+\Iavg^2\rb\lb(1-a)(2+a)+\EE{B^2}+2\EE{B}(a-a^2)\rb/a^2$ and $C\triangleq C_Y+C_X$ where $\EE{B}$ is a bound on the mean of $B(k)$. It can be shown that $\EE{B}$ and $\EE{B^2}$ are finite since the first two moments of the service time are finite (see Appendix \ref{No_Deep_Fade}). Thus, $C_X$ is finite. Next, we state Theorem \ref{Optimality_Avg} that discusses the optimality of the {\DOAC} policy.

\begin{thm}
\label{Optimality_Avg}
When the BS executes the {\DOAC} policy, the time average of the SUs' delays satisfy the following inequality
\begin{equation}
\sum_{i=1}^N{\bW_i} \leq \frac{aC}{V} + \sum_{i=1}^N{\bW_i^*}
\label{Optimality_Equation_Avg}
\end{equation}
where $\bW_i^*$ is the optimum value of the delay when solving problem \eqref{Prob}. Moreover, the virtual queues $\Xvq$ and $\Yivq$ are mean rate stable $\forall i \in \script{N}$.
\end{thm}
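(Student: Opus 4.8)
The plan is to reuse the frame-based Lyapunov drift-plus-penalty machinery developed for the {\DOIC} policy in Theorem~\ref{Optimality} (Appendix~\ref{Optimality_Proof_Inst}) and to augment it with the interference virtual queue $X(k)$. First I would take the quadratic Lyapunov function
\[
L(k)\triangleq \frac{1}{2}\lb \sum_{i=1}^N Y_i(k)^2 + X(k)^2\rb
\]
and study its per-frame conditional drift $\Delta(k)\triangleq \EE{L(k+1)-L(k)\mid \bfY(k),X(k)}$. Squaring the virtual-queue recursions \eqref{Delay_Q} and \eqref{Avg_Interf_Q}, using $\lb (q-b)^+ + a\rb^2\leq q^2+a^2+b^2+2q(a-b)$ for $a,b\geq 0$, and collecting the second- and fourth-moment terms into a single constant, I expect to reach
\[
\Delta(k)\leq aC + \sum_{i=1}^N Y_i(k)\,\EE{\sum_{j\in\script{A}_i(k)}\lb W_i^{(j)}-r_i(k)\rb \;\Big|\; \bfY(k),X(k)} + X(k)\,\EE{\sum_{i=1}^N\sum_{t\in\script{F}(k)}P_i^{(t)}g_i^{(t)}-\Iavg T_k \;\Big|\; \bfY(k),X(k)},
\]
where $C=C_Y+C_X$. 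Establishing that the dropped remainder is genuinely bounded by $aC$ is precisely what the finiteness of the service-time moments (Appendix~\ref{No_Deep_Fade}), and hence of $C_Y$ and $C_X$, is needed for; this is the first place the random frame length $T_k$ must be controlled.

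Next I would add the penalty $V\sum_{i=1}^N\EE{\sum_{j\in\script{A}_i(k)} r_i(k)\mid\bfY(k),X(k)}$ to both sides to form the drift-plus-penalty expression. The key observation is that the right-hand side decomposes into controllable groups: the terms in $r_i(k)$, which are minimized exactly by the threshold rule in step~4(a) of the {\DOAC} policy; and the terms in the priority list and power parameters, which -- after replacing the true load $\brho_{\pi_{j-1}}$ by its decoupling upper bound $\brhomax_{\pi_{j-1}}$ as justified in Lemma~\ref{Local_psi_Lma} and using the waiting-time bound \eqref{Priority_Delay_UB} -- coincide with $\Psi=\sum_{j=1}^N\psi_{\pi_j}\lb P_{\pi_j}(k),\brhomax_{\pi_{j-1}}\rb$. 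Since Algorithm~\ref{DOACopt} computes the exact minimizer $\lb \bfpist(k),\bfPst(k)\rb$ of $\Psi$, the {\DOAC} policy minimizes this upper bound on the drift-plus-penalty over all admissible scheduling-and-power choices in frame $k$. Verifying that the $\Psi$-minimization really dominates the true drift-plus-penalty -- i.e. that substituting $\brhomax$ for $\brho$ only loosens the bound in the correct direction -- is the main technical obstacle, and it is exactly what Lemma~\ref{Local_psi_Lma} is designed to supply.

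I would then invoke the existence of an optimal stationary policy $\Pi^*$ that attains $\sum_i \bW_i^*$ while meeting the constraints of \eqref{Prob}; restricting attention to frame-based preemptive-resume priority policies costs no optimality by Lemma~1 of \cite{li2011delay}. Because $\Pi^*$ satisfies the constraints, under $\Pi^*$ the arrival rates into the two virtual queues do not exceed their service rates, so the queue-weighted expectations above are non-positive, while its penalty term equals $V$ times the optimal per-frame objective. Plugging $\Pi^*$ into the bound that {\DOAC} minimizes yields
\[
\Delta(k)+V\sum_{i=1}^N\EE{\sum_{j\in\script{A}_i(k)} r_i(k)\mid\bfY(k),X(k)}\leq aC + V\sum_{i=1}^N \bW_i^*\,\EE{\vert\script{A}_i(k)\vert\mid\bfY(k),X(k)}.
\]
Taking total expectations, summing over $k=0,\dots,K-1$, telescoping $L(\cdot)$ with $L(0)=0$ and $L(K)\geq 0$, and dividing by $K$ then bounds both $\EE{L(K)}/K$ and the time-averaged penalty.

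Finally I would read off the two conclusions. From $\EE{L(K)}=O(K)$ together with $\EE{Y_i(K)}\leq\sqrt{2\EE{L(K)}}$ and $\EE{X(K)}\leq\sqrt{2\EE{L(K)}}$, so that $\EE{Y_i(K)}/K$ and $\EE{X(K)}/K$ vanish, the queues $\Yivq$ and $\Xvq$ are mean rate stable; Lemmas~\ref{Mean_Rate_Lemma} and \ref{Mean_Rate_Lemma_Avg_Interf} then deliver $\bW_i\leq d_i$ and $I\leq\Iavg$. From the time-averaged penalty bound, using $\bW_i\leq \bar r_i$ (a consequence of the mean rate stability of $Y_i$) and letting $K\rightarrow\infty$, I obtain $\sum_i \bW_i\leq aC/V+\sum_i\bW_i^*$, which is \eqref{Optimality_Equation_Avg}. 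The whole argument parallels the {\DOIC} proof; the genuinely new work is the bookkeeping for the extra queue $X(k)$ (and the resulting $C_X$ term, whose moments again rely on Appendix~\ref{No_Deep_Fade}) and the use of Lemma~\ref{Local_psi_Lma} to make the joint scheduling-and-power minimization tractable and correctly aligned with the drift bound.
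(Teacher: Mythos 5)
Your proposal follows essentially the same route as the paper's proof in Appendix \ref{Optimality_Proof}: the combined quadratic Lyapunov function over $X(k)$ and the $Y_i(k)$'s, the drift-plus-penalty bound with constant $C=C_Y+C_X$, the decomposition into the $r_i(k)$ terms (handled by the threshold update) and the $\Psi$-minimization via $W^{\rm up}$ and Lemma \ref{Local_psi_Lma} solved by Algorithm \ref{DOACopt}, the comparison against the optimal policy with genie-aided $r_i(k)=\bW_i^*$ (whose interference term is non-positive because it is feasible), and finally telescoping, Jensen-type mean-rate-stability, and Lemmas \ref{Mean_Rate_Lemma} and \ref{Mean_Rate_Lemma_Avg_Interf} to conclude. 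The argument and its key ingredients match the paper's; no gaps beyond those already present in the paper's own exposition.
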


\begin{proof}
See Appendix \ref{Optimality_Proof}.
\end{proof}

Theorem \ref{Optimality_Avg} says that the objective function of problem \eqref{Prob} is upper bounded by the optimum value $\sum_i\bW_i^*$ plus some constant gap that vanishes as $V\rightarrow\infty$. Having a vanishing gap means that the {\DOAC} policy is asymptotically optimal. Moreover, based on the mean rate stability of $\Xvq$ and $\Yivq$, the interference and delay constraints of problem \eqref{Prob} are satisfied.


\subsection{Near-Optimal Low Complexity Algorithm for Average Interference Problem}
\label{Suboptimal}
As seen in the {\DOAC} policy, the complexity of finding the optimal power vector and priority list can be high when the number of SUs $N$ is large. This is mainly due to the large complexity of Algorithm \ref{DOACopt}. In this subsection we propose a suboptimal solution with an extreme reduction in complexity and with little degradation in the performance. This solution solves for the power allocation and scheduling algorithm, thus it replaces the Algorithm \ref{DOACopt}.

The challenges in Algorithm \ref{DOACopt} are three-fold. First finding the priority list (scheduling problem) requires the search over $N!$ possibilities. Second, even with a genie-aided knowledge of the optimum list, we still have to carry-out $N$ one-dimensional searches to find $\bfPst(k)$ (power control problem). Third, the scheduling and power control problems are coupled. We tackle the latter two challenges first, by finding a low-complexity power allocation policy that is independent of the scheduling algorithm. Then we use the $c\mu$ rule \cite{c_mu_Rule} to find the priority list. The $c\mu$ rule is a policy that gives the priority list that minimizes the quantity $\sum_{i=1}^N Y_i(k)\lambda_i W_i(P_i(k))$, given some power allocation vector $\bfP{}(k)$.

Define $\Pmin$ to be the minimum power that satisfies $\sum_{j=1}^N\rho_{\pi_j}(\Pmin)<1$. Intuitively, if, for some $\pi_j\in\script{N}$, $X(k)\gg Y_{\pi_j}(k)$ then $P_{\pi_j}^*(k)$ is expected to be close to $\Pmin$ since the interference term $\psiI(P)$ dominates over $\psiD(P)$ in the $\pi_j$th term of the summation in equation \eqref{Optimization_Obj}. On the other hand, if $X(k)\ll Y_{\pi_j}(k)$ then $P_{\pi_j}^*(k)\approx \Pmax$. We propose the following power allocation policy for SU ${\pi_j}\in\script{N}$
\begin{equation}
\hat{P}_{\pi_j}(k)=
\left\{
\begin{array}{lll}
	\Pmin \mbox{ if } X(k)>Y_{\pi_j}(k)\\
	\Pmax \mbox{ otherwise.}
\end{array}
\right.
\label{Subopt_Power}
\end{equation}
We can see that the power allocation policy in \eqref{Subopt_Power} does not depend on the position of SU $i$ in the priority list as opposed to Algorithm \ref{DOACopt} which requires the knowledge of SU $\pi_j$'s priority position. In other words, $\hat{P}_{\pi_j}(k)$ is a function of $\pi_j$ but it is not a function of $j$. 
Before proposing the scheduling policy, we note the following two properties based on the knowledge of the power $\bfPst(k)$. First, when $X(k)=0$, the solution to the minimization problem $\min_\bfpi \Obj$ is given by the $c\mu$ rule \cite{c_mu_Rule} that sorts the SUs according to the descending order of $Y_{\pi_j}(k)\mu_{\pi_j}(\hat{P}_{\pi_j}(k))$. Second, when $Y_{\pi_j}(k)=0$ $\forall \pi_j \in \script{N}$, any sorting order would not affect the objective function $\Obj$.


The two-step scheduling and power allocation algorithm that we propose is 1) allocate the power vector $\bfP{}(k)$ according to \eqref{Subopt_Power}, then 2) assign priorities to the SUs in a descending order of $Y_{\pi_j}(k)\mu_{\pi_j}(\hat{P}_{\pi_j}(k))$ (the $c\mu$ rule). The complexity of this algorithm is that of sorting $N$ numbers, namely $O(N\log(N))$. This is a very low complexity if compared to that of the {\DOAC} policy of $O(MN \cdot N!)$. In Section \ref{Results} we will demonstrate that this huge reduction of complexity causes little degradation to the delay performance.

\section{Achievable Rate Region of the {\DOAC}}
We have shown that the {\DOAC} policy is delay optimal. In this section we show how much of the capacity region this policy achieves. We also present different scenarios where the {\DOAC} policy achieves the whole capacity region, hence becoming both throughput optimal and delay optimal at the same time.

Theorem 1 in \cite[pp. 52] {NeelyPhD} explicitly states the capacity region in the case of an instantaneous power constraint. 
%
%
In general this capacity region is strictly convex. A simple example of this capacity region is shown in Fig. \ref{Capacity_Region} for a 2-user case with channel gains $\gammait\in\{0,1\}$ while $\git=0$, for all $i=1,2$. The next lemma presents the rate region that the {\DOAC} achieves.

\begin{lma}
\label{Rate_Region}
Under the {\DOAC} policy, the queues of all users will be stable if and only if the arrival rate vector satisfies $\sum_{i\in\script{N}}\rho_i(\Pmax)<1$ with strict inequality.
\end{lma}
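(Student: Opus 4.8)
The plan is to recognize the entire system as a single-server, multiclass queue served under a work-conserving, preemptive-resume priority discipline, and to characterize its stability purely through the aggregate traffic intensity. Recall that $\rho_i(P)\triangleq\lambda_i/\mu_i(P)$ is the fraction of slots that user $i$'s traffic demands when it transmits at power $P$, since $\mu_i(P)=\EE{R_i(P)}/L$ is its mean service rate in packets per slot. Because the \DOAC\ schedule serves exactly one non-empty user per slot and never idles while work is present, the total unfinished work in the system behaves as that of a single server of unit rate fed by an aggregate load $\sum_{i\in\script{N}}\rho_i(P_i)$, where $P_i$ is the power applied to class $i$. For such a system the standard stability dichotomy holds: the backlog is stable if and only if the aggregate load is strictly below the server rate, i.e. $\sum_{i\in\script{N}}\rho_i(P_i)<1$. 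The proof then reduces to pinning down the best achievable load and showing that \DOAC\ attains it.

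For the ``only if'' direction I would argue by monotonicity. The map $P\mapsto\mu_i(P)$ is non-decreasing, since raising $P$ enlarges $\min(\Iinst/\git,P)$ and hence $R_i(P)=\log(1+\min(\Iinst/\git,P)\gammait)$; consequently $\rho_i(P)=\lambda_i/\mu_i(P)$ is non-increasing and $\rho_i(P)\geq\rho_i(\Pmax)$ for every feasible $P\le\Pmax$. Letting $f_i$ denote the long-run fraction of slots the server devotes to class $i$ and $\mu_i^{\rm real}$ its long-run realized service rate, stability of $Q_i^{(t)}$ forces the delivered throughput $f_i\mu_i^{\rm real}$ to cover the arrival rate $\lambda_i$, so $f_i\ge\lambda_i/\mu_i^{\rm real}\ge\rho_i(\Pmax)$ since no policy can exceed the maximal service rate $\mu_i(\Pmax)$. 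Summing and using $\sum_{i\in\script{N}}f_i\le1$ yields $\sum_{i\in\script{N}}\rho_i(\Pmax)\le1$, and the critically-loaded boundary case is unstable, so strict stability of every queue forces the strict inequality $\sum_{i\in\script{N}}\rho_i(\Pmax)<1$.

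For the ``if'' direction I would first exhibit a stabilizing benchmark: when $\sum_{i\in\script{N}}\rho_i(\Pmax)<1$, the fixed policy that assigns $P_i(k)\equiv\Pmax$ to every user is work-conserving with aggregate load $\sum_{i\in\script{N}}\rho_i(\Pmax)<1$ and therefore keeps all physical queues stable and all delays $\bW_i$ finite; this certifies that problem \eqref{Prob} is feasible for suitably loose delay bounds. I would then invoke the drift-plus-penalty optimality already established in Theorem \ref{Optimality_Avg}: under \DOAC\ the virtual queues $\Xvq$ and $\Yivq$ are mean rate stable and the per-user average delays are finite (within $aC/V$ of optimal). Finiteness of the $\bW_i$ together with Little's law forces the expected backlogs $\EE{Q_i^{(t)}}$ to be bounded, i.e. every physical queue is stable. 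Equivalently, one checks directly that the powers $P_i(k)\in[\Pminn_i,\Pmax]$ selected by \DOACopt\ keep the realized aggregate load $\sum_{i\in\script{N}}\rho_i(P_i(k))$ strictly below $1$ through the $\brhomax$ construction of \eqref{brhomax}, so the work-conserving server is stable by the dichotomy above.

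The main obstacle is the sufficiency argument, and specifically the coupling between power control and stability: because \DOAC\ may transmit below $\Pmax$, I must guarantee that its \emph{realized} aggregate load never creeps up to $1$, which is where the reservation powers $\Pminn_i$ and the decoupled bounds $\brhomax_{\pi_j}$ do the work, and I must make the passage from finite average delay (mean rate stability of the virtual queues) to genuine stability of the physical queues $Q_i^{(t)}$ rigorous via Little's law. A secondary technical point is the average interference constraint: since $\Pit\git=\min(\Iinst,P_i(k)\git)\le\Iinst$ at every slot, the instantaneous cap already bounds the interference, and the virtual queue $X(k)$ enforces $I\le\Iavg$ without tightening the load-based stability threshold, so the characterization remains exactly $\sum_{i\in\script{N}}\rho_i(\Pmax)<1$.
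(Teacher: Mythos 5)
Your proposal is correct, and its skeleton --- a work-conserving single-server reduction, a load-threshold dichotomy, and Little's law to pass from finite delay to bounded queues --- matches the paper's proof; the two differ in how each direction is executed. For necessity, the paper argues that once $\sum_{i\in\script{N}}\rho_i(\Pmax)\geq 1$, the preemptive-resume priority waiting-time expression \eqref{Priority_Delay} forces $\bW_{\pi_N}=\infty$ for the lowest-priority user under any priority list, whereas you use a more elementary conservation argument: the long-run fractions of slots $f_i$ devoted to the users must satisfy $f_i\geq\rho_i(\Pmax)$ under any stabilizing policy and must sum to at most one. Your version is policy-agnostic and would survive even if the scheduler were not a strict priority rule; the paper's is shorter because the priority-queue delay formula is already in place. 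For sufficiency, the paper simply asserts $\bW_i<\infty$ from the load condition and cites Little's law; you do more, certifying feasibility with a fixed-$\Pmax$ benchmark, invoking Theorem \ref{Optimality_Avg} for finite delays under {\DOAC}, and --- importantly --- flagging that {\DOAC} may transmit below $\Pmax$, so its realized load $\sum_{i\in\script{N}}\rho_i(P_i(k))$ can exceed $\sum_{i\in\script{N}}\rho_i(\Pmax)$, a point you close using the reservation powers $\Pminn_i$ and the decoupled bounds $\brhomax_{\pi_j}$ of \eqref{brhomax}. That last step is left implicit in the paper's one-line sufficiency claim, and making it explicit is the most valuable part of your write-up; the one place you should tighten your own argument is the assertion that the critically loaded boundary $\sum_{i\in\script{N}}\rho_i(\Pmax)=1$ is unstable, which the paper gets for free from the blow-up of the denominator in \eqref{Priority_Delay} but which your time-fraction argument only gives as a non-strict inequality.
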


\begin{proof}
If $\sum_{i\in\script{N}}\rho_i(\Pmax)\geq 1$, then for any $\pi\in\script{P}$ we will have $\bW_{\pi_N}=\infty$ from \eqref{Delay}. Thus, the queue of at least one of the users will build up. Moreover, if the inequality $\sum_{i\in\script{N}}\rho_i(\Pmax)<1$ holds, we will have $\bW_i<\infty$ for all $i\in\script{N}$. Little's law completes the proof.
\end{proof}

The achievable region provided in Lemma \ref{Rate_Region} is shown in Fig. \ref{Capacity_Region} for the 2-user case. This is a straight line intersecting the two axes at $\lb\mu_1(\Pmax),0\rb$ and $\lb0,\mu_2(\Pmax)\rb$, respectively. Although, in general, this rate region lies strictly inside the capacity region, there are cases where the two regions coincide. Before presenting two of these examples, we note that in these cases the {\DOAC} is delay optimal and throughput optimal at the same time.

{\it Example 1 (Unknown channel gain):} If all SUs are not able to estimate the gain of their direct channel to their BS, then each SU would be transmitting with a fixed rate that corresponds to the minimum non-zero channel gain $\gammamini\triangleq \min_{\gamma_i^{(t)}\neq 0}\gamma_i^{(t)}$. Hence the capacity region shrinks \cite[pp. 115]{srikant2013communication} to be the region bounded by the hyper plane intersecting the $i$th axis at the point $[0,\cdots, \mu_i^{\rm min},0, \cdots]^T$ where
\begin{equation}
\mu_i^{\rm min}\triangleq \frac{\log\lb 1+\Pmax\gammamini\rb\lb 1- \Prob{\gammait=0}\rb}{L},
\label{mu_min}
\end{equation}
thus coinciding with the {\DOAC} achievable rate region.

{\it Example 2 (Non-fading channel):} When we have a non-fading channel, each SU transmits with a fixed rate equals $\log(1+\Pmax)$ bits per slot. Hence the capacity region becomes the region in the first quadrant that is bounded by the hyper plane intersecting the each axis at $\log(1+\Pmax)/L$, thus coinciding with the {\DOAC} achievable rate region.

\begin{figure}%
\centering
\includegraphics[width=\widthn\columnwidth]{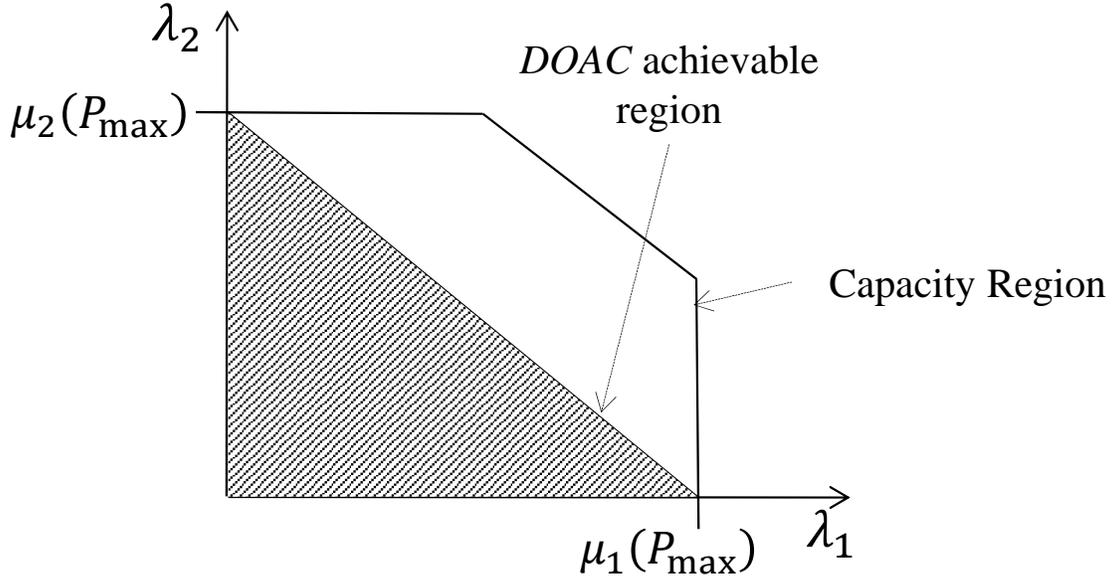}%
\caption{Capacity region for a 2-user case.}%
\label{Capacity_Region}%
\end{figure}

\section{Performance Under Channel Estimation Errors}
\label{Errors}
In this section, we present the solution of the system under channel estimation errors. 
We assume that SU $i$ estimates $\gamma_i^{(t)}$ and $g_i^{(t)}$ with $\alpha \%$ error relative to its actual value, where $\alpha>0$ represents the percentage of maximum deviation from the true value. This is a good model when the source of channel errors is mainly due to quantization. The observed values satisfy
\begin{align}
\lb 1-\frac{\alpha}{2}\rb\gamma_i^{(t)}\leq\gammaerr&\leq\lb 1+\frac{\alpha}{2}\rb\gamma_i^{(t)},
\label{gamma_Err}\\
\lb 1-\frac{\alpha}{2}\rb g_i^{(t)}\leq\gerr&\leq\lb 1+\frac{\alpha}{2}\rb g_i^{(t)},
\label{g_Err}
\end{align}
From equation \eqref{Tx_Rate}, in order to prevent outage, we need to consider the worst case scenario for $\gamma_i^{(t)}$. Therefore, we estimate $\gamma_i^{(t)}$ to be
\begin{equation}
\gamma_i^{(t)}=\frac{\gammaerr}{1+\frac{\alpha}{2}}.
\label{gamma_Estimate}
\end{equation}
Although this is a worst case estimation of $\gamma_i^{(t)}$, we will show through simulations that the reduction in performance is not high even with a relatively high value of $\alpha$. Similarly, instantaneous interference constraint in equation \eqref{Problem} is satisfied using a worst-case estimate of $g_i^{(t)}$ as
\begin{equation}
g_i^{(t)}=\frac{\gerr}{1-\frac{\alpha}{2}}.
\label{g_Estimate}
\end{equation}

With the estimated CSI values given by equations \eqref{gamma_Estimate} and \eqref{g_Estimate}, the two problems of instantaneous and average interference constraint, namely problems \eqref{Problem} and \eqref{Prob}, become functions of the observed CSI values as well as the parameter $\alpha$. Hence, the two policies {\DOIC} and {\DOAC} can be used to to solve problems \eqref{Problem} and \eqref{Prob}, respectively, under estimation errors. Section \ref{Results} simulates this system and shows the performance under this error model.

\section{Simulation Results}
\label{Results}
We simulated a system of $N=2$ SUs. Table \ref{Parameters} lists all parameter values for both scenarios; the instantaneous as well as the average interference constraint. We expect SU $1$ to have higher average delay in both scenarios. This is because it has a lower average channel gain and higher interference channel gain compared to those of SU $2$. However, the {\DOIC} policy can guarantee a bound on this delay using the constraint $\bW_1\leq d_1$, so that the QoS requirement of SU $1$ is satisfied. In our simulations we set $d_1=\dOne\Ts$ unless otherwise specified.
\begin{table}
	\centering
		\caption{Simulation Parameter Values}
		\label{Parameters}
		\begin{tabular}{|c|c||c|c|}
			\cline{1-4}
			Parameter & Value & Parameter & Value \\
			\cline{1-4}
			$L$ & $1000$ bits per packet & $\Iinst$ & 50 \\
			$\Rmax$ & $82$ bits per slot & $\Pmax$ & 100 \\
			$\lambda_1=\lambda_2=\lambda$ & $\lambda \in \{1,\cdots 10\}\times 10^{-3}$ packets/slot & $N$ & $2$ SUs \\
			$\fgammai(\gamma)$ & $\exp{\lb-\gamma/\bgamma_i\rb}/\bgamma_i$ & $\alpha$ & 0.1 \\
			$\fgi(g)$ & $\exp{\lb-g/\overline{g}_i\rb}/\overline{g}_i$ & $\epsilon$ & $0.1$ \\
			$(\bgamma_1,\bgamma_2)$ & $(2,4)$ & $V$ & $10$ \\
			$(\overline{g}_1,\overline{g}_2)$ & $(0.4,0.2)$ & $d_2$ & $\dTwo\Ts$
			\\ \cline{1-4}
			\end{tabular}
\end{table}

\subsection{Instantaneous Interference}
In Figures \ref{PerUser_Delay_Inst} and \ref{Avg_Delay_Inst} we consider problem \eqref{Problem} and assumed perfect knowledge of the direct and interference channel state information (CSI), namely $g_i^{(t)}$ and $\gamma_i^{(t)}$. Fig. \ref{PerUser_Delay_Inst} plots the average per-SU delay $\bW_i$, from equation \eqref{Delay}, for two cases; the first being the constrained optimization problem where $d_1=\dOne\Ts$ while setting $d_2$ to any arbitrarily high value (we set $d_2=\dTwo\Ts$), while the second is the unconstrained optimization problem where both $d_1$ and $d_2$ are set arbitrarily high (we set $d_1=d_2=\dTwo\Ts$). We call it the unconstrained problem because the average delay of both SUs is strictly below $\dTwo\Ts$, thus both delay constraints are inactive. The X-axis is the probability of a packet arrival per time slot $\lambda$, where $\lambda \triangleq \lambda_1=\lambda_2$. From Fig. \ref{PerUser_Delay_Inst} we can see a gap, in the unconstrained problem, between the average delay of SU $1$ and that of SU $2$. Hence, SU $1$ suffers from high delay. While for the constrained problem, the {\DOIC} policy has forced $\bW_1$ to be smaller than $\dOne\Ts$ for all $\lambda$ values. This comes at the cost of SU $2$'s delay. We conclude that the delay constraints in problem \eqref{Problem} can force the delay vector of the SUs to take any value as long as it is feasible.

\begin{figure}%
\centering
\includegraphics[width=\widthn\columnwidth]{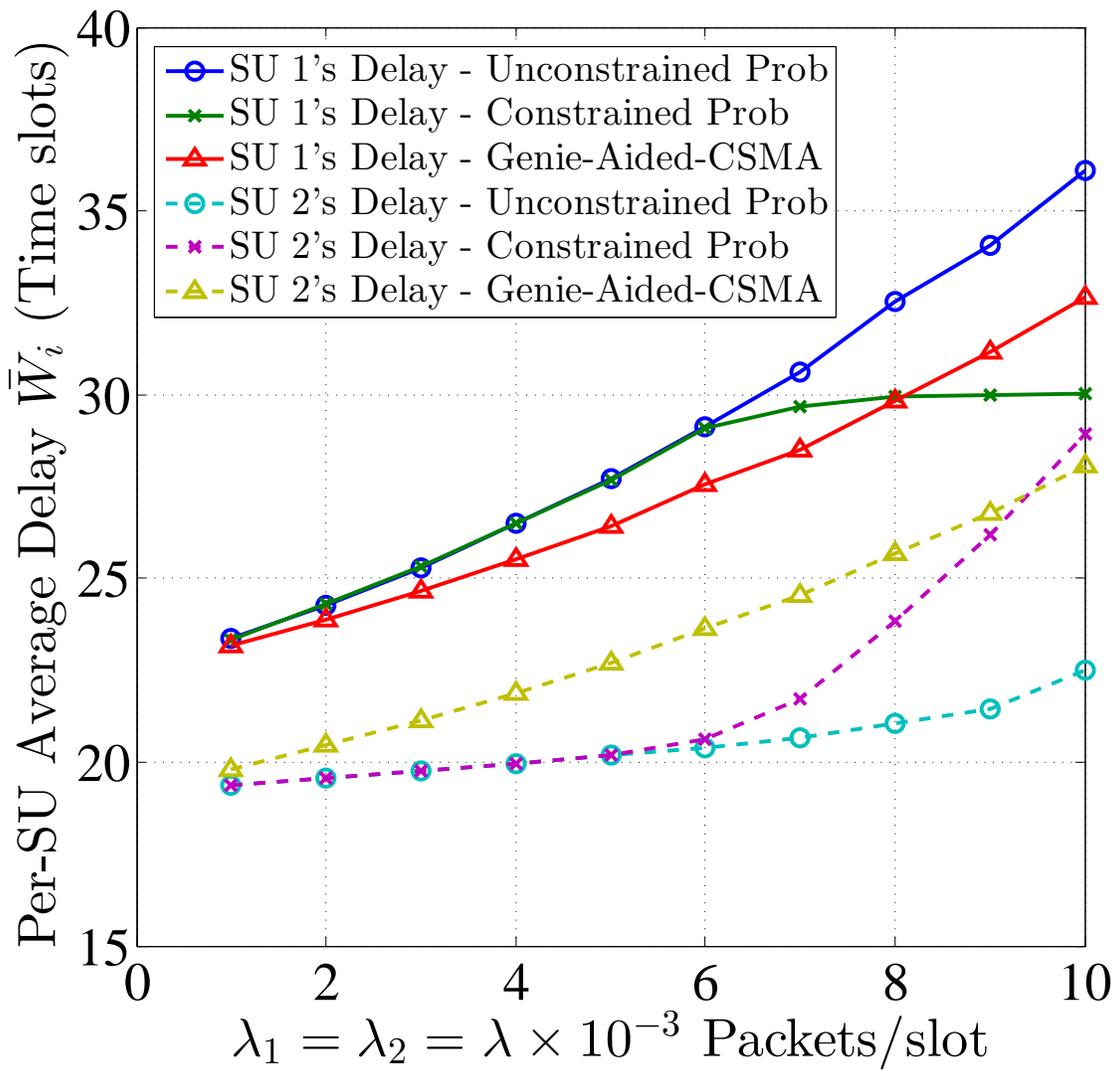}%
\caption{Average per-user delay for both the constrained and unconstrained optimization problems}%
\label{PerUser_Delay_Inst}%
\end{figure}

\begin{figure}%
\centering
\includegraphics[width=\widthn\columnwidth]{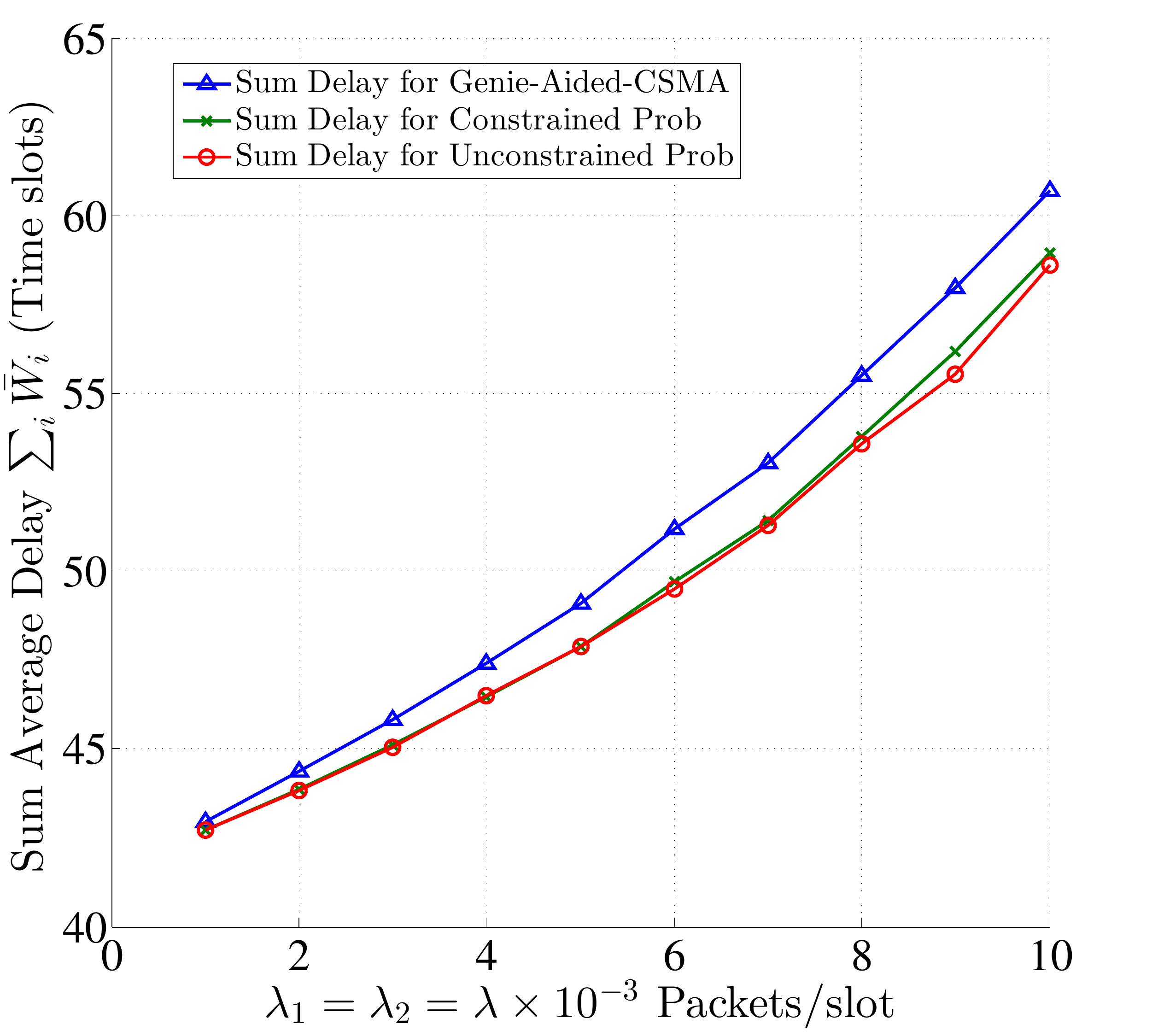}%
\caption{Sum of cost functions for the perfect CSI estimates for the {\DOIC} policy to solve problem \eqref{Problem}.}%
\label{Avg_Delay_Inst}%
\end{figure}

\subsection{Average Interference}
Problem \eqref{Prob} differs than problem \eqref{Problem} by an additional average interference constraint. This comes at the cost of the sum of average delays of SUs. We simulated the system with $d_1=\dTwo\Ts$ and compared it to the performance of the {\DOIC} policy with $d_1=\dTwo\Ts$ as well. The sum of average delays of the two SUs is plotted in Fig. \ref{Avg_Delay_Avg_Inst_CSMA} for both algorithms. The increase in the average delay for the {\DOAC} policy is due to adding an additional average interference constraint. However, when comparing the {\DOAC} policy to a Carrier-Sense-Multiple-Access (CSMA) scheduling policy we find it to have a lower average delay performance. This is because the CSMA allocates the channels randomly uniformly among users and does not prioritize the users based on their delay requirement $d_i$. On the other hand, the {\DOAC} allocates the channels based on the objective of minimizing the sum of average delays. We note that the CSMA policy plotted in Fig. \ref{Avg_Delay_Avg_Inst_CSMA} uses a ``genie-aided'' power allocation policy obtained from Algorithm \ref{DOACopt}. Thus, even when the two algorithms, the CSMA policy and the {\DOAC} policy, have the same power allocation policy, the {\DOAC} scheduling policy has an improved delay performance over the CSMA policy.

\begin{figure}%
\centering
\includegraphics[width=\widthn\columnwidth]{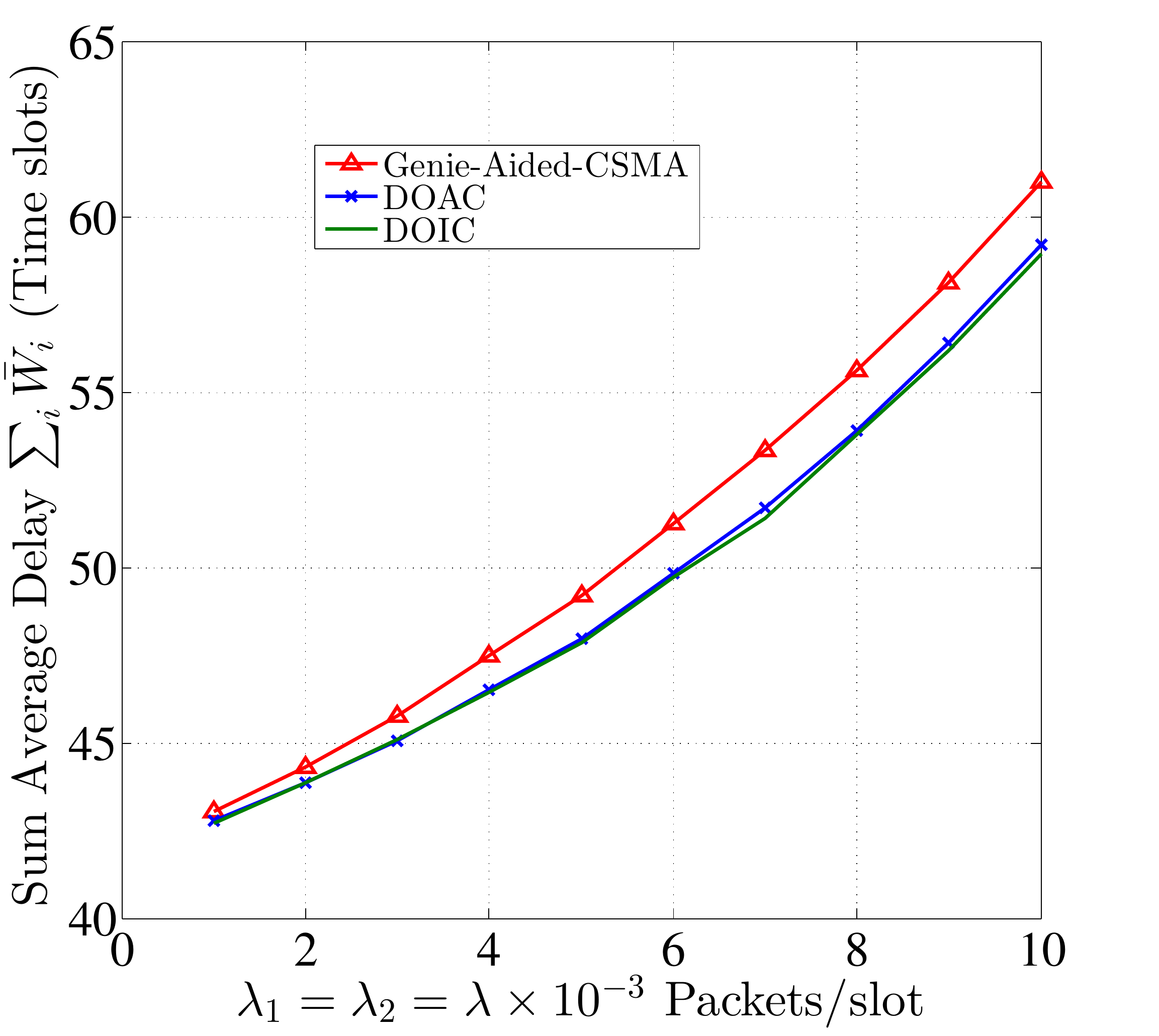}%
\caption{Comparing the CSMA policy with the {\DOIC} policy and the {\DOAC} policy. The power allocation scheme used for the {\DOAC} policy is the one used for the CSMA, hence the term genie-aided. However, the genie-aided CSMA policy has a worse delay performance compared to the {\DOAC} policy.}%
\label{Avg_Delay_Avg_Inst_CSMA}%
\end{figure}

\subsection{Low-Complexity Algorithm Performance}
When implementing the suboptimal algorithm proposed in Section \ref{Suboptimal} we find that the sum of the average delay across SUs is very close to its optimal value found via Algorithm \ref{DOACopt}. This is demonstrated in Fig. \ref{Avg_Delay_Avg_vs_Sub} where the error doesn't exceed $0.37\%$ at $\lambda=0.01$
\begin{figure}%
\centering
\includegraphics[width=\widthn\columnwidth]{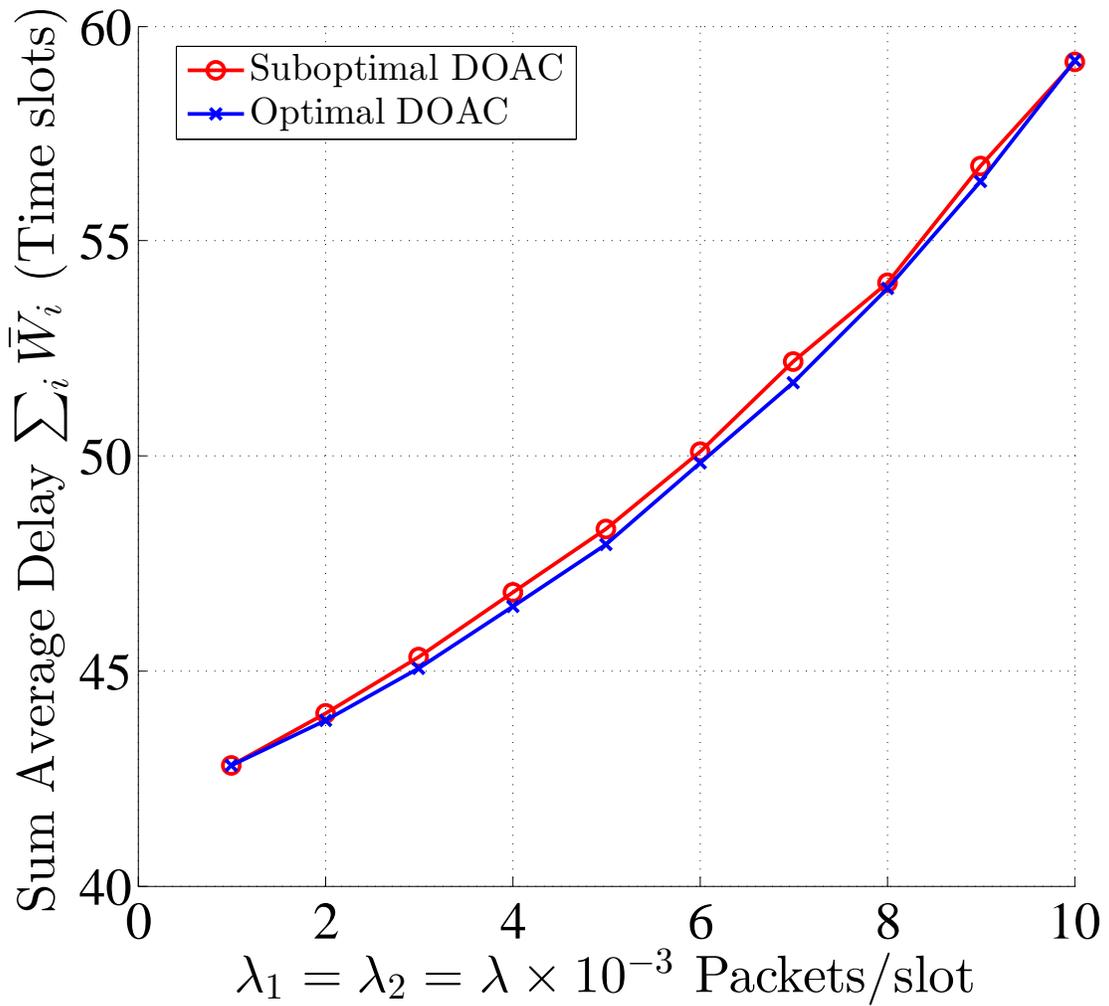}%
\caption{The low-complexity algorithm proposed in Section \ref{Suboptimal} has a close-to-optimal average delay performance with a maximum error of $0.37\%$.}%
\label{Avg_Delay_Avg_vs_Sub}%
\end{figure}

\subsection{CSI Estimation Errors}
For the imperfect CSI case, we assumed that each SU has an error of $\alpha=10\%$ in estimating each of $g_i^{(t)}$ and $\gamma_i^{(t)}$ and simulated the system with $d_1=32\Ts$. In order to avoid outage we substitute by equation \eqref{gamma_Estimate} in \eqref{Tx_Rate}. To guarantee protection to the PU from interference, we substitute equation \eqref{g_Estimate} in \eqref{Pow_Allocation} for the {\DOAC} policy, and in \eqref{Power_Allocation} for the {\DOIC} policy. From Fig. \ref{Avg_Delay_Avg_CSI_Err} we see that the performance difference between the perfect and the imperfect CSI problem, for the {\DOAC} policy, ranges between $2.4\%$ at $\lambda=10^{-3}$, and $9.5\%$ at $\lambda=10^{-2}$. We note that this performance difference represents an upper bound on the actual difference since the $10\%$ is an upper bound on the actual estimation error.

\begin{figure}%
\centering
\includegraphics[width=\widthn\columnwidth]{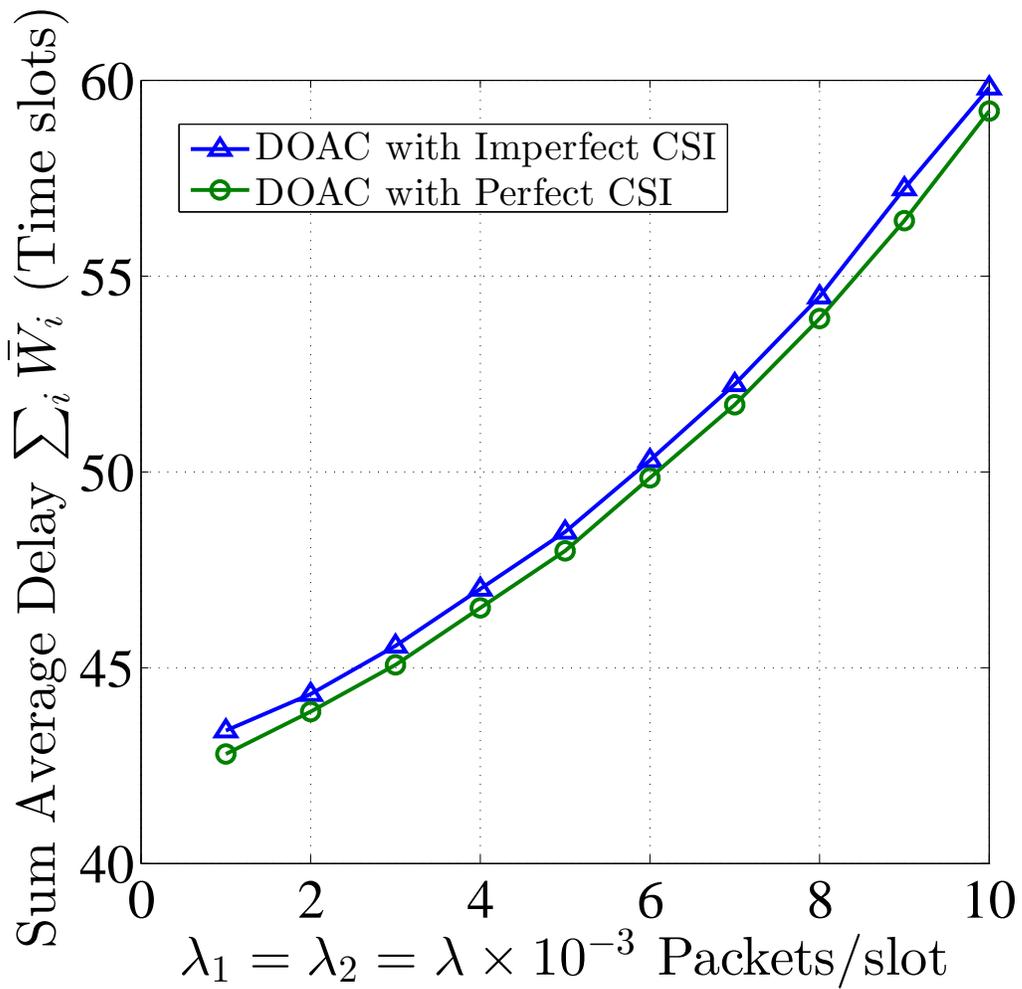}%
\caption{Sum of cost functions for the perfect as well as the imperfect channel sensing for the {\DOAC} policy to solve the constrained optimization problem \eqref{Prob}.}
\label{Avg_Delay_Avg_CSI_Err}
\end{figure}

\chapter{Proposed Future Work}
\label{Future_Work}
The problems presented in this report discuss the average delay of SUs where we propose algorithms to guarantee the desired QoS. However, the analysis depends on two main assumptions: 1) the SUs have a long-term average delay constraint which might not be realistic in some applications as online gaming where packets have a strict deadline that needs to be met and not a long-term delay constraint; 2) the arrival process of the SUs is stationary where the arrival rate vector is fixed. In this section we propose a possible solution to address each of these problems.

\section{Hard Deadline Guarantees}
\label{Hard_Deadline}
Some real-time applications require that packets arrive before a specified deadline. These applications are referred to as time-sensitive applications as online data streaming and online gaming. If the packet arrives after this deadline, it is considered useless and is not counted as a successfully-arrived packet. Hence, having a constraint on the average delay in the form of $\bW_i\leq d_i$ does not necessary guarantee that packets will be transmitted by this threshold $d_i$. Thus, we need the delay constraint to be formulated differently to capture this hard deadline nature. However, since we could not guarantee that all packets will be delivered by this deadline we will allow for a small percent of the packets to miss their deadlines as long as this percent is below a prespecified threshold.

The problem of hard deadlines has been considered in the literature for non CR systems (see \cite{Adaptive_NC_Deadline,A_Theory_of_QoS,kang2013performance,hou2010scheduling} and references therein). The work in \cite{Adaptive_NC_Deadline} assumes multiple packets sharing the same hard deadline and finds the optimal way to network-code these packets to maximize the throughput of a downlink system. The advantages of network coding are exploited when having a multicast system where all packets are to be sent to all users, as assumed in their work. However, when the system is a unicast one, network coding does not have any advantages \cite{ahlswede2000network}. In \cite{kang2013performance}, an algorithm was proposed to guarantee at least a fraction of the system's throughput capacity is met, under the constraint that a prespecified percentage of packets meet their deadlines. However, fading was out of the scope of their work. The work in \cite{hou2010scheduling} proposes a scheduling policy to guarantee that the percentage of packets that are not transmitted by their deadline is within a tolerable threshold. Although fading is considered in their model, the algorithm is valid for non CR systems where interference does not exist.

We now propose a possible direction to find a scheduling-and-power-allocation algorithm that guarantees packets to meet their deadlines as well as protecting the PU from interference. Assume that time is divided into time slots and that slots are grouped into frames of a fixed duration of $\DF$ slots each. We assume that a single packet arrives at the beginning of the frame for SU $i$ with probability $\lambda_i$ packets per frame that is i.i.d across frames and users. All packets have a fixed length of $L$ bits. However, different packets might need different number of time slots to be fully transmitted to the BS due to the fading nature of the wireless channels. We adopt the same channel model as in Chapter \ref{Channel_Model}. Channel gains are assumed independent across time slots and users, thus $\gamma_i^{(t)}$ is independent of $\gamma_j^{(t)}$ for all $i\neq j$ and all $t$. If SU $i$ is assigned the channel at slot $t$, then it transmits the $M_i^{(t)}$ bits of the HOL packet according to \eqref{Num_Bits}.

The goal is to find a scheduling algorithm that maximizes the sum rate of the SUs such that $q_i\%$ of the packets for user $i$ are fully transmitted by the deadline $\tau_i$. We call this constraint the delivery ratio requirement constraint. We assume that the deadline $\tau_i$ is within the frame where the packet has arrived. Moreover, we require that the average interference received by the PU does not exceed $\Iavg$.

In order to find an algorithm that satisfies the delivery ratio requirement, we set up the virtual queue calculated at the end of frame $k$ to represent the amount of undelivered packets for user $i$ up to the end of frame $k$
\begin{equation}
Z_i(k+1)=\lb Z_i(k) +\vert \script{A}_i(k)\vert-S_i(k)\rb^+,
\label{Delivery_VQ}
\end{equation}
where $S_i(k)=1$ if SU $i$ successfully transmits its packet before the deadline during frame $k$, and $0$ otherwise.

Given this formulation we believe that the algorithm can be deduced from the Lyapunov analysis. This is done by finding an upper bound on the Lyapunov drift. The policy would thus be the one that minimizes this upper bound. Moreover, we will compare the derived algorithm to a greedy algorithm. The greedy algorithm is nothing but the {\DOAC} policy with the value of $d_i$ set to $q_i\tau_i$. Although, according to the Markov inequality, this ensures that the probability of exceeding the deadline $\tau_i$ is below $q_i$, we believe that this policy will perform worse than the one we are seeking. This is because, as opposed to the {\DOAC}, this policy dynamically updates the scheduling according to the percentage of packets that missed their deadlines not according to the average delay experienced by these packets.

\section{Throughput Optimality}
\label{Throughput_Optimality}
We have seen that the {\DOAC} policy has an optimum delay performance as long as the arrival rate satisfy $\sum_i\rho_i(\Pmax)<1$. The region formed by this inequality is strictly smaller than the system's capacity region found in \cite{NeelyPhD}. The main reason for this loss in capacity region stems from the priority lists dictated by the proposed algorithm. As a future work, we are interested in studying some scheduling policies that could potentially achieve higher rate regions than that of the {\DOAC}.

One example of such policies is the Max-Weight-$\alpha$ algorithm that assigns the channel to the user with the largest $q_i^{\alpha_i}\Rit$, where $q_i$ is the queue length of user $i$ at the beginning of slot $t$, while $\alpha_i>0$ is some fixed constant. This policy is shown in \cite{Max_Weight_alpha} to be throughput optimal. This policy has the advantage of giving more weight to users with higher $\alpha_i$ values. Hence, users with higher $\alpha_i$ experience better delay performance. Moreover, this class of policies takes the advantage of multi-user diversity in the system since the rate is incorporated in the scheduling algorithm. On the other hand, the {\DOAC} policy uses a strict priority scheduling policy and neglects the channel gain information that potentially increases the rate due to the inherited multi-user diversity.

%
%

\chapter{Conclusion}
\label{Conclusion}
In this work we have studied the joint scheduling and power allocation problem of an uplink multi-SU CR system. The main goal of this work is to present low complexity algorithms that yield delay guarantees to the SUs while protecting the PU at the same time. We have shown that the delay mainly consists of two main factors; the service time and the queue-waiting time. The service time is mainly affected by the transmission power of the SU. While the queue-waiting time depends on the transmission power as well as the scheduling algorithm.

We studied the delay due to the service time in a single SU multichannel setup. In particular, to find the optimal power that maximizes the SU's throughput as well as guaranteeing that the SU's service time is below some prespecified threshold, we formulated the problem as an optimal stopping rule problem. The SU senses the channels sequentially and stops to transmit at the first channel that gives the highest reward. A closed-form expression for the threshold of stopping was given and proven to guarantee that the PU is protected from harmful interference. The performance of the proposed algorithm was compared to numerous baseline algorithms.

To address the delay due to the queue-waiting time, we formulated the problem as a delay minimization problem in the presence of average and instantaneous interference constraints to the PU, as well as an average delay constraint for each SU that needs to be met. Most of the existing literature that studies this problem either assume on-off fading channels or does not provide a delay-optimal algorithms which is essential for real-time applications. We proposed the {\DOAC} policy that dynamically updates the power allocation of the SUs as well as finding the optimal scheduling policy. The scheduling policy is found by dynamically updating a priority list based on the statistics as well as the history of the arrivals, departures and channel fading realizations. The proposed algorithm updates the priority list on a per-frame basis while controlling the power on a per-slot basis. 
We showed, through the Lyapunov optimization, that the proposed {\DOAC} policy is asymptotically delay optimal. That is, it minimizes the sum of any convex increasing function of the average delays of the SUs as well as satisfying the average interference and average delay constraints.

However, it is found that when the number of SUs $N$ in the system is large, the complexity of the {\DOAC} policy scales as $O(MN2^N)$, where $M$ is the number of points required to solve a one-dimensional search. Hence, we propose a suboptimal policy with a complexity of $O(N\log(N))$ that does not sacrifice the performance. Extensive simulation results showed the robustness of the {\DOAC} policy against CSI estimation errors.

Finally we have proposed two different directions as a possible future work. The first is studying the system in the presence of hard deadlines where packets are dropped if they are not transmitted by these deadlines. We presented a potential solution approach to this problem that depends on the Lyapunov analysis. Moreover, we proposed to study other scheduling policies that are throughput optimal in addition to being delay optimal.


\appendices
\chapter{Proof of Lemma \ref{Lma_Stochastic_Dominance}}
\label{Apdx_Stochastic_Dominance}
\begin{proof}
We carry out the proof by contradiction. Assume, for some $i$, that $\gamma_{\rm th}^*(i)<\lambdapst$. Thus the reward starting from channel $i$, $U_i \left([\gamma_{\rm th}^*(i), \gamma_{\rm th}^*(i+1),...,\gamma_{\rm th}^*(M)]^T,\Pst{i} \right)$, becomes
\begin{align}
\nonumber & \theta_i c_i \int_{\gamma_{\rm th}^*(i)}^{\infty}{\log(1+P_{1,i}^*\gamma) f_{\gamma}(\gamma)} \, d\gamma + \theta_i U_{i+1}^* \int_0^{\gamma_{\rm th}^*(i)}{f_{\gamma}(\gamma)} \, d\gamma \\
&\hspace{1in} + (1-\theta_i)U_{i+1}^*
\label{Stoch_Dom_Definition}
%
%
\\\nonumber \leq & \theta_i c_i \int_{\lambdapst}^{\infty}\hspace{0.2cm}{\log(1+P_{1,i}^*\gamma) f_{\gamma}(\gamma)} \, d\gamma + \theta_i U_{i+1}^* \int_0^{\lambdapst}\hspace{0.3cm}{f_{\gamma}(\gamma)} \, d\gamma \\
&\hspace{1in }+ (1-\theta_i)U_{i+1}^*
\label{Stoch_Dom_Inequality}
\\
\label{Stoch_Dom_w_lambda}
=&U_i \left([\lambdapst, \gamma_{\rm th}^*(i+1),...,\gamma_{\rm th}^*(M)]^T,\Pst{i} \right).
\end{align}
Where inequality (\ref{Stoch_Dom_Inequality}) follows by adding the term $\theta_i \lb \int_{\gammathst{i}}^{\lambdapst}{\pdf(\gamma)} \, d\gamma \rb U_{i+1}^*$ to (\ref{Stoch_Dom_Definition}) while (\ref{Stoch_Dom_w_lambda}) follows by the definition of the right-hand-side of (\ref{Stoch_Dom_Inequality}).
Using equation (\ref{Reward}), we can calculate the reward $U_{i-1}$ for both the left-hand-side and right-hand-side of the previous inequality. Thus the following inequality holds
\begin{align}
\nonumber &U_{i-1} \left([\gamma_{\rm th}^*(i-1),\gamma_{\rm th}^*(i),...,\gamma_{\rm th}^*(M)]^T,\Pst{i-1} \right) \leq \\
& U_{i-1} \left([\gamma_{\rm th}^*(i-1), \lambdapst,...,\gamma_{\rm th}^*(M)]^T,\Pst{i-1} \right).
\label{Stoch_Dom_Recursion}
\end{align}
Carrying out the last step recursively $i-2$ more times, we find the relation
\begin{align}
\nonumber &U_1 \left([\gamma_{\rm th}^*(1),...,\gamma_{\rm th}^*(i-1),\gamma_{\rm th}^*(i),...,\gamma_{\rm th}^*(M)]^T,\Pst{1} \right) \leq \\
& U_1 \left([\gamma_{\rm th}^*(1),...,\gamma_{\rm th}^*(i-1),\lambdapst ,...,\gamma_{\rm th}^*(M)]^T,\Pst{1} \right),
\label{Stoch_Dom_Contradiction}
\end{align}
which contradicts with the fact that $\gamma_{\rm th}^*(i)$ is optimal.
\end{proof}

\chapter{Proof of Theorem \ref{Thm_Unique_Solution_S}}
\begin{proof}
\label{Apdx_Unique_Solution_S}
We first get $S_i^*$, $U_i^*$ and $p_i^*$ by substituting by equations $\gammathst{i}$ and $P_{1,i}^*(\gamma)$ in the three equations (\ref{Average_Power}), (\ref{Reward}) and (\ref{Prob_recursive}), respectively. Then we differentiate with respect to $\lambdapst$, treating $\lambdadst$ as a constant, yielding
\begin{align}
\label{Deriv_S_i}
\nonumber \dSdlamp{i}=&-\theta_i \pdf(\gammathst{i}) \dgamdlamp{i} \left( c_i \Pist{i} - S_{i+1}^*\right) -\\
& \theta_i c_i \frac{\ccdf(\gammathst{i})}{\lb \lambdapst \rb^2} + \left( 1 - \theta_i \ccdf(\gammathst{i}) \right) \dSdlamp{i+1},\\
\label{Deriv_U_i}
\nonumber \dUdlamp{i}=&-\theta_i \pdf(\gammathst{i}) \dgamdlamp{i} \times \\
\nonumber &\left[ \lambdapst \lb c_i \Pist{i} - S_{i+1}^* \rb - \lambdadst \lb 1 - p_{i+1}^* \rb \right] - \\
 & \theta_i c_i \frac{\ccdf(\gammathst{i})}{\lambdapst}+ \left( 1 - \theta_i \ccdf(\gammathst{i}) \right) \dUdlamp{i+1},\\
\label{Deriv_p_i}
\dpdlamp{i}=&-\theta_i \pdf(\gammathst{i}) \dgamdlamp{i} \left( 1 - p_{i+1}^* \right) + \\
&\left( 1 - \theta_i \ccdf(\gammathst{i}) \right) \dpdlamp{i+1},
\end{align}
respectively. Multiplying equation (\ref{Deriv_S_i}) by $-\lambdapst$ and equation (\ref{Deriv_p_i}) by $\lambdadst$ then adding them to equation (\ref{Deriv_U_i}) we can easily show that, for all $i \in \script{M}$,
\begin{equation}
\dUSpdlamp{i}=0.
\label{Deriv_U_S_p_i_eq_zero}
\end{equation}

We now find the derivative of $\gammathst{i}$ with respect to $\lambdapst$ by 
differentiating both sides of equation (\ref{gamma_i_Equation}) with respect to $\lambdapst$, while treating $\lambdadst$ as a constant,then using equation (\ref{Deriv_U_S_p_i_eq_zero}), then rearranging we get
\begin{equation}
\dgamdlamp{i} = \frac{c_i \Pist{i}-S_{i+1}^*}{c_i  \frac{\lambdapst}{\gammathst{i}} \Pist{i}}.
\label{gammai_Deriv_Explicit}
\end{equation}
Substituting by equation (\ref{gammai_Deriv_Explicit}) in (\ref{Deriv_S_i}) we get
\begin{align}
\nonumber \dSdlamp{i}=&- \alpha_i \left[ c_i \Pist{i} - S_{i+1}^*\right]^2 -\theta_i c_i \frac{\ccdf(\gammathst{i})}{\lb \lambdapst \rb^2} + \\
&\left( 1 - \theta_i \ccdf(\gammathst{i}) \right) \dSdlamp{i+1},
\label{Deriv_S_i_recursive}
\end{align}
where $\alpha_i$ is given by
\begin{equation}
\alpha_i=\frac{\theta_i \pdf(\gammathst{i})}{c_i  \frac{\lambdapst}{\gammathst{i}} \Pist{i}} \geq 0,
\label{alpha}
\end{equation}
Now evaluating (\ref{Deriv_S_i_recursive}) at $i=M$ and $i=M-1$ we get
\begin{align}
\label{Deriv_S_M}
&\dSdlamp{M}=- \alpha_M \left[ c_M \Pist{M} \right]^2 -\theta_M c_M \frac{\ccdf(\gammathst{M})}{\lb \lambdapst \rb^2},\\
&{\mbox{and }}\nonumber \dSdlamp{M-1} =- \alpha_{M-1} \left[ c_{M-1} \Pist{M-1} -S_M^*\right]^2 \\
\nonumber &\hspace{0.85in}-\theta_{M-1} c_{M-1} \frac{\ccdf(\gammathst{M-1})}{\lb \lambdapst \rb^2}\\
&\hspace{0.85in}+\left( 1 - \theta_{M-1} \ccdf(\gammathst{M-1}) \right) \dSdlamp{M},
\label{Deriv_S_M__1}
\end{align}
respectively. We can see that $\dSdlamp{M} < 0$, hence $\dSdlamp{M-1} < 0$. By induction, let's assume that $\dSdlamp{i+1} < 0$. From (\ref{Deriv_S_i_recursive}) we get that
\begin{align}
\nonumber \dSdlamp{i} =&- \alpha_i \left( c_i \Pist{i} - S_{i+1}^*\right)^2 -\theta_i c_i \frac{\ccdf(\gammathst{i})}{\lb \lambdapst \rb^2} + \\
&\left( 1 - \theta_i \ccdf(\gammathst{i}) \right) \dSdlamp{i+1}<0
\label{Deriv_S_i_induction}
\end{align}
since all its terms are negative. Finally we find that $\dSdlamp{1}<0$ indicating that $S_1^*$ is monotonically decreasing in $\lambdapst$ given any fixed $\lambdadst \geq 0$.

Now, to get an upper bound on $\lambdapst$, we know that
\begin{equation}
S_i^*=\theta_i c_i \int_{\gammathst{i}}^\infty{ \lb \frac{1}{\lambdapst} - \frac{1}{\gamma} \rb \pdf(\gamma) \,d\gamma}+ \left[1-\theta_i \ccdf(\gammathst{i}) \right]S_{i+1}^*.
\label{Average_Power_Opt}
\end{equation}
We can upper bound the first term in (\ref{Average_Power_Opt}) by $\theta_i c_i / \lambdapst$, while $\left[1-\theta_i \ccdf(\gammathst{i}) \right]<1$. Using these two bounds we can write $S_1^* < \sum_{i=1}^M \theta_i c_i / \lambdapst$. But since $S_1^*=\Pavg$, the upper bound on $\lambdapst$, mentioned in Theorem \ref{Thm_Unique_Solution_S}, follows.
\end{proof}

\chapter{Proof of Lemma \ref{Lma_lambdadst_Bound}}
\label{Apdx_Lma_lambdadst_Bound}
\begin{proof}
We provide a proof sketch for this bound. We know that at the optimal point $p_1^*=\invDmax$ and that $p_1^*=\theta_1 \ccdf \lb\gammathst{1}\rb + \lb 1-\theta_1 \ccdf \lb\gammathst{1}\rb \rb p_2^*$. But since the second term in the latter equation is always positive, then
\begin{align}
\theta_1 \ccdf \lb\gammathst{1}\rb < \invDmax.
\label{Prob_Success_Opt_Inequality}
\end{align}
Substituting by \eqref{Gamma_Solution_Lambert_W} in \eqref{Prob_Success_Opt_Inequality} and rearranging we can upper bound $\lambdadst$ by
\begin{equation}
\nonumber\frac{c_1 \lb {\log {\lb \frac{\lambdapst}{\ccdf^{-1}\lb \frac{1}{\theta_1 \Dmax}\rb}\rb} - {\frac{\lambdapst}{\ccdf^{-1}\lb \frac{1}{\theta_1 \Dmax}\rb}} +1} \rb + U_2^*-\lambdapst S_2^*}{1-p_2^*}
\label{LambdaD_Bound_1}
\end{equation}
We can easily upper bound $\log {\lb \lambdapst/\ccdf^{-1}\lb 1/ \lb\theta_1 \Dmax \rb \rb\rb} - {\lambdapst/\ccdf^{-1}\lb 1/\lb\theta_1 \Dmax \rb\rb}$ by substituting $\lambdapmax$ for $\lambdapst$ when $\lambdapst<\ccdf^{-1}\lb 1/\lb\theta_1 \Dmax \rb\rb$ and by $1$ otherwise. Moreover, it can also be shown that $U_2^*<\Utwomax$, $p_2^*<\ptwomax$ and that $\lambdapst S_2^*>0$ and from Theorem \ref{Thm_Unique_Solution_S} we have $\lambdapst<\lambdapmax$, the proof then follows.
\end{proof}

\chapter{Proof of Theorem \ref{Optimality}}
\label{Optimality_Proof_Inst}
\begin{proof}
In this proof, we show that the drift-plus-penalty under this algorithm is upper bounded by some constant, which indicates that the virtual queues are mean rate stable \cite{georgiadis2006resource,urgaonkar2011optimal}.

We define the Lyapunov function as $L(k) \triangleq\frac{1}{2}\sum_{i=1}^N Y_i^2(k)$ and Lyapunov drift to be
\begin{equation}
\Delta (k) \triangleq \EEY{L(k+1) - L(k)},
\label{Drift_Def}
\end{equation}
Squaring equation \eqref{Delay_Q} then taking the conditional expectation we can write the following bound
\begin{equation}
\frac{1}{2}\E_{\bfY(k)} \left[ Y_i^2(k+1)-Y_i^2(k)\right] \leq Y_i(k) \EEY{\FDurK}\lambda_i \lb \EEY{W_i^{(j)}}-r_i(k)\rb + C_{Y_i}.
\label{Delay_Q_Sq2}
\end{equation}
where we use the bound $\EEY{\lb \sum_{j\in \script{A}_i(k)} W_i^{(j)}\rb^2}+\EEY{\lb\sum_{j\in \script{A}_(k)}r_i(k)\rb^2}<C_{Y_i}$. The derivation is similar to that in \cite[Lemma7]{li2011delay}. Given some fixed control parameter $V>0$, we add the penalty term $V\sum_i \EEY{r_i(k)\FDurK}$ to both sides of \eqref{Drift_Def}. Using the bound in \eqref{Delay_Q_Sq2} the drift-plus-penalty term becomes bounded by
\begin{align}
&\Delta \lb \bfU(k)\rb + V\sum_{i=1}^N \EEY{r_i(k)\FDurK}\leq C_Y+\EEY{\FDurK} \Phi \hspace{0.2in} {\rm where}
\label{Drift_Plus_Penalty1}\\
&\Phi\triangleq \sum_{i=1}^N \lb V -Y_i(k) \lambda_i\rb r_i(k)+ \sum_{j=1}^N Y_{\pi_j}(k) \lambda_{\pi_j} \EEY{W_{\pi_j}^{(j)}},
\end{align}
We define the {\DOIC} policy to be the policy that finds the values of $\bfpi(k)$, $\{\bfP{}^{(t)}\}$ and $\bfr(k)$ vector that minimize $\Phi$ subject to the instantaneous interference, the maximum power and the single-SU-per-time-slot constraints in problem \eqref{Problem}. We can observe that the variables $\bfr(k)$, $\{\bfP{}^{(t)}\}$ and $\bfpi(k)$ can be chosen independently from each other. Step 4.a in the {\DOIC} policy finds the optimum value of $r_i(k)$, $\forall i\in\script{N}$. Moreover, since $\EEY{W_i^{(j)}}$ is decreasing in $\Pit$ $\forall t\in\script{F}(k)$, the optimum value for $\Pit$ is equation \eqref{Power_Allocation}. Finally, from \cite{c_mu_Rule} the $c\mu$-rule can be applied to find the optimum priority list $\bfpi(k)$ which is given by Step 1 in the {\DOIC} policy.

Now, since the proposed {\DOIC} policy minimizes $\Phi$, this gives a lower bound on $\Phi$ compared to any other policy including the optimal policy that solves \eqref{Problem}. Hence, we now evaluate $\Phi$ at the optimal policy that solves \eqref{Problem} with the help of a genie-aided knowledge of $r_i(k)=\bW_i^*$ yielding $\Phi^{\rm opt}=V\sum_{i=1}^N \bW_i^*$, where we use $\EEY{W_i^{(j)}}=\bW_i^*$. Substituting by $\Phi^{\rm opt}$ in the right-hand-side (r.h.s.) of \eqref{Drift_Plus_Penalty1} gives an upper bound on the drift-plus-penalty when evaluated at the {\DOIC} policy. Namely
\begin{equation}
\Delta \lb \bfY(k)\rb + V\sum_{i=1}^N \EEY{r_i(k)\FDurK}\leq C_Y + V\sum_{i=1}^N \bW_i^*\EEY{\FDurK}
\label{DOIC_Genie}
\end{equation}
Taking $\EE{\cdot}$, summing over $k=0,\cdots,K-1$, denoting $\bfY_i(0)\triangleq 0$ for all $i\in\script{N}$, and dividing by $V\sum_{k=0}^{K-1} \EE{\FDurK}$ we get
\begin{equation}
\sum_{i=1}^N \frac{\EE{Y_i^2(K)}}{\sum_{k=0}^{K-1} \EE{\FDurK}}+ \sum_{i=1}^N \frac{\sum_{k=0}^{K-1}\EE{r_i(k)\FDurK}}{\sum_{k=0}^{K-1}\EE{\FDurK}} \overset{(a)}{\leq} \frac{aC_Y}{V} + \sum_{i=1}^N \bW_i^*\triangleq C_1.
\label{Optimal_Eq}
\end{equation}
where in the r.h.s. of inequality (a) we used $\EE{\FDurK}\geq \EE{I(k)}=1/a$, and $C_1$ is some constant that is not a function in $K$. To prove the mean rate stability of the sequence $\{Y_i(k)\}_{k=0}^\infty$ for any $i\in\script{N}$, we remove the first and third terms in the left-side of \eqref{Optimal_Eq} as well as the summation operator from the second term to obtain $\EE{Y_i^2(K)}/K \leq C_1$ $\forall i\in\script{N}$. Using Jensen's inequality we note that
\begin{equation}
\frac{\EE{Y_i(K)}}{K} \leq \sqrt{\frac{\EE{Y_i^2(K)}}{K^2}} \leq \sqrt{\frac{C_1}{K}}.
\label{Jensens}
\end{equation}
Finally, taking the limit when $K\rightarrow \infty$ completes the mean rate stability proof. On the other hand, to prove the upper bound in Theorem \ref{Optimality}, we use the fact that $r_i(k)$ and $\vert \script{A}_i(k) \vert$ are independent random variables (see step 4-a in {\DOIC}) to replace $\EE{\vert \script{A}_i(k) \vert {r_i(k)}}$ by $\lambda_i\EE{\FDurK r_i(k)}$ in equation \eqref{Wait_r_i}, then we take the limit of \eqref{Wait_r_i} as $K\rightarrow \infty$, use the mean rate stability theorem and sum over $i\in\script{N}$ to get
\begin{equation}
\sum_{i=1}^N \frac{\EE{\sum_{k=0}^{K-1} \lb\sum_{j\in \script{A}_i(k)}W_i^{(j)}\rb}}{\EE{\sum_{k=0}^{K-1}{\vert\script{A}_i(k)\vert}}} \leq \sum_{i=1}^N \frac{\sum_{k=0}^{K-1}\EE{r_i(k)\FDurK}}{\sum_{k=0}^{K-1}\EE{\FDurK}}\overset{(b)}{\leq} \frac{aC_Y}{V} + \sum_{i=1}^N \bW_i^*,
\label{Optimality_Eq2}
\end{equation}
where inequality (b) comes from removing the first summation in the left-side of \eqref{Optimal_Eq}. Taking the limit when $K\rightarrow \infty$ and using equation \eqref{Delay_Frame} completes the proof.
\end{proof}

\chapter{Existence of The Service Time Moments}
\label{No_Deep_Fade}
\begin{lma}
\label{No_Deep_Fade_Lemma}
Given any distribution for $\Pit\gamma_i^{(t)}$ the inequality $\EE{s_i^n}<\infty$ holds $\forall n\geq 1$. Moreover, when the power is given by $\Pit=\min\lb \Iinst/\git,P\rb$ for some fixed parameter $P\in[\Pminn_i,\Pmax]$, the inequality $\EE{s_i^2}\leq \lb L^2+L\lb 1-p_i(\Pminn_i)\rb\rb/p_i^2(\Pminn_i)$ holds with $p_i(P)\triangleq 1-\Prob{R_i(P)=0}$.
\end{lma}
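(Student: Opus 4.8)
The plan is to bound the interruption-free service time $s_i$ pathwise by the number of Bernoulli trials needed to collect $L$ successes, and then to read off the moment estimates from the resulting negative-binomial random variable. First I would recall that $s_i$ counts only the slots in which SU $i$ is actually scheduled, and that across these slots the pair $\lb \gammait,\git\rb$ is i.i.d., so the per-slot rate $R_i(P)$ is i.i.d. as well. I would then call a served slot a \emph{success} if it delivers at least one bit of the head-of-line packet and a \emph{failure} (a deep fade) otherwise; by the definition $p_i(P)\triangleq 1-\Prob{R_i(P)=0}$, a served slot is a success with probability $p_i(P)$ independently across slots. The key coupling step is to observe that the head-of-line packet carries $L$ bits and that each success removes at least one bit from it (while a failure removes none), so the packet is cleared after at most $L$ successes; running the two processes on the same channel realizations then gives the almost-sure inequality $s_i\leq Y$, where $Y$ is the number of i.i.d. $\mathrm{Bernoulli}(p_i)$ trials required to obtain $L$ successes.

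For the first statement I would use that $Y$ has the (shifted) negative-binomial law, with probability generating function $\EE{z^{Y}}=\lb p_i z/(1-(1-p_i)z)\rb^{L}$ and moment generating function $\EE{e^{tY}}$ finite for every $t<-\log(1-p_i)$. Because the m.g.f. is finite on a neighbourhood of the origin, every moment $\EE{Y^{n}}$ is finite, and the pathwise bound $s_i\leq Y$ then yields $\EE{s_i^{n}}\leq\EE{Y^{n}}<\infty$ for all $n\geq1$. The only hypothesis needed is $p_i>0$, i.e. that a served slot delivers a bit with positive probability; this holds for any admissible distribution of $\Pit\gammait$, since a distribution with $p_i=0$ forces $\mu_i=\EE{R_i}/L=0$ and is excluded by the very existence of a stabilizing power $\Pminn_i$.

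For the second statement I specialize to $\Pit=\min\lb\Iinst/\git,P\rb$ with $P\in[\Pminn_i,\Pmax]$. Increasing the power can only increase $\min\lb\Iinst/\git,P\rb\gammait$ and hence the transmitted rate, so $p_i(P)$ is non-decreasing in $P$; consequently every served slot is a success with probability at least $p_i(\Pminn_i)$, and $s_i$ is stochastically dominated by $Y\sim$ the negative-binomial law with $L$ successes and per-trial probability $p_i(\Pminn_i)$. Using $\EE{Y}=L/p_i(\Pminn_i)$ and $\mathrm{Var}(Y)=L\lb1-p_i(\Pminn_i)\rb/p_i^{2}(\Pminn_i)$ gives $\EE{Y^{2}}=\lb L^{2}+L\lb1-p_i(\Pminn_i)\rb\rb/p_i^{2}(\Pminn_i)$, whence $\EE{s_i^{2}}\leq\EE{Y^{2}}$ is exactly the claimed bound. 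The step that needs genuine care --- and is really the crux of the argument --- is the coupling producing $s_i\leq Y$: one must justify that bit deliveries are integer-valued so that each success clears at least one bit and at most $L$ successes are ever required, ensuring that small positive rates cannot force more than the negative-binomial number of slots. Once that domination is in place, the moment computations are routine.
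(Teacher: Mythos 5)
Your proposal is correct and follows essentially the same route as the paper's proof: both compare the true service process to a binary-rate Bernoulli system with the same probability of a zero-rate slot, identify the resulting service time as a negative binomial (shifted by $L$) with $L$ required successes, deduce finiteness of all moments from that distribution, and obtain the exact bound $\lb L^2+L\lb 1-p_i(\Pminn_i)\rb\rb/p_i^2(\Pminn_i)$ as its second moment using $p_i(P)\geq p_i(\Pminn_i)$. Your pathwise coupling $s_i\leq Y$ is just a cleaner rendering of the paper's stochastic-ordering step (ordered CDFs imply ordered moments), and the crux you flag --- that every slot with nonzero rate clears at least one full bit --- is precisely the same implicit assumption on which the paper's ordering inequality rests.
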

\begin{proof}
We carry out the proof by bounding the moments of $s_i$ by the respective moments of the random variable $s_{{\rm B},i}$ which is the service time for a system with a binary transmission rate, i.e. a system with $\Rit\in\{0,1\}$. The proof of the first part of the lemma follows by showing that all the moments of $s_{{\rm B},i}$ are finite. The second part of the lemma is a special case where we set $\Pit=\min\lb \Iinst/\git,P\rb$. 

In this proof we drop the index $i$ for simplicity whenever it is clear from the context. Given some, possibly random, power allocation policy $\Pit$ define the i.i.d. random process $R_{{\rm B},i}^{(t)}\in\{0,1\}$, $t\geq 1$, with $\Prob{R_{{\rm B},i}^{(t)}=0}=\Prob{\Rit=0}$. Dropping the index $i$, the following inequality holds for any $x\geq 1$
\begin{equation}
\Prob{\sum_{t=1}^x \RBt \leq L}\geq\Prob{\sum_{t=1}^x \Rt \geq L},
\label{CDFs_Ordered}
\end{equation}
which says that the probability of transmitting $L$ bits or more in $x$ time slots is higher if the transmission process is $\Rt$ compared to the binary transmission process $\RBt$. Defining $\sB\triangleq \{\min x : \sum_{t=1}^x \RBt \geq L\}$ as the binary service time which is the number of time slots required to transmit $L$ bits given that the transmission process is $\RBt$, we can write
\begin{align}
\Prob{\sB\leq x}&=\Prob{\sum_{t=1}^x \RBt \geq L}\\
&\geq\Prob{\sum_{t=1}^x \Rt \geq L}\\
&=\Prob{s\leq x}
\label{CDFs_Ordered2}
\end{align}
According to the theory of stochastic ordering, when two random variables have ordered cumulative distribution functions, their respective moments are ordered \cite[equation (2.14) pp. 16]{rajan2014ordering}. In other words, if $\Prob{s\leq x}\geq\Prob{\sB\leq x}$, then $\EE{s^n}\leq\EE{\sB^n}$, $\forall n\geq 1$. It suffices to show that the moments of $\sB$ are finite.

Define $\sNB$ as a random variable following the negative binomial distribution \cite[pp. 297]{degroot2011probability} with success probability $1-\Prob{\RBt=0}$ while the number of successes equals $L$. $\sNB$ refers to the number of time slots having $\RBt=0$ before transmitting the $L$th bit. We can see that $\sB=\sNB+L$. Thus we have
\begin{align}
\EE{\sB^n}=\sum_{j=0}^n {n\choose j} \EE{\sNB^j}L^{n-j}< \infty,
\label{Binary_Moments}
\end{align}
where the inequality follows since all the moments of the negative binomial distribution exist \cite[pp. 297]{degroot2011probability}. The first part of the lemma holds.
 
For the second part of the lemma, we set $\Pit=\min\lb \Iinst/\git,P\rb$ for some deterministic parameter $P\geq\Pminn_i$ and define $p_i(P)\triangleq 1-\Prob{R_i(P)=0}$ with $R_i(P)$ defined in \eqref{Rate_Explicit}. Given the moment generating function of $\sNB$ as \cite[pp. 894]{degroot2011probability}
\begin{equation}
\EE{e^{x\sNB}}=\frac{p_i^L(P)}{\lb1- \lb 1-p_i(P)e^x\rb\rb^L},
\label{MGF_sNB}
\end{equation}
the first two moments of $\sNB$ can be derived as
\begin{align}
\EE{\sNB}&=\frac{\lb1-p_i(P)\rb L}{p_i(P)} , \hspace{1in}{\rm and}\\
\EE{\sNB^2}&=\frac{\lb 1-p_i(P)\rb^2L^2+\lb1-p_i(P)\rb L}{p_i^2(P)}.
\label{NB_Two_Moments}
\end{align}
These two moments can be shown to be decreasing in $p_i(P)$. The proof of the second part of the lemma follows using the bound $p_i(P)\geq p_i(\Pminn_i)$ and the inequality $\EE{s^2}\leq\EE{\sB^2}=\EE{\sNB^2}+2L\EE{\sNB}+L^2$.
\end{proof}

\chapter{Proof of Theorem \ref{Optimality_Avg}}
\label{Optimality_Proof}
\begin{proof}
This proof is similar to that in Appendix \ref{Optimality_Proof_Inst}. We define $\bfU(k)\triangleq [X(k) , \bfY(k)]^T$, the Lyapunov function as $L(k) \triangleq \frac{1}{2}X^2(k)+\frac{1}{2}\sum_{i=1}^N Y_i^2(k)$ and Lyapunov drift to be
\begin{equation}
\Delta (k) \triangleq \EEU{L(k+1) - L(k)}.
\label{Drift_Def_Avg}
\end{equation}
Squaring equation \eqref{Avg_Interf_Q} then taking the conditional expectation we can get the bound
\begin{equation}
\frac{\E_{\bfU(k)} \left[X^2(k+1)-X^2(k)\right]}{2} \leq C_X+X(k)\lb\EEU{\sum_{t\in\script{F}(k)}\Pit \git}-\Iavg\EEU{\FDurK}\rb,
\label{Interf_Q_Sq1}
\end{equation}
where we use the bound $\EEU{\lb\sum_{i=1}^N\sum_{t\in\script{F}(k)}\Pit \git\rb^2+\lb\Iavg \FDurK\rb^2}<C_X$ in equation \eqref{Interf_Q_Sq1} and omit the derivation of this bound. Given some fixed control parameter $V>0$, we add the penalty term $V\sum_i \EEU{r_i(k)\FDurK}$ to both sides of \eqref{Drift_Def_Avg}. Using the bounds in \eqref{Delay_Q_Sq2} and \eqref{Interf_Q_Sq1}, the drift-plus-penalty term becomes bounded by
\begin{align}
\Delta &\lb \bfU(k)\rb + V\sum_{i=1}^N \EEU{r_i(k)\FDurK}\leq C+\EEU{\FDurK}\chi(k),
\label{Drift_Plus_Penalty_Avg}\\
&{\rm where}\hspace{0.1in}\chi(k)\triangleq \sum_{i=1}^N \lb V -Y_i(k) \lambda_i\rb r_i(k)+\phi
\label{chi}\\
&{\rm with}\hspace{0.1in}\phi\triangleq\sum_{l=1}^N \lb Y_{\pi_l}(k) \lambda_{\pi_l} \EEU{W_{\pi_l}^{(j)}} + X(k)\lb\frac{\EEU{\sum_{t\in\script{F}(k)}P_{\pi_l}^{(t)} g_{\pi_l}^{(t)}}}{\EEU{\FDurK}} -\Iavg\rb\rb,
\end{align}
We define the {\DOAC} policy to be the policy that jointly finds $\bfr(k)$, $\{\bfP{}^{(t)}\}$ and $\bfpi(k)$ that minimize $\chi(k)$ subject to the instantaneous interference, the maximum power and the single-SU-per-time-slot constraints in problem \eqref{Prob}. Step 5-a in the {\DOAC} policy minimizes the first summation of $\chi(k)$. For $\{\bfP{}^{(t)}\}$ and $\bfpi(k)$, we can see that $\phi$ is the only term in the right side of equation \eqref{chi} that is a function of the power allocation policy $\{\bfP{}^{(t)}\}$, $\forall t\in\script{F}(k)$. For a fixed priority list $\bfpi(k)$, using the Lagrange optimization to find the optimum power allocation policy that minimizes $\phi$ subject to the aforementioned constraints yields equation \eqref{Pow_Allocation}, where $P_{\pi_j}(k)$, $\forall i\in\script{N}$, is some fixed power parameter that minimizes $\Psi$ subject to the maximum power constraint only. Substituting by equation \eqref{Pow_Allocation} in $\phi$ and using the bound $\EEU{W_{\pi_l}^{(j)}}=W_{\pi_l}(P_{\pi_l}(k))\leq W_{\pi_l}^{\rm up}(P_{\pi_l}(k))$ we get $\Psi$ that is defined before equation \eqref{Optimization_Obj}. Consequently, $\bfPst(k)$ and $\bfpist(k)$, the optimum values for $\bfP{}(k)$ and $\bfpi(k)$ respectively, are ones that minimize $\Psi$ as given by Algorithm \ref{DOACopt}.

Since the optimum policy that solves \eqref{Prob} satisfies the interference constraint, i.e. satisfies $\EEU{\sum_{t\in\script{F}(k)}P_{\pi_l}^{(t)} g_{\pi_l}^{(t)}} \leq\EEU{\FDurK}\Iavg$, we can evaluate $\chi(k)$ at this optimum policy with a genie-aided knowledge of $r_i(k)=\bW_i^*$ to get $\chi^{\rm opt}\triangleq V\sum_{i=1}^N\bW_i^*$. Replacing $\chi(k)$ with $\chi^{\rm opt}$ in the r.h.s. of \eqref{Drift_Plus_Penalty_Avg} we get the bound $\Delta \lb \bfU(k)\rb + V\sum_{i=1}^N \EEU{r_i(k)\FDurK}\leq C+\EEU{\FDurK}V\sum_{i=1}^N\bW_i^*$. Taking $\EE{\cdot}$ over this inequality, summing over $k=0,\cdots,K-1$, denoting $X(0)\triangleq \bfY_i(0)\triangleq 0$ for all $i\in\script{N}$, and dividing by $V\sum_{k=0}^{K-1} \EE{\FDurK}$ we get
\begin{equation}
\frac{\EE{X^2(K)}}{\sum_{k=0}^{K-1}\EE{\FDurK}}+\sum_{i=1}^N \frac{\EE{Y_i^2(K)}}{\sum_{k=0}^{K-1} \EE{\FDurK}}+ \sum_{i=1}^N \frac{\sum_{k=0}^{K-1}\EE{r_i(k)\FDurK}}{\sum_{k=0}^{K-1}\EE{\FDurK}} \leq \frac{CK}{V\sum_{k=0}^{K-1}\EE{\FDurK}} + \sum_{i=1}^N \bW_i^*.
\label{Optimal_Eq_Avg}
\end{equation}
Similar steps to those in Appendix \ref{Optimality_Proof_Inst} can be followed to prove the mean rate stability of $\{X(k)\}_{k=0}^\infty$ and $\{Y_i(k)\}_{k=0}^\infty$ as well as the bound in Theorem \ref{Optimality_Avg}, and thus are omitted here.
\end{proof}



\startsinglespace




\bibliographystyle{ieeebib}
\bibliography{MyLib}

\end{document}